\documentclass[11pt,a4paper,oneside]{article}
\usepackage[top=3cm, bottom=3cm, left=2cm, right=2cm]{geometry}
\linespread{1.4}
\usepackage[english]{babel}
\usepackage[utf8]{inputenc}
\usepackage{enumerate}
\usepackage{enumitem}
\usepackage[T1]{fontenc}
\usepackage{setspace}
\usepackage{xcolor}
\usepackage{booktabs}
\usepackage{multirow}
\usepackage{xr}
\usepackage{tikz}
\usetikzlibrary{positioning}
\usetikzlibrary{fit}

\usepackage{amsthm,amsmath,amssymb,mathrsfs,dsfont,mathtools, mathabx}
\usepackage{bbm}
\usepackage{bm}
\usepackage{graphicx}
\usepackage[colorlinks=true, allcolors=blue]{hyperref}
\usepackage{comment}
\usepackage[all]{nowidow}

\usepackage[authoryear, round]{natbib}
\setcitestyle{authoryear}

\hypersetup{
    colorlinks=true,
    citecolor=blue,
    linkcolor=blue,
    filecolor=blue,      
    urlcolor=blue
    }

\def\bs{\boldsymbol}
\def\lb{\left\{ }
\def\rb{\right\}}
\def\R{\mathbb{R}}
\def\E{\mathbb{E}}

\def\tx{\textrm}
\def\X{\bs X}

\def\P{\mathbb{P}}
\def\P{\mathbb{P}}
\def\G{\mathcal{G}}

\def\VI{\tx{VI}}

\def\N{\mathcal{N}}

\def\cFOLD{\bs c_{\tx{FOLD}}}
\def\Lc{\mathcal{L}(\widehat{\bs c}, \bs \theta)}

\def\D{\mathcal D}
\def\chat{\widehat{\bs c}}
\def\genq{\underset{\sim}{>}}
\def\lenq{\underset{\sim}{<}}
\def\Ilam{\mathcal{I}^{(\lambda)}}
 
\newcommand{\norm}[1]{\left\lVert#1\right\rVert}

\theoremstyle{definition}
\newtheorem{definition}{Definition}[section]
\newtheorem{theorem}{Theorem}
\newtheorem{assumption}{Assumption}
\newtheorem{remark}{Remark}
\newtheorem{proposition}{Proposition}

\newtheorem{lemma}{Lemma}

\usepackage{authblk}
\date{ \vspace{-5mm} }
\title{Bayesian Clustering via Fusing of Localized Densities}
\author[1]{Alexander Dombowsky}
\author[1,2]{David B. Dunson}
\affil[1]{Department of Statistical Science, Duke University, Durham, NC, USA}
\affil[2]{Department of Mathematics, Duke University, Durham, NC, USA}

\usepackage[sf]{titlesec}
\usepackage{sectsty}
\allsectionsfont{\centering \normalfont\scshape} 
\begin{document}
\maketitle
\begin{abstract}
Bayesian clustering typically relies on mixture models, with each component interpreted as a different cluster. After defining a prior for the component parameters and weights, Markov chain Monte Carlo (MCMC) algorithms are commonly used to produce samples from the posterior distribution of the component labels. The data are then clustered by minimizing the expectation of a clustering loss function that favors similarity to the component labels. Unfortunately, although these approaches are routinely implemented, clustering results are highly sensitive to kernel misspecification. For example, if Gaussian kernels are used but the true density of data within a cluster is even slightly non-Gaussian, then clusters will be broken into multiple Gaussian components. To address this problem, we develop Fusing of Localized Densities (FOLD), a novel clustering method that melds components together using the posterior of the kernels. FOLD has a fully Bayesian decision theoretic justification, naturally leads to uncertainty quantification, can be easily implemented as an add-on to MCMC algorithms for mixtures, and favors a small number of distinct clusters. We provide theoretical support for FOLD including clustering optimality under kernel misspecification. In simulated experiments and real data, FOLD outperforms competitors by minimizing the number of clusters while inferring meaningful group structure.
\end{abstract}

\noindent%
{\it Keywords: Mixture model; Loss function; Decision theory; Kernel misspecification; Statistical distance}  

\newpage

\section{Introduction}

Clustering data into groups of relatively similar observations is a canonical task in exploratory data analysis.
 Algorithmic clustering methods, such as k-means, k-medoids, and hierarchical clustering, rely on dissimilarity metrics, an approach which is often heuristic but may perform well in practice; see \cite{hastie2009elements}, \cite{jain2010data}, and \cite{kiselev2019challenges} for an overview. In comparison, model-based clustering methods utilize mixtures of probability kernels to cluster data, ordinarily by inferring the component labels \citep{fraley2002model}. Choices of kernel depend on the type of data being considered, with the Gaussian mixture model (GMM) particularly popular for continuous data. A conceptual advantage of model-based methods is the ability to express uncertainty in clustering. For example, the expectation-maximization (EM) algorithm uses maximum likelihood estimates of the component-specific parameters and weights to calculate each observations' posterior component allocation probabilities, which are interpreted as a measure of clustering uncertainty
 \citep{bensmail1997inference}. As estimation of the weights and component-specific parameters is crucial for model-based clustering, there is a rich literature on quantifying rates of convergence for various mixture models, which is usually expressed in terms of the mixing measure (see \citealp{guha2021posterior} and references therein).

Bayesian mixture models have received increased attention as a clustering method in recent years. The Bayesian framework can account for uncertainty in the number of components and incorporate prior information on the component-specific parameters, with Markov chain Monte Carlo (MCMC) algorithms employed to generate posterior samples for the mixture weights, kernel parameters, and component labels for each data point. Based on posterior samples of the component labels, one can obtain Monte Carlo approximations to Bayes clustering estimators. For $p$-dimensional data $\bs X = (X_1, \dots, X_n)$, a Bayes estimator $\bs c^*$ is the minimizer of an expected loss function conditional on $\bs X$ over all possible clusterings $\bs c = (c_1, \dots, c_n)$.
There are several popular choices of loss, including Binder's \citep{binder1978bayesian} and the Variation of Information (VI) \citep{meilua2007comparing}, which are distance metrics over the space of partitions and favor clusterings that are similar to the component labels. Further details on set partitions and their role in Bayesian clustering can be found in \cite{meilua2007comparing}, \cite{wade2018bayesian}, and \cite{paganin2021centered}. 
Several measures of uncertainty in clustering with Bayesian mixtures exist, including the posterior similarity matrix and credible balls of partitions \citep{wade2018bayesian}. 

\begin{figure}[t]
    \centering
    \includegraphics[scale=0.35]{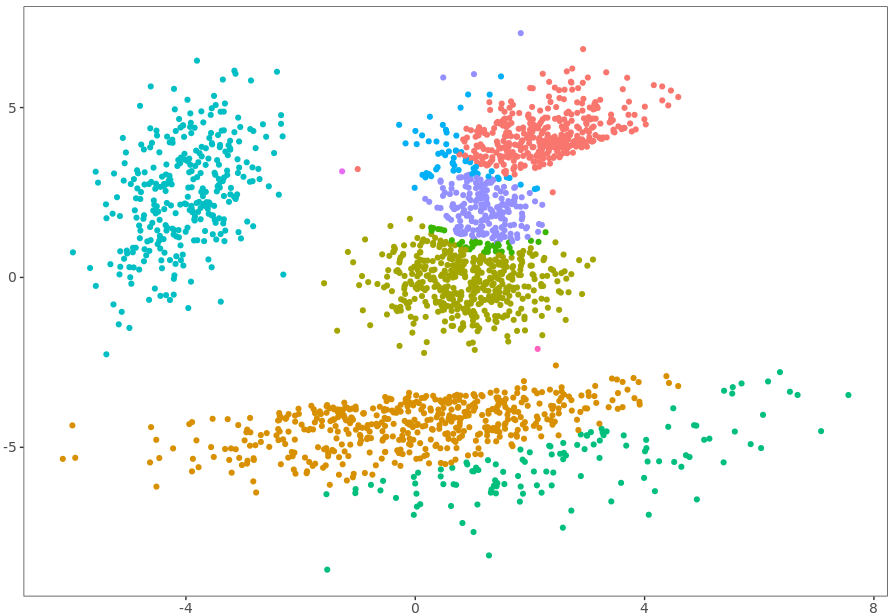}
    \caption{A Bayesian GMM with $10$ components and Dirichlet prior concentration parameter equal to $1/10$ is fit to data generated from a mixture of skew Gaussian distributions.}
    \label{fig:intro}
\end{figure}

However, model-based clustering approaches, including Bayesian implementations, can be brittle in applications due to unavoidable impacts of model misspecification. While the data are assumed to come from a model $X_i \sim f= \sum_{k=1}^K a_k g(\cdot; \tilde \theta_k)$, in reality, data are generated via a true process $f^0$, which may or may not be in the support of the prior for $f$. A specific example is \textit{kernel misspecification}, which occurs when $f^0$ is in fact a mixture model whose components are not contained in the kernel family mixed over by the fitted model. Mixtures will often induce over-clustering in this setting by approximating a single component in $f^0$ with many components from $f$. An example of this phenomena is shown in Figure \ref{fig:intro}, where a $10$ component Bayesian GMM is fit to data generated from a mixture of bivariate skew Gaussian distributions. Despite using a concentration parameter of $1/10$ in the symmetric Dirichlet prior to induce a small number of clusters \citep{rousseau2011asymptotic}, the GMM allocates the data into $10$ poorly defined groups. A group of observations in the bottom third of the sample space is split into two, while several observations are placed into their own groups despite being near dense collections of data. 

Several approaches have been proposed to address the issue of over-clustering due to kernel misspecification. A natural solution is to define a flexible class of kernels, exemplified by the mixtures in \cite{karlis2009model}, \cite{o2016clustering}, and \cite{dang2023model}. To increase flexibility further, \cite{rodriguez2014univariate} propose a mixture of nonparametric unimodal kernels. Similarly, \cite{bartolucci2005clustering}, \cite{li2005clustering}, \cite{di2007mixture}, and \cite{malsiner2017identifying} use carefully chosen mixtures of Gaussians to characterize the data within each cluster. However, there is an unfortunate pitfall with the general strategy of using flexible families of kernels. In particular, as the flexibility of the kernel increases, identifiability and optimal estimation rates for inferring the mixing measure tend to weaken, especially when the true number of mixture components is unknown \citep{nguyen2013convergence,ho2016convergence, heinrich2018strong}. Even the transition from location Gaussian kernels with known covariance to location-scale Gaussian kernels can have substantial consequences on the convergence rate of the component means \citep{manole2022refined}. Such problems motivated \cite{ho2020robust} to propose an alternative estimator of the mixing measure that is more robust than maximum likelihood and Bayesian approaches and can achieve optimal convergence rates. 

Alternatively, one can develop generalized Bayesian methods of clustering that avoid defining a fully generative probabilistic model for the data. For example, \cite{duan2021bayesian} propose to conduct model-based clustering based on a pairwise distance matrix instead of the data directly to reduce sensitivity to kernel misspecification. \cite{rigon2023generalized} instead define a Gibbs posterior for clustering, incorporating a clustering loss function in place of a likelihood function, completely bypassing modeling of the data. An alternative that maintains a fully generative model while robustifying inferences to misspecification is to use a coarsened posterior \citep{miller2018robust, gorsky2023coarsened}. This coarsening approach often has good practical performance in reducing over-clustering due to kernel misspecification.

Unfortunately, these approaches for accommodating kernel misspecification have various drawbacks. Along with slower convergence rates, flexible kernels typically require a large number of parameters, worsening the already burdensome computational cost of Bayesian clustering. The generalized Bayes approaches can perform well in certain settings. However, both Gibbs posteriors and coarsened posteriors are highly sensitive to key tuning parameters, which can be difficult to choose objectively in practice.

A key theoretical advance is given in \cite{aragam2020identifiability},
which attempts to solve the problem of clustering based on a mixture model by merging components in a Gaussian mixture. They rely on a two-stage procedure that lacks partition uncertainty quantification and assumes that the true number of kernels is known.
 However, the approach of viewing clusters as arising from merging closely overlapping kernels is promising. Related merging ideas have been implemented in both frequentist and Bayesian approaches in a variety of settings, and several algorithms exist for deciding how and when to combine components together \citep{chan2008statistical, baudry2010combining,
 hennig2010methods, melnykov2016merging,guha2021posterior, manole2021estimating}.

In this paper, we propose a novel decision theoretic method for Bayesian clustering that mitigates the effects of model misspecification. Suppose we model the data with a Bayesian mixture model, with $K$ components, component labels $\bs s = (s_1, \dots, s_n)$, component-specific atoms $\tilde \theta_1, \dots, \tilde \theta_K$, and kernels $g(\cdot; \tilde \theta_1), \dots, g(\cdot; \tilde \theta_K)$. Rather than focusing on $\bs s$, we compute clusters with the localized densities $\lb g(\cdot; \theta_i) \rb_{i=1}^n$, defined by $\theta_i = \tilde{\theta}_{s_i}$. We define a loss function for any clustering $\widehat{\bs c}$ that favors allocating $i$ and $j$ to the same cluster when the statistical distance between $g(\cdot;\theta_{i})$ and $g(\cdot;\theta_{j})$ is small, encouraging grouping observations with overlapping component kernels. We cluster the data with a Bayes estimator, interpreted as a Fusing of Localized Densities (FOLD). Our method has a fully decision theoretic justification, leads to interpretable uncertainty quantification, and can be readily implemented using the output of existing MCMC algorithms for mixtures. 

Though previous methods have utilized merging kernels to account for kernel misspecification, to our knowledge none have a formal Bayesian decision theoretic justification. Although FOLD requires combinatorial optimization when obtaining the point estimate, we suggest two reasonable approximations: one in which we reduce the optimization space to a smaller set of candidate clusterings, and another in which we use existing greedy algorithms to compute a locally-optimal solution. We also provide concrete theoretical guarantees for our procedure, including the result that as the sample size grows, the clustering point estimate converges to an oracle clustering rule. In addition, we list intuitive examples of oracle rules and empirically validate our asymptotic theory using a simple example. 

In Section \ref{section:methods}, we explain our clustering method from a Bayesian decision theoretic perspective, provide a framework for uncertainty quantification, and demonstrate how to implement FOLD in practice. In Section \ref{section:theory}, we show asymptotic concentration of FOLD for misspecified and well-specified regimes. We apply FOLD to a cell line data set in Section \ref{section:apps}, showing excellent performance. Finally, we provide concluding remarks and some extensions in Section \ref{section:discussion}. The Supplementary Material contains additional simulations and real-data examples, as well as proofs for all theoretical results.

\section{Clustering with Localized Densities} \label{section:methods}

\subsection{Notation and Setup}
Let $X_i^0 = (X_{i1}^0, \dots, X_{ip}^0) \in \R^p$ be multivariate observations, collected into $\bs X^0 = (X_1^0, \dots, X_n^0)$. Assume that $X_1^0, \dots, X_n^0$ are random and generated from an unknown mixture model: $X_i^0 \sim f^0$, where $f^0 = \sum_{m=1}^{M^0} a_m^0 g_m^0(\cdot)$, $a_m^0 > 0$ for $m = 1, \dots, M^0$, $\sum_{m=1}^{M^0} a_m^0 = 1$, and $M^0 < \infty$. Since $f^0$ is a mixture, the data-generating process can be equivalently stated with the addition of latent variables $s_{i}^0 \in \lb 1, \dots, M^0 \rb$, so that $\left(X_i^0 \mid s_i^0=m\right) \sim g_{m}^0(\cdot)$ for all $i = 1, \dots, n$. $\mathbb{P}^0$ refers to probability and convergence statements with respect to the true data generating process, with $\mathbb{P}^0(A) = \int_A f^0(x) dx$ for any set $A$.

Throughout this article, we will focus on the setting in which $\X^0$ is modeled with a mixture. Let $X_i = (X_{i1}, \dots, X_{ip}) \in \R^p$, where $\X = (X_1, \dots, X_n)$. A Bayesian mixture model assumes that $\X$ are generated via 
\begin{align} \label{eq:bayesian-mixture-model}
    \lambda & \sim \pi_\Lambda, & \theta_i \mid \lambda & \sim \lambda, & X_i \mid \theta_i \sim g(X_i; \theta_i);
\end{align}
 for all $i=1, \dots, n$. Here, $\lambda$ is a probability distribution over a parameter space $\Theta$ and known as the \textit{mixing measure}, whereas $\G = \lb g(\cdot ; \theta) : \theta \in \Theta \rb$ denotes a family of probability densities with support on $\R^p$. We will impose more conditions on both $\Theta$ and $\mathcal G$ when we develop our theoretical results in Section \ref{section:theory}. $\pi_\Lambda$ is the prior distribution for the mixing measure with support $\Lambda$, and examples of such priors are implemented in Bayesian finite mixture models, mixtures of finite mixtures (MFMs) \citep{miller2018mixture}, Dirichlet process mixtures (DPMs) \citep{ferguson1973bayesian, antoniak1974mixtures}, and Pitman-Yor process mixtures \citep{ishwaran2001gibbs, ishwaran2003some}. Generally, realizations from $\pi_\Lambda$ are almost-surely discrete (e.g., as in \cite{sethuraman1994constructive}) and can be represented as $\lambda = \sum_{k=1}^K a_k \delta_{\tilde \theta_k}$, where $1 \leq K \leq \infty$, $\tilde \theta_k \in \Theta$, $0 < a_k < 1$, and $\sum_{k=1}^K a_k = 1$. Hence, sampling $\theta_i$ from \eqref{eq:bayesian-mixture-model} results in ties amongst the \textit{atoms} $\bs \theta = (\theta_1, \dots, \theta_n)$, which are represented by unique values $\Tilde{\bs \theta} = (\tilde \theta_1, \dots, \tilde \theta_{K_n})$ for some $1 \leq K_n \leq n$. From these ties we can construct a latent partition $S = (S_1, \dots, S_{K_n})$ of the integers $[n] = \lb 1, \dots, n \rb$, where $S_k = \lb i \in [n] : \theta_i = \tilde \theta_k \rb$. The latent partition may also be represented with labels $\bs s=(s_1, \dots, s_n)$, where $s_i = k \iff i \in S_k$. We will assume that the prior for $\Tilde{\bs \theta}$ is non-atomic, so the allocation of two observations to the same component is equivalent to equality in their atoms, i.e. $s_i = s_j \iff \theta_i = \theta_j$.

Ultimately, our goal is to cluster the data $\X^0$ using the mixture model outlined in \eqref{eq:bayesian-mixture-model}. The typical approach is to infer the conditional distribution of $S$ after observing the data. More formally, by assuming that $\lambda \sim \pi_\Lambda$, we have induced a prior distribution for $S$, $\Pi(S)$. We then cluster the data by inferring $\Pi(S \mid \X = \X^0)$, which we will refer to for the remainder of this article as $\Pi(S \mid \X^0)$. To construct a point estimate of $S$, we minimize the posterior expectation of a clustering loss function that favors similarity to $S$. The decision theoretic approach to Bayesian clustering was first proposed in \cite{binder1978bayesian}, and was advocated for more recently in \cite{wade2018bayesian} and \cite{dahl2022search}. However, Bayesian methods can lead to homogenous, tiny, and often uninterpretable clusters in practice. These issues often arise in the presence of kernel misspecification. Even well-separated true clusters in $\X^0$ will be broken into multiple components from $\G$ under a minor degree of kernel misspecification. To address this issue, we propose a loss function focused on simplifying the overfitted kernels via merging rather than estimating the component labels. 

\subsection{Decision Theory Formulation}

We aim to estimate a clustering $\widehat C = \lb \hat{C}_1, \dots, \hat{C}_{\hat{K}_n} \rb$ of $\X^0$, or labels $\widehat{\bs c} = (\hat{c}_1, \dots, \hat{c}_n)$, based on merging components from the joint posterior that have similar kernels $g(\cdot; \tilde \theta_k)$. The motivation here is to remedy the cluster splitting that occurs in using multiple parametric kernels $g(\cdot; \tilde \theta_k)$ to represent each ``true" kernel $g_m^0(\cdot)$. We define the \textit{localized density} of observation $i$ to be $g(\cdot; \theta_i)$, a random probability density that takes values $g(x;\theta_i)$ for all $x \in \mathbb{R}^p$, where $\theta_i$ is the atom of observation $i$ as defined in \eqref{eq:bayesian-mixture-model}. Observe that the localized density is the distribution for $X_i$ under the fitted mixture model given $\theta_i$, i.e. $(X_i \mid \theta_i = \tilde \theta_k) \sim g(\cdot; \tilde \theta_k)$. Prior variation in the localized densities is modeled by sampling $\lambda \sim \pi_\lambda$, and then 
sampling $\theta_i \sim \lambda$, for $i=1,\ldots,n$.

\begin{figure}[ht]
\centering
\includegraphics[scale=0.22]{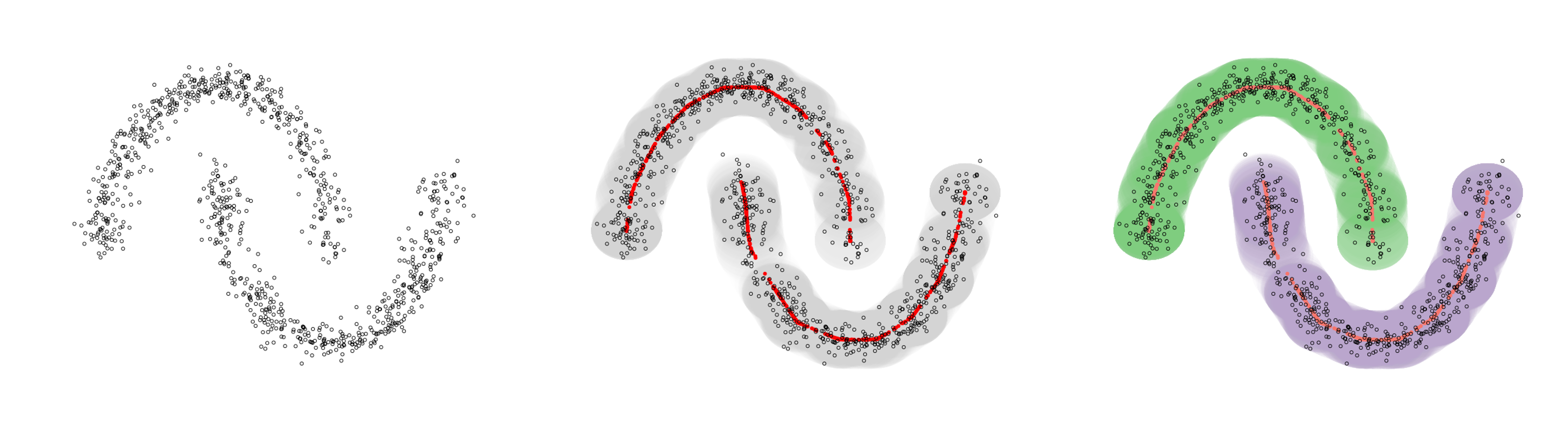}
\caption{A Bayesian location Gaussian mixture is fit to a version of the \texttt{moons} data set. The localized densities are inferred, then merged together to recover the crescents.}
\label{fig:moons}
\end{figure}

To counteract cluster splitting, we define the loss of assigning two observations into the same or different clusters as a function of the statistical distance between their localized densities. Figure \ref{fig:moons} shows the behavior of the posterior distributions of the localized densities when the model $f(x)=\sum_{k=1}^{30} a_k \N_2(x;\tilde \theta_k, 0.02 I)$ is fit to a version of the \texttt{moons} data. The red points plot $\E_\Pi(\theta_{i} \mid \X^0)$ for each $i=1, \dots, n$, with gray circles representing the $95\%$ high density regions of a $\N_2(\E_\Pi(\theta_{i} \mid \X^0), 0.02 I )$ distribution. The crescent clusters are split into multiple overlapping Gaussian kernels, but are recovered by fusing these kernels together.

We define the loss of any clustering $\widehat{\bs c}$ resulting from the model in \eqref{eq:bayesian-mixture-model} to be
\begin{gather} \label{eq:lg}
   \Lc = \sum_{i<j} \lb \textbf{1}_{\hat{c}_i = \hat{c}_j } \D_{ij} + \omega \textbf{1}_{\hat{c}_i \neq \hat{c}_j} \left(  1 - \D_{ij} \right) \rb, 
\end{gather}
where $\omega > 0$, $\D_{ij} = d \lb g(\cdot; \theta_{i}), g(\cdot; \theta_{j}) \rb$, and $d$ is any distance between probability distributions assumed to be bounded in the unit interval, i.e. $0 \leq d \lb P(\cdot), Q(\cdot) \rb \leq 1$ for all measures $P$ and $Q$, and with the property that $d \lb P(\cdot), Q(\cdot) \rb = 1$ if and only if $\tx{supp}(P) \cap \tx{supp}(Q) = \emptyset$. A key example is when $d$ is chosen to be the Hellinger distance, but this need not be the case in general. Observe that $\Lc$ is non-negative for any $\widehat{ \bs c}$. The loss of assigning $\hat{c}_i = \hat{c}_j$ is $\D_{ij}$. If $s_i = s_j$, then $\theta_i = \theta_j$ and hence there is no loss incurred. When $s_i \neq s_j$, then $\theta_i \neq \theta_j$ but the loss will remain small when the kernels $g(\cdot; \theta_{i}), g( \cdot; \theta_{j})$ are similar under $d$. Conversely, allocating $i$ and $j$ to different clusters results in a loss of $\omega(1 - \D_{ij})$. If $\D_{ij}=1$, i.e. if the supports of $g(\cdot; \theta_{i})$ and $g(\cdot; \theta_j)$ are disjoint, then setting $\hat{c}_i \neq \hat{c_j}$ accumulates zero loss. Otherwise, the loss depends on the separation in $d$ between the localized densities and the loss parameter $\omega$.

Our loss $\Lc$ is invariant to permutations of the data indices and of the labels in either $\widehat{\bs c}$ or $\bs s$, which is desirable for clustering losses \citep{binder1978bayesian}. In addition, $\Lc$ is a continuous relaxation of Binder's loss function \citep{binder1978bayesian,lau2007bayesian},
\begin{equation} \label{eq:lb}
    \mathcal{L}_{\tx{B}}(\widehat{\bs c}, \bs s) = \sum_{i<j} \lb \textbf{1}_{\hat{c}_i = \hat{c}_j} \textbf{1}_{s_i \neq s_j} + \omega \textbf{1}_{\hat{c}_i \neq \hat{c}_j} \textbf{1}_{s_i = s_j} \rb.
\end{equation}
A key property of Binder's loss is that it is a quasimetric over the space of partitions \citep{wade2018bayesian, dahl2022search}. $\Lc$ is not a quasimetric, but instead aims to mitigate the problem of cluster splitting 
arising under kernel misspecification. Note the difference in notation in the arguments of \eqref{eq:lb} and \eqref{eq:lg}: we omit $\bs s$ from the notion for $\Lc$ to reflect that we will incorporate information from the atoms in clustering $\X^0$, not just the latent partition. However, we can also state \eqref{eq:lb} in terms of $\bs \theta$ by recalling that $s_i = s_j \iff \theta_i = \theta_j$.
One can show that $\Lc$ can be rewritten as the sum of $\mathcal{L}_{\tx{B}}(\widehat{\bs c}, \bs s)$ and a remainder term $\mathcal{B}(\widehat{\bs c}, \Tilde{\bs \theta})$ that depends on the unique values $\Tilde{\bs \theta}$, and that $\mathcal{L}(\bs s, \bs \theta) = 0$ only when the components of $f$ are completely separated under $d$. The closed form of the remainder $\mathcal{B}(\widehat{\bs c}, \Tilde{\bs \theta})$ is given in the Supplement, and is interpreted as an added cost to Binder's loss that encourages merging components.
\begin{proposition} \label{prop:Lg}
    $\Lc = \mathcal{L}_{\tx{B}}(\widehat{\bs c}, \bs s) 
   + \mathcal{B}(\widehat{\bs c}, \Tilde{\bs \theta}),$
    where $\mathcal{B}(\bs s, \Tilde{\bs \theta}) = 0$ if and only if \\ $d \{ g(\cdot; \tilde \theta_k), g(\cdot; \tilde \theta_{k^\prime}) \} = 1$ for all component pairs $k \neq k^\prime = 1, \dots, K$. 
\end{proposition}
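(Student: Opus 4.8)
The plan is to prove the identity by a term-by-term comparison of the two losses over the index pairs $\{(i,j): i<j\}$, split according to whether $s_i = s_j$. The key observation is that, by the non-atomicity assumption, $s_i = s_j \iff \theta_i = \theta_j$, and whenever $\theta_i = \theta_j$ the distance $\D_{ij} = d(g(\cdot;\theta_i), g(\cdot;\theta_j))$ vanishes. First I would partition the sum in $\Lc$ into the contribution from pairs with $s_i = s_j$ and the contribution from pairs with $s_i \neq s_j$, and do the same for $\mathcal{L}_{\tx B}(\chat, \bs s)$.

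For a pair with $s_i = s_j$, substituting $\D_{ij} = 0$ collapses the summand of $\Lc$ to $\omega \textbf{1}_{\hat{c}_i \neq \hat{c}_j}$, which is exactly the summand of $\mathcal{L}_{\tx B}(\chat, \bs s)$ for that pair, since there $\textbf{1}_{s_i \neq s_j} = 0$ and $\textbf{1}_{s_i = s_j} = 1$. Hence such pairs contribute identically to both losses. For a pair with $s_i \neq s_j$, the summand of $\mathcal{L}_{\tx B}$ is $\textbf{1}_{\hat{c}_i = \hat{c}_j}$, and subtracting it from the corresponding summand of $\Lc$ yields, after factoring, $(1 - \D_{ij})(\omega \textbf{1}_{\hat{c}_i \neq \hat{c}_j} - \textbf{1}_{\hat{c}_i = \hat{c}_j})$. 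Summing these discrepancies over all pairs with $s_i \neq s_j$ defines the remainder $\mathcal{B}(\chat, \Tilde{\bs\theta})$, whereupon the decomposition $\Lc = \mathcal{L}_{\tx B}(\chat, \bs s) + \mathcal{B}(\chat, \Tilde{\bs\theta})$ is immediate; its closed form is the one recorded in the Supplement.

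To establish when the remainder vanishes, I would evaluate $\mathcal{B}$ at $\chat = \bs s$. For every pair with $s_i \neq s_j$ we then have $\hat{c}_i \neq \hat{c}_j$, so the bracketed factor reduces to $\omega$ and $\mathcal{B}(\bs s, \Tilde{\bs\theta}) = \omega \sum_{i<j:\, s_i \neq s_j}(1 - \D_{ij})$. Because $d$ is bounded above by $1$, each summand $1 - \D_{ij}$ is non-negative, and since $\omega > 0$ the whole sum equals zero if and only if every summand is zero, i.e. $\D_{ij} = 1$ for all pairs with $s_i \neq s_j$. Translating from observation indices to the unique atoms — each ordered pair of occupied components $(k,k')$ with $k \neq k'$ is realized by at least one pair $(i,j)$ with $s_i = k$, $s_j = k'$ — this is equivalent to $d\{g(\cdot;\tilde\theta_k), g(\cdot;\tilde\theta_{k'})\} = 1$ for all $k \neq k'$, which is the stated condition.

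The computation itself is routine; the only point requiring care is the bookkeeping in this final step. I would need to make explicit that $\bs s$ and $\Tilde{\bs\theta}$ denote the realized partition and its $K = K_n$ distinct atoms, so that every component indexed in the claim is in fact occupied and therefore contributes at least one index pair to the sum; otherwise the equivalence between the pairwise statement over $(i,j)$ and the component-wise statement over $(k,k')$ could fail vacuously. With that convention fixed, the non-negativity of the summands renders both directions of the ``if and only if'' transparent.
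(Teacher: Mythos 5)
Your proof is correct and takes essentially the same approach as the paper's: both decompose $\Lc$ into Binder's loss plus a remainder (exploiting $\D_{ij}=0$ whenever $s_i=s_j$), and both settle the ``if and only if'' by observing that $\mathcal{B}(\bs s, \Tilde{\bs \theta})$ is $\omega$ times a sum of non-negative terms $1-\D_{ij}$ over cross-component pairs, which vanishes precisely when every such distance equals $1$. The only difference is bookkeeping—the paper aggregates the remainder into contingency-table counts $n_{hk}=|\hat{C}_h\cap S_k|$ while you keep it as a pairwise sum; the two forms coincide after grouping pairs by component membership, and your explicit caveat that the indexed components must all be occupied (so $K=K_n$) is an assumption the paper's proof relies on implicitly.
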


The parameter $\omega$ calibrates separation of the clusters. For example, suppose we compare the clustering $\widehat{\bs c}_1$, which includes clusters $\hat{C}_{h}$ and $\hat{C}_{h^\prime}$, with the clustering $\widehat{\bs c}_2$, which is equivalent to $\widehat{\bs c}_1$ but now contains the merged cluster $\hat{C}_{h} \cup \hat{C}_{h^\prime}$. The difference in their losses is $\mathcal{L}(\widehat{\bs c}_1, \bs \theta) - \mathcal{L}(\widehat{\bs c}_2, \bs \theta) = \omega \sum_{i \in \hat{C}_{h}, j \in \hat{C}_{h^\prime} } \left( 1 - \D_{ij} \right) - \sum_{i \in \hat{C}_{h}, j \in \hat{C}_{h^\prime} } \D_{ij}.$ The loss of $\widehat{\bs c}_2$ is less than that of $\widehat{\bs c}_1$ when
\begin{equation} \label{eq:hcutoff}
    \frac{1}{|\hat{C}_h| |\hat{C}_{h^\prime}|}\sum_{i \in \hat{C}_{h}, j \in \hat{C}_{h^\prime} } \D_{ij} < \gamma := \frac{\omega}{1 + \omega}.
\end{equation}
This implies that large values of $\omega$ promote fusing clusters,
while smaller values lead to more clusters having less between-cluster heterogeneity. When $\widehat{\bs c}_1 = \bs s$, it is clear that a smaller loss can be attained by merging components with average pairwise statistical distance less than $\gamma$. Trivial clusterings are favored when $\omega$ is taken to its lower and upper limits, similar to both Binder's loss and the VI loss \citep{wade2018bayesian, dahl2022search}. As $\omega \to 0$, our loss is minimized by placing each observation in its own cluster, and, as $\omega \to \infty$, all observations are placed in a single cluster. Binder's loss exhibits a similar interpretation, and one can show that, under the same settings outlined above, $\mathcal L_{\tx{B}}(\chat_2, \bs s) < \mathcal L_{\tx{B}}(\chat_1, \bs s)$ when $|\hat{C}_h|^{-1}|\hat{C}_{h^\prime}|^{-1}\sum_{i \in \hat{C}_{h}, j \in \hat{C}_{h^\prime} }  \textbf{1}_{s_i \neq s_j} = |\hat{C}_h|^{-1}|\hat{C}_{h^\prime}|^{-1} m_{hh^\prime} < \gamma$, where $m_{h h^\prime}$ counts the number of pairs across clusters $h$ and $h^\prime$ that are allocated to different mixture components. However, the loss is always minimized when $\chat = \bs s$.

The risk of any clustering $\widehat{\bs c}$ is the posterior expectation of its loss, which integrates over the uncertainty in $\bs \theta$ after observing the data;
\begin{gather}
    \mathcal{R}(\chat) = \E_\Pi \lb \Lc \mid \X^0 \rb 
    =  \sum_{i<j} \lb \textbf{1}_{\hat{c}_i = \hat{c}_j} \Delta_{ij} + \omega \textbf{1}_{\hat{c}_i \neq \hat{c}_j} \left(  1 - \Delta_{ij} \right) \rb, \label{eq:risk}
\end{gather}
where $\Delta_{ij} = \E_\Pi (\D_{ij} \mid \X^0)$ is the posterior expected distance between the localized densities $g(\cdot; \theta_{i})$ and $g(\cdot; \theta_{j})$ for observations $i$ and $j$, respectively. The point estimator of the clustering of $\X^0$ is given by a Bayes estimator, denoted $\cFOLD$, which is the minimizer of (\ref{eq:risk}) over the space of all possible cluster allocations, i.e. $\cFOLD = \underset{\widehat{\bs c}}{\tx{argmin  }} \mathcal{R}(\chat).$ The expected statistical distance terms in the risk (\ref{eq:risk}) exhibit several bounded and metric properties. 
\begin{proposition} \label{prop:Delta}
    Let $\Delta = (\Delta_{ij})_{i,j}$, where $\Delta_{ij} = \E_\Pi(\D_{ij} \mid \X^0)$. Then $\Delta$ satisfies the following with $\mathbb{P}^0$-probability equal to $1$ for all $i,j,l=1, \dots, n$: (a) $0 \leq \Delta_{ij} \leq 1$; (b) $\Delta_{ii} = 0$, $\Delta_{ji} = \Delta_{ij}$, and $\Delta_{il} \leq \Delta_{ij} + \Delta_{jl}$; and (c) $\Delta_{ij} \leq \Pi(s_i \neq s_j \mid \X^0)$.
\end{proposition}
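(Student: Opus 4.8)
The plan is to derive all three parts by first establishing the corresponding statements for the random distance $\D_{ij} = d(g(\cdot;\theta_i), g(\cdot;\theta_j))$ pointwise — that is, for each realization of the atoms $\bs \theta$ drawn from the posterior — and then passing to $\Delta_{ij} = \E_\Pi(\D_{ij} \mid \X^0)$ using monotonicity and linearity of the expectation. Since $d$ is assumed to be a distance bounded in $[0,1]$, it inherits the metric axioms (identity, symmetry, triangle inequality) together with the unit bound, and these are exactly the ingredients each part requires.

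For part (a), boundedness of $d$ gives $0 \leq \D_{ij} \leq 1$ for every realization, so taking the posterior expectation preserves both inequalities and yields $0 \leq \Delta_{ij} \leq 1$. For part (b), I would invoke the metric axioms of $d$ applied to the localized densities: $d(g(\cdot;\theta_i), g(\cdot;\theta_i)) = 0$ gives $\D_{ii} = 0$; symmetry of $d$ gives $\D_{ji} = \D_{ij}$; and the triangle inequality for $d$ gives the pointwise bound $\D_{il} \leq \D_{ij} + \D_{jl}$. Because these hold for every realization of $\bs \theta$, taking posterior expectations — using linearity for the first two and monotonicity for the triangle inequality — transfers each property to $\Delta$.

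Part (c) is the one substantive step. Here I would use the non-atomic prior assumption, under which $s_i = s_j \iff \theta_i = \theta_j$. On the event $\{s_i = s_j\}$ we therefore have $\theta_i = \theta_j$, so the two localized densities coincide and $\D_{ij} = d(g(\cdot;\theta_i), g(\cdot;\theta_i)) = 0$. Combining this with the upper bound $\D_{ij} \leq 1$ from part (a) gives the pointwise inequality $\D_{ij} = \D_{ij}\textbf{1}_{s_i \neq s_j} \leq \textbf{1}_{s_i \neq s_j}$. Taking the posterior expectation then yields $\Delta_{ij} = \E_\Pi(\D_{ij}\textbf{1}_{s_i \neq s_j} \mid \X^0) \leq \E_\Pi(\textbf{1}_{s_i \neq s_j} \mid \X^0) = \Pi(s_i \neq s_j \mid \X^0)$.

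The proof has no genuinely hard step; the work is entirely in bookkeeping the properties of $d$ and justifying the passage through the expectation. The only point requiring care is the $\mathbb{P}^0$-almost-sure qualifier: the pointwise identities for $d$ hold for every realization of the posterior atoms, so the derived properties of $\Delta$ hold for every dataset $\X^0$ for which the posterior is well defined, which is a $\mathbb{P}^0$-probability-one event. I would also flag that part (c) relies essentially on the non-atomic prior hypothesis — without the equivalence $s_i = s_j \iff \theta_i = \theta_j$, the clean vanishing of $\D_{ij}$ on $\{s_i = s_j\}$ would fail, and the bound by $\Pi(s_i \neq s_j \mid \X^0)$ would no longer be immediate.
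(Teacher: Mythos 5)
Your proof is correct. Parts (a) and (b) coincide with the paper's argument: the pointwise bound $0 \leq \D_{ij} \leq 1$ and the metric axioms for $d$ hold for every posterior realization of the atoms, and expectation preserves them (your pointwise symmetry argument for $\Delta_{ij} = \Delta_{ji}$ is in fact slightly cleaner than the paper's, which additionally invokes exchangeability of the atoms where plain symmetry of $d$ already suffices). For part (c) you take a genuinely different route. The paper conditions on the mixing measure $\lambda$, writes $\E_\Pi(\D_{ij} \mid \X^0, \lambda) = \sum_{k < k'} \eta_{kk'}\, q_{ij}^{kk'}$ with $\eta_{kk'} = d\lb g(\cdot;\tilde\theta_k), g(\cdot;\tilde\theta_{k'})\rb$, bounds $\eta_{kk'} \leq 1$ to get $\E_\Pi(\D_{ij} \mid \X^0, \lambda) \leq \Pi(s_i \neq s_j \mid \X^0, \lambda)$, and then integrates over the posterior of $\lambda$. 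You instead establish the single pointwise domination $\D_{ij} = \D_{ij}\textbf{1}_{s_i \neq s_j} \leq \textbf{1}_{s_i \neq s_j}$ and take one expectation; this avoids the intermediate conditioning entirely and is more elementary, while the paper's decomposition has the side benefit of exhibiting the explicit component-pair representation of $\Delta_{ij}$ that is reused later (e.g., in the oracle quantities of Section \ref{section:theory}). One small correction to your closing remark: part (c) does not actually need the non-atomicity of the prior. The direction you use, $s_i = s_j \Rightarrow \theta_i = \theta_j$, holds by construction of the model, since $\theta_i = \tilde\theta_{s_i}$; non-atomicity is only required for the converse implication $\theta_i = \theta_j \Rightarrow s_i = s_j$, which your argument never invokes.
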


Properties (a) and (b) follow naturally from the assumptions we have made regarding the statistical distance $d$. To show (c), recall that, conditional on $\lambda = \sum_{k=1}^K a_k \delta_{\tilde \theta_k}$, variation in $\theta_i$ and $\theta_j$ is explained by sampling from $\lambda$ in \eqref{eq:bayesian-mixture-model}. This implies that $\E_\Pi(\D_{ij} \mid \X^0, \lambda) = \sum_{k \neq k^\prime} d \lb g(\cdot; \tilde \theta_k), g(\cdot; \tilde \theta_{k^\prime}) \rb \Pi(s_i=k, s_j = k^\prime \mid \X^0, \lambda)$. Since $0 \leq d(\cdot, \cdot) \leq 1$, we have that $\E_\Pi(\D_{ij} \mid \X^0, \lambda) \leq \sum_{k \neq k^\prime} \Pi(s_i = k, s_j = k^\prime \mid \X^0, \lambda) = \Pi(s_i \neq s_j \mid \X^0, \lambda)$. Taking the expectation with respect to the posterior of $\lambda$ on both sides of the inequality results in (c). Properties (a) and (b) imply that $\mathcal{R}(\chat)$ is non-negative for any clustering and that the sum in \eqref{eq:risk} may be taken over the indices $i<j$. Property (c) states that FOLD will generally favor a fewer number of clusters than Binder's loss. One can show that Binder's loss prefers assigning $i$ and $j$ to the same cluster when $\Pi(s_i \neq s_j \mid \X^0) < \gamma$. If $\Delta_{ij} < \gamma < \Pi(s_i \neq s_j \mid \X^0)$, FOLD will disagree with Binder's loss, with FOLD preferring to merge $i$ and $j$ into the same cluster, implying that $\cFOLD$ is often more coarse then $\bs c_{\tx{B}} = \underset{\widehat{\bs c}}{\tx{argmin  }} \E_\Pi \lb \mathcal L_{\tx{B}}(\chat, \bs s) \mid \X^0 \rb$.

Recently, the Variation of Information (VI) \citep{meilua2007comparing} has attracted attention in the literature for use as a clustering loss function \citep{wade2018bayesian, de2023bayesian, denti2023common}, and we compare FOLD to the VI from a methodological point of view in the Supplement. The VI is primarily motivated from an information theoretic perspective, but it, like Binder's loss, is a metric on the partition space \citep{meilua2007comparing,dahl2022search} and encourages similarity to the component labels when used in Bayesian clustering. While the VI has been observed to yield fewer clusters than Binder's loss empirically on synthetic and real data \citep{wade2018bayesian}, we show in a simulation study in the Supplementary and an application in Section \ref{section:apps} that $\bs c_\VI$ is still vulnerable to the over-clustering problem, albeit usually to a lesser degree than $\bs c_{\tx{B}}$, whereas $\cFOLD$ is more robust. 

\subsection{Uncertainty Quantification}
 
In applications of Bayesian clustering to real data, we often find that there is substantial uncertainty in the component labels a posteriori. To avoid overstating the significance of $\cFOLD$, it is thus important to express this uncertainty in a clear and interpretable manner. We focus on the clustering that minimizes (\ref{eq:lg}), i.e. $\bs c_{\bs \theta} = \underset{\widehat{\bs c}} {\tx{argmin }}\Lc.$ Observe that $\bs c_{\bs \theta}$ depends on $\omega$ and may not be equal to $\bs s$. To see this, we recall (\ref{eq:hcutoff}), which shows that we can acquire a clustering with loss smaller than $\bs s$ by merging any two components with statistical distance below $\gamma$. Thus, we formulate our measures of uncertainty with respect to the FOLD posterior $\Pi(\bs c_{\bs \theta} \mid \X^0)$, instead of  $\Pi(\bs s \mid \X^0)$. 

Accordingly, we adapt the notion of a credible ball for Bayesian clustering estimators \citep{wade2018bayesian} to our procedure. The 95\% credible ball around $\cFOLD$ is defined as $B(\cFOLD) = \lb \bs c : \mathbb{D}(\cFOLD, \bs c) \leq \epsilon_{\tx{FOLD}} \rb$, where $\epsilon_{\tx{FOLD}} \geq 0$ is the smallest radius such that
\begin{equation*}
    \Pi \lb \mathbb{D}(\cFOLD, \bs c_{\bs \theta}) \leq \epsilon_{\tx{FOLD}} \mid \X^0 \rb \geq 0.95.
\end{equation*}
We interpret $B(\cFOLD)$ as a neighborhood of clusterings centered at $\cFOLD$ with posterior probability mass of at least $0.95$. Larger $\epsilon_{\tx{FOLD}}$ means that the FOLD posterior distributes its mass across a larger set around $\cFOLD$, implying that we are more uncertain about the cluster allocations of the point estimator. In practice, we characterize the credible ball with bounds that give a sense of the clusterings contained within it \citep{wade2018bayesian}, much like we would for an interval on the real line. The horizontal bounds of the credible ball
consist of the clusterings $\bs c \in B(\cFOLD)$ for which $\mathbb{D}(\cdot, \cFOLD)$ attains its maximum value. We can also impose further restrictions on our bounds, such as requiring that they contain the minimum or maximum number of clusters over $B(\cFOLD)$ while also maximizing $\mathbb{D}(\cdot, \cFOLD)$ amongst groupings with the same number of clusters, giving rise to the notions of vertical upper and vertical lower bounds, respectively. Alternatively, one can display the posterior similarity matrix (PSM) for $\bs c_{\bs \theta}$, $\mathcal{P}_{\tx{FOLD}} = ( \Pi \lb  c_{i \bs \theta} = c_{j \bs \theta} \mid \X^0 \rb )_{i,j}$, as a heatmap, with the entries ordered by $\cFOLD$, as is common practice for both Binder's loss and the VI. 

\subsection{Implementation with MCMC Output} \label{section:implementation}
Our methodology relies on Markov chain Monte Carlo (MCMC) samples from $\Pi(\bs \theta \mid \X^0)$, which are used to estimate pairwise distances between the localized densities, $\Delta_{ij} \approx (1/T) \sum_{t=1}^T \D_{ij}^{(t)},$ where $\D_{ij}^{(t)} =  d \lb g( \cdot; \theta_{i}^{(t)}), g(\cdot; \theta_{j}^{(t)}) \rb$, and $\theta_{i}^{(t)}, \theta_{j}^{(t)}$ can be computed directly from samples of the components $\bs s^{(t)}$ and unique values $\Tilde{\bs \theta}^{(t)}$ via $\theta_i^{(t)} = \tilde{\theta}^{(t)}_k \iff s_i^{(t)}=k$. When $d$ is chosen to be the Hellinger distance, many families of kernels, such as the Gaussian family, admit a closed form of $\D_{ij}$. The label-switching problem, which results from the inherent non-identifiability of mixture models with any permutation of the labels yielding the same likelihood \citep{redner1984mixture}, has no impact on FOLD, as $\Delta_{ij}$ is invariant to labeling. FOLD can be easily applied to virtually any mixture model for which samples from $\Pi(\bs \theta \mid \X^0)$ can be obtained and pairwise statistical distances can be estimated.

\begin{figure}[ht]
\centering
\includegraphics[scale=0.2]{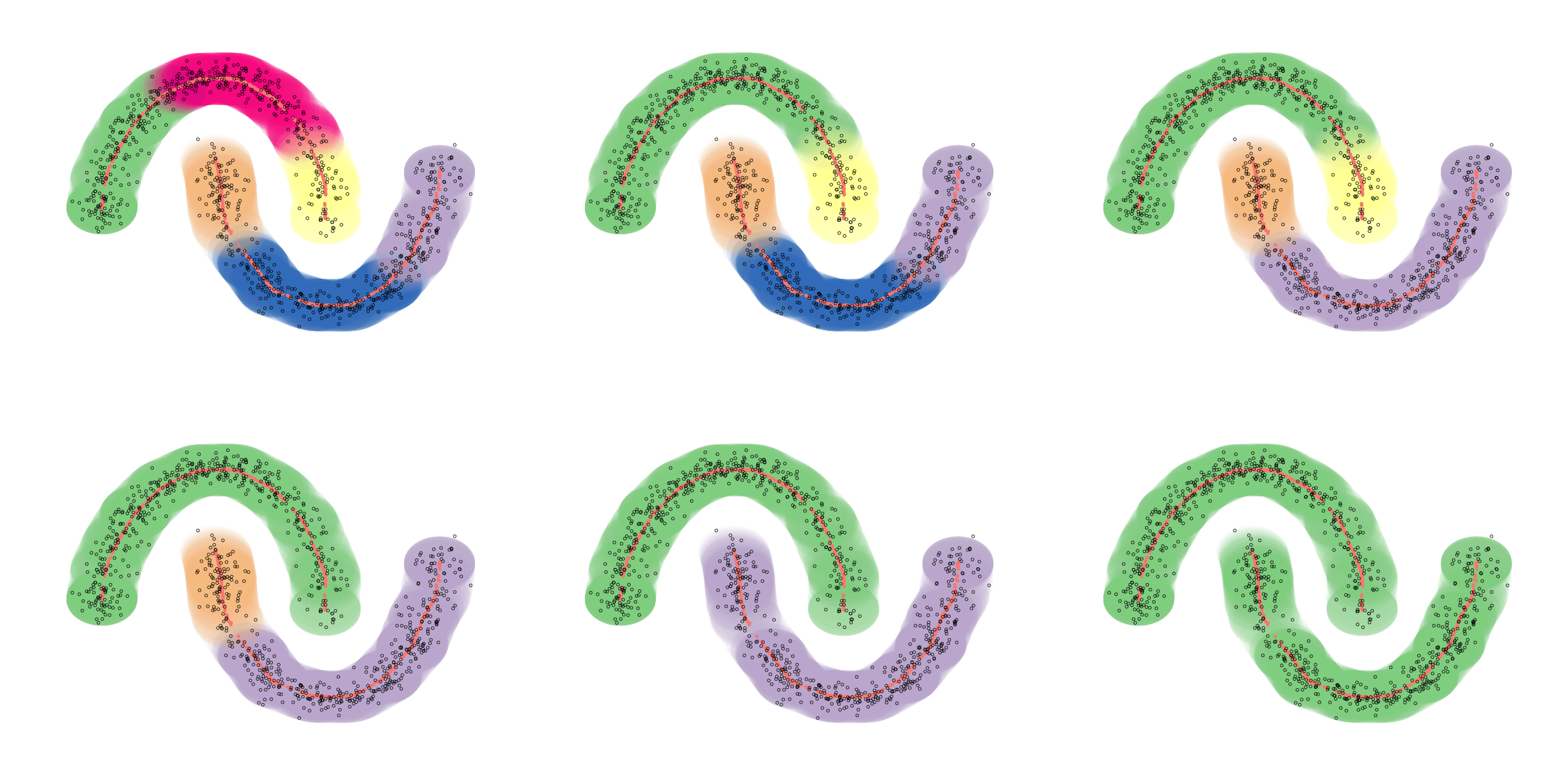}
\caption{Candidate clusterings of the \texttt{moons} data under a location Gaussian mixture model, generated from average linkage hierarchical clustering on $\Delta$.}
\label{fig:moonsclusters}
\end{figure}

In practice, the minimization of \eqref{eq:risk} must be approximated due to the enormous size of the partition space for even modest $n$. We discuss two methods for obtaining an approximate minimizer. The first is to simply reduce the set of partitions over which we minimize by only considering a tree of clusterings produced by hierarchical clustering on $\Delta$. We suggest generating candidates using average linkage due to the appearance of the average linkage dissimilarity metric as a criterion for merging clusters in \eqref{eq:hcutoff}. Figure \ref{fig:moonsclusters} displays candidate clusterings of the \texttt{moons} data resulting from average linkage clustering on $\Delta$. The true grouping of the data is present amongst the candidates. Since these groupings are hierarchical, Figure \ref{fig:moonsclusters} can be interpreted as an enumeration of clusterings favored by $\mathcal{R}(\widehat{\bs c})$, arranged by increasing values of $\omega$. A similar method with single-linkage is defined in \cite{aragam2020identifiability}, and  \cite{medvedovic2002bayesian} and \cite{fritsch2009improved} have previously used hierarchical clustering on $\textbf{1}_n \textbf{1}_n^T - \mathcal P$ to compute clustering point estimators, where $\mathcal P = (\Pi \lb s_i = s_j \mid \X^0 \rb)_{i,j}$.

Recent developments for minimizing risk functions have focused on greedy algorithms that take an existing candidate partition and make minor, but locally-optimal updates, typically by reallocation of objects to clusters. Examples include a method motivated by the Hasse diagram in \cite{wade2018bayesian}, the greedy algorithm in \cite{rastelli2018optimal}, and the SALSO algorithm of \cite{dahl2022search}. SALSO provides a point estimate by first initializing a clustering, then reallocating observations to clusters and, finally, breaking existing clusters and then reallocating observations again over several parallel runs. The final point estimate is the partition that exhibits the smallest risk across all the runs. In our simulations and application, we implement FOLD with both the hierarchical clustering heuristic and SALSO, and find that they generally agree on the optimal clustering. 

To express uncertainty in $\cFOLD$, we rely on samples from $\Pi(\bs c_{\bs \theta} \mid \X^0)$, defined by
\begin{equation} \label{eq:mcmcloss}
    \bs c_{\bs \theta}^{(t)} = \underset{\widehat{\bs c}} {\tx{argmin }} \sum_{i<j} \lb \textbf{1}_{\hat{c}_i = \hat{c}_j } \D^{(t)}_{ij} + \omega \textbf{1}_{\hat{c}_i \neq \hat{c}_j} \left(  1 - \D_{ij}^{(t)} \right) \rb.
\end{equation}
As in the point estimation case, $\bs c_{\bs \theta}^{(t)}$ cannot be computed exactly and so we approximate the minimization in \eqref{eq:mcmcloss} either using a set of candidates generated from hierarchical clusterings of $\D^{(t)} = (\D^{(t)}_{ij})_{i,j}$ or via a greedy algorithm such as SALSO. The posterior probability of a credible ball is approximated by $\Pi \lb \mathbb{D}(\bs c_{\bs \theta}, \cFOLD) \leq \epsilon \mid \X^0 \rb \approx (1/T) \sum_{t=1}^T \textbf{1}_{  \mathbb{D}(\bs c_{\bs \theta}^{(t)}, \cFOLD) \leq \epsilon }$ and FOLD computes $\epsilon_{\tx{FOLD}}$ by incrementally increasing the value of $\epsilon$ over a grid. The entries of $\mathcal{P}_{\tx{FOLD}}$ are estimated via $\Pi( c_{i \bs \theta} =  c_{j \bs \theta} \mid \X^0) \approx (1/T)\sum_{t=1}^T \textbf{1}_{c^{(t)}_{i \bs \theta} = c^{(t)}_{j \bs \theta}}$.

\subsection{Selection of $\omega$}

The loss parameter $\omega$ controls the separation among the inferred clusters in $\cFOLD$. This parameter is also influential in $\bs c_{\tx{VI}}$ and $\bs c_{\tx{B}}$ \citep{dahl2022search}, though there has been little discussion in the literature on how to select an appropriate $\omega$ for these loss functions. The formulation of \eqref{eq:lg} in terms of statistical distances leads to two sensible methods for determining the loss parameter. First, similarly to common practice in choosing key tuning parameters in algorithmic clustering methods such as k-means and DBSCAN, we propose an elbow plot diagnostic to select $\omega$. At a grid of possible $\omega$ values, we calculate $r_{\omega} = (\sum_{h=1}^{K_{\omega}^*} r_{\omega h})/{\sum_{i<j} \Delta_{ij}}$, where $r_{\omega h} = \sum_{i,j \in C_{\omega h}^*} \Delta_{ij}$, $\bs c_w^*$ is the FOLD point estimator for that specific $\omega$, and $C^*_\omega = \lb C_{\omega 1}^*, \dots, C_{\omega K_w^*}^* \rb$ is the partition associated with $\bs c_\omega^*$. The numerator in $r_w$ sums over $r_{\omega h} = \E_\Pi \lb \sum_{i,j \in C_{\omega h}^*} \D_{ij} \mid \X^0 \rb$, or the expected statistical distance between all localized densities associated with the $h$th cluster in $\bs c_\omega^*$, and the denominator normalizes so that $r_\omega \in [0,1]$ for all $\omega$. 

To build some intuition about the utility of the elbow plot, recall that as $\omega \to \infty$, \eqref{eq:risk} is minimized by placing all observations in a single cluster, i.e. $C_\infty^* = \lb [n] \rb$ and $r_\infty = 1$. In many cases, such as the data in Figure \ref{fig:moonsclusters}, decreasing $\omega$ (and subsequently adding more clusters in $C_\omega^*$) will decrease $r_\omega$, e.g. because $\Delta_{ij} \approx 1$ for any $i,j$ pairs in different crescents. Of course, at the other extreme, if we set $\omega$ to be very small, we place all objects in their own cluster, making $C_0^* = \lb \lb 1 \rb, \dots, \lb n \rb \rb$ and $r_0 = 0$. We construct the elbow plot to quantify the amount of improvement each time we increment $\omega$ across this grid. At some point, the improvement in $r_\omega$ becomes negligible as we decrease $\omega$ because we begin to cut dense clusters into smaller and smaller sub-clusters. On an elbow plot, this causes the curve of $r_\omega$ to bend, creating the so-called ``elbow". It is at this threshold that we fix $\omega$. The threshold is easier to identify when the elbow plot is monotonic, and we verify this is the case if one uses hierarchical clustering to generate candidates.

\begin{proposition} \label{prop:elbow}
    If candidate clusterings are selected using average linkage hierarchical clustering on $\Delta$, then there is a positive relationship between $\omega$ and $r_\omega$.
\end{proposition}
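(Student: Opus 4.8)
\section*{Proof proposal for Proposition \ref{prop:elbow}}

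The plan is to reduce the minimization of the posterior risk $\mathcal{R}(\chat)$ over the candidate tree to minimizing an additively separable set function indexed by $\gamma = \omega/(1+\omega)$, and then to exploit the fact that average linkage produces a \emph{chain} of nested partitions with monotone merge heights. First I would rewrite the risk. Writing $W(\chat) = \sum_{i<j}\mathbf{1}_{\hat c_i = \hat c_j}\Delta_{ij}$ for the within-cluster distance sum, $P(\chat)=\sum_{i<j}\mathbf{1}_{\hat c_i=\hat c_j}$ for the number of within-cluster pairs, $T=\sum_{i<j}\Delta_{ij}$, and $N=\binom n2$, a short computation gives
\begin{equation*}
\mathcal{R}(\chat) = (1+\omega)\,W(\chat) - \omega\,P(\chat) + \omega(N-T).
\end{equation*}
Since the last term is constant in $\chat$ and $1+\omega>0$, minimizing $\mathcal{R}$ over the candidates is equivalent to minimizing $\phi_\gamma(\chat) := W(\chat) - \gamma P(\chat) = \sum_h \sum_{i<j\in C_h}(\Delta_{ij}-\gamma)$. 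The payoff is that $\phi_\gamma$ is additively separable across clusters, so merging two blocks changes it only through the affected cross-pairs.

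Next I would use the structure of the candidate set. Average linkage on $\Delta$ generates a nested chain $\mathcal C^{(0)}\succ\cdots\succ\mathcal C^{(n-1)}$ (from $n$ singletons to one block), where $\mathcal C^{(\ell+1)}$ merges the two clusters $A_\ell,B_\ell$ of smallest average dissimilarity $h_\ell = |A_\ell|^{-1}|B_\ell|^{-1}\sum_{i\in A_\ell,j\in B_\ell}\Delta_{ij}$. By separability, the increment along the chain is
\begin{equation*}
\phi_\gamma(\mathcal C^{(\ell+1)}) - \phi_\gamma(\mathcal C^{(\ell)}) = \sum_{i\in A_\ell,j\in B_\ell}(\Delta_{ij}-\gamma) = |A_\ell||B_\ell|\,(h_\ell-\gamma),
\end{equation*}
which recovers the merging cutoff \eqref{eq:hcutoff}. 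The crucial structural fact is that average linkage (UPGMA) has \emph{no inversions}: the Lance--Williams update makes the dissimilarity between $A_\ell\cup B_\ell$ and any remaining cluster $C$ a size-weighted convex combination of the dissimilarities of $A_\ell$ and $B_\ell$ to $C$, each of which is at least the minimum $h_\ell$ chosen at that step, so the merge heights satisfy $h_0\le h_1\le\cdots\le h_{n-2}$. Consequently the increments of $\phi_\gamma$ are negative while $h_\ell<\gamma$ and non-negative once $h_\ell\ge\gamma$, so $\ell\mapsto\phi_\gamma(\mathcal C^{(\ell)})$ is valley-shaped, and its minimizer over the chain is attained after exactly $L(\gamma)=\#\{\ell:h_\ell<\gamma\}$ merges.

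Finally, since the $h_\ell$ are fixed numbers and $\gamma=\omega/(1+\omega)$ is strictly increasing in $\omega$, the threshold count $L(\gamma)$ is non-decreasing in $\omega$: larger $\omega$ selects a (weakly) coarser partition $\bs c_\omega^*$ from the chain. To conclude, observe that the numerator of $r_\omega$ is a fixed positive multiple of $W(\bs c_\omega^*)$ while the denominator is the constant $T$; because coarser members of the chain are reached by merges that each add the non-negative cross-term $\sum_{i\in A_\ell,j\in B_\ell}\Delta_{ij}\ge 0$ to the within-cluster sum, $W(\bs c_\omega^*)$ is non-decreasing as $\omega$ grows, and hence so is $r_\omega$. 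I expect the main obstacle to be establishing the no-inversion (monotone merge-height) property cleanly, together with the bookkeeping for ties: when $h_\ell=\gamma$ the increment vanishes and both adjacent partitions are optimal, so the conclusion is a weak (non-strict) monotone relationship. This is precisely where the average-linkage hypothesis is indispensable, since for linkages that can invert (e.g.\ centroid or median) the valley-shaped structure may fail and the minimizer need not move monotonically in $\omega$.
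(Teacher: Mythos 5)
Your proof is correct, and it shares the paper's basic strategy --- exploit the nestedness of the average-linkage candidate chain and the fact that each merge adds the non-negative cross-sum $\sum_{i \in A,\, j \in B} \Delta_{ij}$ to the numerator of $r_\omega$ --- but it is genuinely more complete at the one step where the paper is loose. The paper's own proof is local: starting from $\bs c^*_\omega$, it increases $\omega$ to $\omega + \epsilon$, notes via \eqref{eq:hcutoff} that for large enough $\epsilon$ the child clustering attains smaller risk, and then asserts that the new optimum is that child or one of its descendants in the hierarchy; the implicit claim that the argmin over the \emph{entire} chain can only move upward as $\omega$ grows is never verified, and ties are ignored (the paper even asserts the strict inequality $r_{\omega+\epsilon} > r_\omega$ while calling the added term only ``non-negative''). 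Your argument closes exactly this gap: the affine reduction of $\mathcal{R}$ to $\phi_\gamma(\chat) = W(\chat) - \gamma P(\chat)$, the no-inversion property of average linkage (via the Lance--Williams convex-combination update, which is stated and proved correctly), and the resulting valley shape of $\phi_\gamma$ along the chain identify the optimum exactly as the partition reached after $L(\gamma) = \#\{\ell : h_\ell < \gamma\}$ merges, which is manifestly non-decreasing in $\omega$; monotonicity of $W$ along the chain then finishes the proof, with ties handled honestly as weak monotonicity. One caveat: your closing claim that average linkage is ``indispensable'' is overstated. Monotonicity of the argmin over a nested chain needs no linkage property at all: if $\ell_1$ and $\ell_2$ minimize $\phi_{\gamma_1}$ and $\phi_{\gamma_2}$ with $\gamma_1 < \gamma_2$, adding the two optimality inequalities gives $(\gamma_2 - \gamma_1)(P_{\ell_2} - P_{\ell_1}) \geq 0$, hence $\ell_2 \geq \ell_1$ because $P$ is strictly increasing along the chain, and then $W_{\ell_2} \geq W_{\ell_1}$. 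So the proposition holds for any agglomerative hierarchy, inversions or not; what average linkage (and your no-inversion lemma) buys is the alignment of the merge criterion with \eqref{eq:hcutoff} and the clean threshold characterization of the solution path, neither of which is needed for the monotone relationship itself.
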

A corollary of Proposition \ref{prop:elbow} is that any jump in the elbow plot corresponds to decreasing the number of clusters by one, meaning that there is a negative relationship between $K_\omega^*$ and $r_\omega$. Therefore, one can relabel the x-axis of the elbow plot in terms of the number of clusters rather than $\omega$, leading to a more interpretable figure. An example of using an elbow plot to select $\omega$ is given in Section \ref{section:apps}.

Alternatively, one can use the default value $\omega^{\tx{AVG}} = \gamma^{\tx{AVG}}/(1-\gamma^{\tx{AVG}})$, with 
$\gamma^{\tx{AVG}} = {n \choose 2}^{-1} \sum_{i<j} \Delta_{ij}$. Under this choice of $\omega$, candidate clusters are combined when the average of $\Delta_{ij}$ between the clusters is smaller than the average of $\Delta_{ij}$ across the entire sample. $\gamma^{\tx{AVG}}$ estimates $\bar{\D} = {n \choose 2}^{-1} \sum_{i<j} \D_{ij} = \sum_{k<k^\prime} (|S_k||S_{k^\prime}|/{n \choose 2}) d \lb g(\cdot ; \tilde \theta_{k}), g( \cdot, \tilde \theta_{k^\prime}) \rb$, a weighted pairwise sum across the components. If we use $\bar{\D}/(1-\bar{\D})$ in place of $\omega$ in (\ref{eq:lg}),  (\ref{eq:hcutoff}) implies that FOLD will favor merging mixture components when $d \lb g(\cdot; \tilde \theta_k), g(\cdot; \tilde \theta_{k^\prime}) \rb < \bar{\D}$. Importantly, the decision to merge components will depend on how separated they are from the others and their sizes. To see this, consider the following example, in which we fit a mixture with $K=3$ components, where $d\lb g(\cdot; \tilde \theta_1), g(\cdot; \tilde \theta_2) \rb = \epsilon>0$, $d \lb g(\cdot; \tilde \theta_1), g(\cdot; \tilde \theta_3) \rb = d \lb g(\cdot; \tilde \theta_2), g(\cdot; \tilde \theta_3) \rb = \delta >0$, and $|S_1|=|S_2|=|S_3| = n/3$. Then, as $n$ grows large, $\bar{\D} \to (2/9) \epsilon + (4/9) \delta$, and so FOLD will favor merging $S_1$ and $S_2$ into one cluster when $\epsilon < (4/7) \delta$. The smaller $\delta$ is, the smaller $\epsilon$ must be in order to merge $S_1$ and $S_2$. Hence, $\omega^{\tx{AVG}}$ excels at problems in which $f^0$ is composed of well-separated kernels that are approximated by multiple components in $f$. We provide simulation studies in the Supplement that show that $\omega^{\tx{AVG}}$ performs very well in this setting.

\section{Asymptotic Analysis} \label{section:theory}

Though there is a rich literature on asymptotic properties of Bayesian mixture models for estimating the density \citep{ghosal1999posterior,ghosal2007posterior},  number of components \citep{rousseau2011asymptotic, miller2014inconsistency, cai2021finite, ascolani2023clustering}, and the mixing measure \citep{nguyen2013convergence, ho2016convergence, ho2016strong,guha2021posterior}, little attention has been given to the large sample behavior of clustering estimators. \cite{rajkowski2019analysis} focused on the maximum a posteriori (MAP) estimator for $\bs s$ when data are generated from a DPM with Gaussian components and the kernels are correctly specified. They show multiple properties of the MAP estimator, including the key result that the intersection between the convex hulls of two clusters is at most a single observation. In this section, we show convergence of FOLD towards the \textit{oracle clustering procedure} in which the objects are partitioned into groups using knowledge of the true dating generating process. We take a different approach than \cite{rajkowski2019analysis} by focusing on posterior contraction of the mixing measure $\lambda$, which allows us to consider misspecifed models, either in the kernels or number of components. 

\subsection{Assumptions}
We first require that the parameter space $\Theta \subset \mathbb{R}^p$ is compact and $f$, defined formally as $f(x) = \int g(x;\theta) \lambda(d \theta)$, is identifiable with respect to the mixing measure, that is,  $\int g(x;\theta) \lambda_1(d \theta) = \int g(x;\theta) \lambda_2(d \theta)$ for all $x \in \chi$ if and only if $\lambda_1 = \lambda_2$ \citep{teicher1961identifiability}. Identifiability of the mixing measure is satisfied by location and location-scale GMMs, as well as various exponential family mixtures \citep{barndorff1965identifiability} and location family mixtures \citep{teicher1961identifiability}. More specifically, our results rely on contraction of the mixing measure $\lambda$ to some oracle measure $\lambda^*$, which implies convergence of $f$ to an oracle model $f^* = \int g(\cdot;\theta) \lambda^*(d \theta)$. An oracle measure is defined as any Kullback-Liebler (KL) divergence minimizer between our class of models and the true data generating process, i.e. $\lambda^* \in \underset{\lambda \in \Lambda}{\tx{argmin }}\tx{KL}(f^0, f)$, where recall $\Lambda = \tx{supp}(\pi_\Lambda)$. We now present the general assumptions required for our results.
\begin{assumption} \label{assump:general-assumptions}
    Suppose that the following conditions hold.
    \begin{itemize}
        \item[] (A1) There exists an $L>0$ such that $|\tx{supp}(\lambda)| \leq L$ for all $\lambda \in \Lambda$.
        \item[] (A2) The KL minimizer $\lambda^* = \sum_{m=1}^{M^*} a_m^* \delta_{\theta_m^*}$ exists and is unique.
        \item[] (A3) $f^*$ and $f^0$ are such that $\tx{supp}(f^0) \subseteq \tx{supp}(f^*)$.
        \item[] (A4): $\mathcal{G} = \lb \Tilde{g}(x-\theta) : \theta \in \Theta \rb$ for some bounded probability density function $\Tilde{g}(\cdot)$, $\tilde{g}(\cdot)$ is $\zeta$-H{\"o}lder continuous for some $\zeta>0$, and for any fixed $\theta^\prime$, $D_g(\theta, \theta^\prime) = d \lb g(\cdot; \theta), g(\cdot;\theta^\prime) \rb$ is continuous in $\theta$.
        \item[] (A5) There exists a non-negative sequence $\epsilon_n$ so that $\epsilon_n \to 0$ and 
        \begin{equation}
            \rho_n(\X^0) = \Pi \lb \lambda \in \Lambda: W_2(\lambda, \lambda^*) \genq \epsilon_n \mid \X^0 \rb \overset{\P^0}{\longrightarrow} 0.
        \end{equation}
    \end{itemize}
\end{assumption}
(A1) is satisfied by any mixture model in which the number of components is bounded, and examples include a finite mixture model, a truncated MFM, or truncated DPM. Assumption (A2) ensures that the limiting values of the clustering procedure are well-defined and exist. A simple example of when (A2) holds is when the kernels are correctly specified and $\Pi(K=M^0) >0$, in which case $\lambda^* = \lambda^0$ and $f^* = f^0$. We refer to the case in which $\Pi(K=M^0)=1$ as the exact-fitted and well-specified regime, which is permitted under assumptions (A1)-(A2). Under any misspecification, (A2) holds when $\min_{\lambda \in \Lambda} \tx{KL}(f^0,f) = \min_{\lambda \in \Omega} \tx{KL}(f^0,f)$, where $\Omega$ is the space of all probability measures over $\Theta$ \citep{guha2021posterior}. That is, assumption (A2) is satisfied in the misspecified regime when the minimum KL divergence is attained over $\Lambda$. This still allows for an exact-fitted and misspecified regime, in which $\Pi(K=M^*)=1$. (A3) is a technical assumption to ensure there are no regions in the sample space under which $f^0$ gives positive probability but $f^*$ does not. Assumption (A4) restricts our results to location families, and the continuity conditions are satisfied by a variety of models including the location GMM with fixed covariance. Assumption (A5) is a crucial statement on the estimation of parameters in a mixture model. (A5) will hold under various Lipschitz and strong identifiability conditions on $\mathcal G$ \citep{nguyen2013convergence,ho2016convergence,guha2021posterior} that are satisfied by the location GMM \citep{manole2022refined}, and we discuss these conditions in detail in the Supplementary Material. (A5) will hold with $\epsilon_n = (\log n/n)^{1/4}$ for second-order identifiable families, and first-order identifiable families with $\epsilon_n = (\log n/n)^{1/2}$ if the kernels are correctly specified and $\Pi(K=M^0)=1$ \citep{guha2021posterior}; see \cite{ho2016convergence,ho2016strong} for a characterization of first and second-order identifiable kernels.

\subsection{Convergence to The Oracle Rule}

Synonymous with the oracle mixing measure is the oracle clustering procedure or rule. We define the oracle clustering procedure to be the minimizer of the oracle risk function $\mathcal R^*(\widehat{\bs c}) = \E_\Pi[\Lc \mid \lambda^*, \X^0]$,
\begin{equation} \label{eq:oracle-procedure}
    \cFOLD^* = \underset{\widehat{\bs c}} {\tx{argmin }} \mathcal R^*(\chat) = \underset{\widehat{\bs c}} {\tx{argmin }} \sum_{i<j} \lb \textbf{1}_{\hat{c}_i = \hat{c}_j} \Delta_{ij}^* + \omega \textbf{1}_{\hat{c}_i \neq \hat{c}_j} (1 - \Delta_{ij}^*)   \rb,
\end{equation}
where $\Delta^*_{ij} = \E _\Pi(\mathcal D_{ij} \mid \lambda^*, \X^0) = \sum_{m < m^\prime} d \lb g(\cdot; \theta_m^*), g(\cdot; \theta_{m^\prime}^*) \rb q_{ij}^{mm^\prime*}$, with $q_{ij}^{m m^\prime *} = \Pi(s_i = m, s_j = m^\prime \mid \X^0, \lambda^*) + \Pi(s_i = m^\prime, s_j = m \mid \X^0, \lambda^*)$, and
\begin{equation*}
    \Pi(s_i = m, s_j = m^\prime \mid \X^0, \lambda^*) = \frac{a_m^* a_{m^\prime}^* g(X_i^0; \theta_m^*) g(X_j^0; \theta_{m^\prime}^*) }{f^*(X_i^0) f^*(X_j^0)}.
\end{equation*}
Observe that $\Delta_{ij}^*$ is a weighted sum of the total statistical distance between the oracle components, where the weight of each component pair is given by the conditional probability that $s_i = m$ and $s_j = m^\prime$ (or vice versa) given $\X=\X^0$ and $\lambda = \lambda^*$. We interpret the oracle clustering procedure as a rule for grouping the observations in $\X^0$ based on the knowledge of the optimal parameter values in the mixture model. We can construct two simple examples of oracle rules which follow from the definition in \eqref{eq:oracle-procedure}.
\begin{proposition} \label{prop:oracles}
    Instances of the oracle clustering procedure include:
    \begin{enumerate}[label=(\alph*)]
        \item If $ d\lb g(\cdot; \theta_m^*), g(\cdot; \theta_{m^\prime}^*) \rb = 1$ for all $m,m^\prime$, then $\cFOLD^* = \bs s^*$, where $s^*_i = s^*_j$ if and only if there exists a unique $m \in [M^*]$ such that $g(X_i^0; \theta_m^*)g(X_j^0;\theta_m^*) > 0$. 
        \item  If $M^*=2$ and $ d\lb g(\cdot; \theta_1^*), g(\cdot; \theta_{2}^*) \rb < \gamma$, then $\cFOLD^* = \bs c_0$, where $c_{0i} = c_{0j}$ for all $i,j$. 
    \end{enumerate}
\end{proposition}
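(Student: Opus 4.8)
The plan is to treat both parts by the same two-move strategy: first reduce the oracle pairwise quantity $\Delta^*_{ij}$ to an explicit closed form under the stated hypothesis, and then minimize the oracle risk $\mathcal R^*(\chat)$ directly, exploiting that it is a sum over pairs of non-negative terms (property (a) of Proposition \ref{prop:Delta}). The two structural facts I would lean on are the factorization $\Pi(s_i = m, s_j = m^\prime \mid \X^0, \lambda^*) = \Pi(s_i = m \mid \X^0,\lambda^*)\,\Pi(s_j = m^\prime \mid \X^0,\lambda^*)$ stated just before the proposition, which reflects conditional independence of the labels given $(\lambda^*, \X^0)$, and the defining property of $d$ that $d(P,Q)=1$ if and only if the supports are disjoint.

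For part (a), the hypothesis $d(g(\cdot;\theta_m^*), g(\cdot;\theta_{m^\prime}^*)) = 1$ for all $m \neq m^\prime$ says exactly that the oracle components have pairwise disjoint supports. I would first note that $X_i^0 \sim f^0$ together with (A3) gives $f^*(X_i^0) > 0$ with $\P^0$-probability one, so each $X_i^0$ lies in the support of a \emph{unique} oracle component, call its index $m(i)$. The marginal $\Pi(s_i = m \mid \X^0, \lambda^*) = a_m^* g(X_i^0;\theta_m^*)/f^*(X_i^0)$ is then degenerate at $m(i)$, and substituting into the factorized $q_{ij}^{mm^\prime *}$ collapses the sum to $\Delta^*_{ij} = \sum_{m<m^\prime} q_{ij}^{mm^\prime *} = \Pi(s_i \neq s_j \mid \X^0, \lambda^*) = \textbf{1}_{m(i) \neq m(j)}$. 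Since $s_i^* = s_j^* \iff m(i) = m(j)$, the clustering $\bs s^*$ puts $\hat c_i = \hat c_j$ exactly when $\Delta^*_{ij}=0$ and $\hat c_i \neq \hat c_j$ exactly when $\Delta^*_{ij}=1$, so every summand of $\mathcal R^*$ vanishes and $\mathcal R^*(\bs s^*)=0$. Non-negativity makes this a global minimum, and I would close part (a) by a uniqueness check: any zero-risk clustering must agree with $\bs s^*$ on every pair, hence induces the same partition.

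For part (b), with $M^*=2$ there is a single distance $\delta := d(g(\cdot;\theta_1^*), g(\cdot;\theta_2^*))$, and since $q_{ij}^{12*} = \Pi(s_i \neq s_j \mid \X^0, \lambda^*) \le 1$ we get $\Delta^*_{ij} = \delta\, q_{ij}^{12*} \le \delta < \gamma$ for every pair. I would then run the monotone-merging argument already computed around \eqref{eq:hcutoff}: for any clustering with at least two clusters, merging two of them decreases $\mathcal R^*$ by $\sum_{i \in \hat C_h,\, j \in \hat C_{h^\prime}} [\omega - (1+\omega)\Delta^*_{ij}]$, and because $\Delta^*_{ij} < \gamma = \omega/(1+\omega)$ each summand is strictly positive, so the merge strictly lowers the risk. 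Iterating the merges collapses any partition to the single cluster $\bs c_0$ while strictly decreasing $\mathcal R^*$ at each step, which forces $\cFOLD^* = \bs c_0$.

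I expect the real work to be the reduction of $\Delta^*_{ij}$ in each regime rather than the ensuing optimization, which is elementary once the distances are pinned down. The most delicate point is the measure-theoretic bookkeeping in part (a): justifying $f^*(X_i^0)>0$ almost surely via (A3) and using the support characterization of $d$ to conclude a unique containing component $m(i)$, so that the posterior over $s_i$ is genuinely degenerate and $\Delta^*_{ij}$ reduces to $\textbf{1}_{m(i)\neq m(j)}$.
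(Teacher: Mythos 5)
Your proof is correct. In part (a) it follows the paper's own route: pairwise-disjoint supports collapse $\Delta^*_{ij}$ to the indicator $\textbf{1}_{m(i)\neq m(j)}$, so the oracle risk reduces to a Binder-type loss against $\bs s^*$, which is uniquely minimized there; your extra bookkeeping (that $f^*(X_i^0)>0$ holds $\P^0$-a.s.\ by (A3), so each $X_i^0$ has a unique containing component and the conditional law of $s_i$ given $(\lambda^*,\X^0)$ is degenerate) is exactly the detail the paper's one-line proof leaves implicit. In part (b) your mechanism differs mildly from the paper's: the paper compares $\mathcal R^*(\bs c_0) = d\lb g(\cdot;\theta_1^*),g(\cdot;\theta_2^*)\rb \sum_{i<j} q_{ij}^{12*}$ directly against $\mathcal R^*(\chat)$ for arbitrary $\chat$, asserting that $\Delta^*_{ij} \leq d\lb g(\cdot;\theta_1^*),g(\cdot;\theta_2^*)\rb < \gamma$ suffices — in effect a term-wise observation that co-clustering is the cheaper option for every pair, since $\Delta^*_{ij} < \gamma$ is equivalent to $\Delta^*_{ij} < \omega(1-\Delta^*_{ij})$ — whereas you iterate strict merges using the computation around \eqref{eq:hcutoff} to collapse any multi-cluster partition down to $\bs c_0$. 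The two are equivalent in substance and equally elementary; the term-wise comparison gets there in one step, while your merging iteration has the small virtue of making the strict improvement at each step (hence uniqueness of the minimizer) completely explicit, a point the paper's rather terse sufficiency claim glosses over.
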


We refer to the procedure on Proposition \ref{prop:oracles}(a) as the ``match rule" in that it allocates each observation to a mixture component with compatible support, and call the procedure in Proposition \ref{prop:oracles}(b) the ``merge rule", i.e., as the statistical distance between oracle components becomes smaller, we are more likely to combine them and subsequently place all observations in one cluster. A similar quantity exists for Binder's loss function, denoted $\bs c_{\tx{B}}^* = \underset{\widehat{\bs c}}{\tx{argmin }} \E_\Pi[\mathcal{L}_{\tx{B}}(\chat, \bs s) \mid \lambda^*, \X^0]$. Under the settings of Proposition \ref{prop:oracles}(a), $\bs c_{\tx{B}}^* = \cFOLD^* = \bs s^*$, meaning that the match rule is the optimal clustering procedure under both FOLD and Binder's loss when the oracle components are perfectly separated. Intuitively, this is because these settings reflect an ideal scenario in which all mixture components are perfectly separated and no merging is required. Otherwise, say in the setting of Proposition \ref{prop:oracles}(b), the two methods can differ. A sufficient condition for $\bs c_{\tx{B}}^* = \bs c_0$ is $q_{ij}^{12} < \omega (1 - q_{ij}^{12})$ for all pairs $i,j$. Any observations in the tails of the two oracle components will blow up the value of $q_{ij}^{12}$, hence, it is possible that $\bs c_{\tx{B}}^* \neq \bs c_0$ even for relatively large $\omega$. This implies that a concrete advantage of our approach is robustness to outliers when the oracle components are not well separated. We now state our primary theoretical result on the relationship between $\cFOLD$ and $\cFOLD^*$.
\begin{theorem}
\label{thm:consistency}
Fix $0 < \delta < 1$. Then under assumptions (A1)-(A5), with $\P^0$-probability tending to $1$,
\begin{equation} \label{eq:rvns-consistency}
   {n \choose 2}^{-1} | \mathcal R(\cFOLD) - \mathcal R^* (\cFOLD^*) | \lenq \max(\omega, 1) \lb \rho_n(\X^0) \bar{\Delta}^* + \mathcal{Q}(\X^0) \rb,
\end{equation}
where $\bar{\Delta}^* = {n \choose 2}^{-1} \sum_{i<j} \Delta^*_{ij}$, $\mathcal{Q}(\X^0) = {n \choose 2}^{-1} \sum_{i<j} \tau^*_n(X_i^0, X_j^0)$, and $\tau_n^*(X_i^0, X_j^0)$ is a function of $(X_i^0, X_j^0, \epsilon_n, \rho_n(\X^0), \lambda^*, \zeta, \delta)$ so that $\tau_n^*(X_i, X_j) \overset{\P^0}{\longrightarrow} 0$. 
\end{theorem}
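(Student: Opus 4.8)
The plan is to bound the difference of the two optimized risks by a uniform comparison over clusterings, exploiting only that $\cFOLD$ and $\cFOLD^*$ are the respective minimizers of $\mathcal R$ and $\mathcal R^*$. Since $\mathcal R(\cFOLD)\leq \mathcal R(\cFOLD^*)$ we get $\mathcal R(\cFOLD)-\mathcal R^*(\cFOLD^*)\leq \mathcal R(\cFOLD^*)-\mathcal R^*(\cFOLD^*)$, and since $\mathcal R^*(\cFOLD^*)\leq \mathcal R^*(\cFOLD)$ we get $\mathcal R(\cFOLD)-\mathcal R^*(\cFOLD^*)\geq \mathcal R(\cFOLD)-\mathcal R^*(\cFOLD)$. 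Both bounding quantities have the form $\mathcal R(\chat)-\mathcal R^*(\chat)$ at a fixed clustering, so
\[ |\mathcal R(\cFOLD)-\mathcal R^*(\cFOLD^*)| \leq \sup_{\chat}|\mathcal R(\chat)-\mathcal R^*(\chat)|. \]
This standard device for comparing the values attained by two distinct minimizers lets me avoid having to show that $\cFOLD$ and $\cFOLD^*$ agree as partitions.

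Next I would evaluate the uniform gap. Because $\mathcal R$ and $\mathcal R^*$ share the same combinatorial structure and differ only in replacing $\Delta_{ij}$ by $\Delta^*_{ij}$,
\[ \mathcal R(\chat)-\mathcal R^*(\chat) = \sum_{i<j}(\Delta_{ij}-\Delta^*_{ij})\left(\mathbf{1}_{\hat c_i=\hat c_j}-\omega\,\mathbf{1}_{\hat c_i\neq \hat c_j}\right), \]
and since the parenthesized factor equals either $1$ or $-\omega$, the triangle inequality gives $|\mathcal R(\chat)-\mathcal R^*(\chat)|\leq \max(\omega,1)\sum_{i<j}|\Delta_{ij}-\Delta^*_{ij}|$ uniformly in $\chat$. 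Dividing by ${n\choose 2}$ reduces the theorem to controlling the averaged pairwise discrepancy ${n\choose 2}^{-1}\sum_{i<j}|\Delta_{ij}-\Delta^*_{ij}|$, and the prefactor $\max(\omega,1)$ already matches the statement.

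The core is the pairwise estimate $|\Delta_{ij}-\Delta^*_{ij}|\lenq \rho_n(\X^0)\,\Delta^*_{ij}+\tau^*_n(X_i^0,X_j^0)$, which after summation produces exactly $\rho_n(\X^0)\bar\Delta^*+\mathcal Q(\X^0)$. Writing $\Delta_{ij}(\lambda)=\E_\Pi(\D_{ij}\mid\X^0,\lambda)$ so that $\Delta_{ij}=\int\Delta_{ij}(\lambda)\,\Pi(d\lambda\mid\X^0)$ and $\Delta^*_{ij}=\Delta_{ij}(\lambda^*)$, I would split the posterior integral over the event $\{W_2(\lambda,\lambda^*)\genq\epsilon_n\}$ of (A5) and its complement. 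On that event, whose posterior mass is $\rho_n(\X^0)$, using $\D_{ij}\in[0,1]$ and the triangle inequality to peel off $\Delta^*_{ij}\rho_n(\X^0)$ yields the term $\rho_n(\X^0)\Delta^*_{ij}$ plus a remainder bounded by $\rho_n(\X^0)$. On the complement I would argue continuity: with the factorization $\Delta_{ij}(\lambda)=\sum_{k\neq k'}D_g(\tilde\theta_k,\tilde\theta_{k'})\,\Pi(s_i=k\mid X_i^0,\lambda)\,\Pi(s_j=k'\mid X_j^0,\lambda)$ and responsibilities $a_k g(X_i^0;\tilde\theta_k)/f(X_i^0)$, a small $W_2(\lambda,\lambda^*)$ forces the atoms and weights of $\lambda$ to concentrate near those of $\lambda^*$, so by the $\zeta$-H\"older continuity of $\tilde g$ and the continuity of $D_g$ in (A4), $|\Delta_{ij}(\lambda)-\Delta^*_{ij}|$ is bounded by a quantity vanishing with $\epsilon_n$. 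I would collect this near-region error, together with the $\rho_n(\X^0)$-remainder from the far region, into $\tau^*_n(X_i^0,X_j^0)$, which explains its stated dependence on $(X_i^0,X_j^0,\epsilon_n,\rho_n(\X^0),\lambda^*,\zeta,\delta)$.

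The main obstacle is the quantitative near-region continuity, for two reasons. First, translating $W_2(\lambda,\lambda^*)\lenq\epsilon_n$ into closeness of the responsibilities requires the denominators $f(X_i^0),f^*(X_i^0)$ to be bounded away from zero; this is where the fixed $\delta\in(0,1)$ enters, confining the analysis to a $\P^0$-high-probability set on which $f^*$ is bounded below so that tail contributions are negligible, and this set contributes to the $\P^0$-probability tending to one in the statement. Second, since $\lambda$ may carry up to $L$ atoms while $\lambda^*$ has only $M^*$, I must match the atoms of $\lambda$ to those of $\lambda^*$ --- grouping atoms of $\lambda$ that cluster around a common $\theta_m^*$ and showing their aggregate weight tends to $a_m^*$ --- which is the delicate but standard consequence of $W_2$-contraction for discrete mixing measures under identifiability (A2) and bounded support (A1). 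Once $\tau^*_n$ is assembled and shown to satisfy $\tau^*_n(X_i^0,X_j^0)\overset{\P^0}{\longrightarrow}0$ (whence $\mathcal Q(\X^0)\overset{\P^0}{\longrightarrow}0$) using $\epsilon_n\to0$ and $\rho_n(\X^0)\overset{\P^0}{\longrightarrow}0$, the three estimates combine with the reduction above to yield \eqref{eq:rvns-consistency}.
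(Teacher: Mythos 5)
Your proposal is correct and follows essentially the same route as the paper's proof: the minimizer-comparison device together with the uniform-in-$\chat$ bound $|\mathcal R(\chat)-\mathcal R^*(\chat)|\leq \max(\omega,1)\sum_{i<j}|\Delta_{ij}-\Delta_{ij}^*|$ is exactly the paper's closing argument, and your far/near splitting of the posterior integral via (A5), combined with grouping the atoms of $\lambda$ around those of $\lambda^*$ with aggregate-weight matching and the (A4) continuity estimates on a $\P^0$-high-probability set where $f^*$ is bounded below, is precisely the content of the paper's Wasserstein-concentration proposition and its resulting sandwich bounds on $\Delta_{ij}$. The only cosmetic differences are that the paper routes the reduction through an intermediate functional $\mathcal{S}(\chat,\X^0)$ rather than your direct pairwise identity, and that $\delta$ actually enters through the $\epsilon_n^{2\delta}$ rate at which the redundant components' weights empty in the atom-matching step, rather than through the density lower bound as you suggest.
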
 
To prove Theorem \ref{thm:consistency}, we first note that we can write $\Delta_{ij}$ as the posterior expectation of a quantity that only depends on $\lambda = \sum_{k=1}^K a_k \delta_{\tilde \theta_k}$,
\begin{equation} \label{eq:rvns-delta-integral}
    \Delta_{ij} = \int_{\lambda \in \Lambda} \Pi(d \lambda \mid \X^0) \sum_{k < k^\prime} D_g(\tilde \theta_k, \tilde \theta_{k^\prime})  a_k a_{k^\prime} \frac{g(X_i^0;\tilde \theta_k)g(X_j^0;\tilde \theta_{k^\prime}) + g(X_j^0;\tilde \theta_k)g(X_i^0;\tilde \theta_{k^\prime})}{f(X_i^0)f(X_j^0)}.
\end{equation}
Next, we decompose $\Lambda = B_{\epsilon_n}(\lambda^*) \cup (\Lambda \setminus B_{\epsilon_n}(\lambda^*))$, where $B_{\epsilon_n}(\lambda^*) = \lb \lambda \in \Lambda: W_2(\lambda, \lambda^*) \lenq \epsilon_n \rb$, and split \eqref{eq:rvns-delta-integral} into a sum of integrals over these domains. Since $0 \leq D_g(\theta, \theta^\prime) \leq 1$, the integral over $\Lambda \setminus B_{\epsilon_n}(\lambda)^*$ is bounded above by $\rho_n(\X^0)$, which goes to $0$ in probability by (A5). Then, using a similar proof technique given in \cite{nguyen2013convergence}, we show that there exists a finite $N$ so that for all $n \geq N$ and $\lambda \in B_{\epsilon_n}(\lambda^*)$, $\lambda$ has at least $M^*$ support points and for any $m \in [M^*]$ there exists a set of indices $\mathcal{I}_{m}^{(\lambda)} \subset [K]$ so that $\max_{k \in \mathcal{I}_m^{(\lambda)}} \norm{\tilde \theta_k - \theta_m^*} \lenq \epsilon_n$ and $\bigg | \sum_{k \in \Ilam_m} a_k - a_m^*  \bigg | \lenq \max(\epsilon_n, \epsilon_n^{2 \delta})$. Furthermore, if we denote $\Ilam = \bigcup_{m=1}^{M^*} \Ilam_m$, (A5) also implies that $\sum_{k \not \in \Ilam} a_k \lenq \epsilon_n^{2 \delta}$. By (A4), these bounds translate into similar upper and lower bounds for $\Delta_{ij}$. We then sum over all $i,j$ pairs and show that ${n \choose 2}^{-1}|\mathcal R(\chat) - \mathcal R^*(\chat)|$ is uniformly bounded by the right hand side of \eqref{eq:rvns-consistency} for all $\chat$. That is, the rate of convergence of $\mathcal R(\chat)$ to $\mathcal R^*(\chat)$ does not depend on the clustering estimator itself. Finally, we show that this implies convergence of the minimizers at the same rate. 

The bound in \eqref{eq:rvns-consistency} is split into two remainder terms. It follows from (A5) that $\rho_n(\X^0) \bar{\Delta}^* \overset{\P^0}{\longrightarrow} 0$ since $0 \leq \bar{\Delta}^* \leq 1$. Therefore, smaller values of $\bar{\Delta}^*$ will cause this term to diminish faster. This can arise when $f^*$ is comprised of many closely overlapping components. Hence, convergence to the oracle rule will actually benefit from overfitting. $\mathcal Q(\X^0)$ is a more general remainder term that results from translating the posterior contraction of $\tilde \theta_k$ and $a_k$ to the sum in \eqref{eq:rvns-delta-integral}. In particular, this remainder term will depend on $\delta$, which controls the rate at which the irrelevant mixture components are emptied, the H{\"o}lder constant $\zeta$, which controls convergence of $g(X_i^0; \tilde \theta_k)$ to $g(X_i^0;\theta_{m}^*)$, and $L-M^*$, or the degree to which the number of components is misspecified, with larger $L$ implying a slower rate of convergence. 

In the Supplementary Material we empirically validate Theorem \ref{thm:consistency} on a simple simulated example. We simulate data from $M^0=4$ Gaussian components with covariance equal to $0.25I$, then fit a $4$-component Bayesian mixture model, meaning that $f^0(x) = f^*(x) = \sum_{m=1}^4 a_m^0 \N(x;\mu_m^0, 0.25I)$. We compute $\cFOLD$, $\cFOLD^*$, $\bs c_{\tx{B}}$, and $\bs c_{\tx{B}}^*$, the latter of which denotes the oracle clusters corresponding to Binder's loss, for increasing sample size. We show that both $\mathcal R(\chat) \to \mathcal R^*(\chat)$ and $\cFOLD \to \cFOLD^*$ as $n \to \infty$. Additionally, there are substantial differences between $\cFOLD^*$ and $\bs c_{\tx{B}}^*$ because oracle FOLD corresponds to the merge rule for some of the Gaussian components, while oracle Binder prefers estimation of each individual component. 

\section{Example: GSE81861 Cell Line Dataset} \label{section:apps}

We apply FOLD to the GSE81861 cell line dataset \citep{li2017reference}, which measures single cell counts in $57,241$ genes from $630$ single-cell transcriptomes. There are $7$ distinct cell lines present in the dataset, so we compare FOLD and other model-based clustering methods to the true cell line labels as a performance benchmark. First, we apply routine pre-processing steps for RNA sequence analysis (as in \cite{chandra2023escaping}). We discard cells with low read counts, giving $n = 519$ total cells. We then normalize the data using SCRAN \citep{lun2016pooling} and select informative genes with M3Drop \citep{andrews2019m3drop}. We use principal component analysis (PCA) for dimension reduction by taking $\X^0$ to be the projection of the normalized cell counts onto the first $p = 5$ principal components, then scale the projections to have zero mean and unit variance. 

We fit a $p$-dimensional Bayesian Gaussian finite mixture to $\X^0$ with $K = 50$ components, and simulate $25,000$ posterior samples after a burn-in of $1,000$. Every fourth iteration is discarded, leaving $6,000$ samples remaining. Along with $\cFOLD$, we calculate model-based clusterings with the VI loss, Binder's loss, and \texttt{mclust} \citep{scrucca2016mclust}, the latter of which implements the EM algorithm and chooses the number of clusters via the Bayesian information criterion (BIC). We also compare to algorithmic clustering methods, including average linkage hierarchical clustering (HCA), k-means, DBSCAN, and spectral clustering. Details on hyperparameter choices for all methods are given in the Supplement. For FOLD, we create candidate clusterings by average-linkage clustering on $\Delta$, then choose $\omega$ by consulting the elbow plot in Figure \ref{fig:rnaelbow}. The plot suggests $6$ clusters, which corresponds to $\omega = 25$. We also implement SALSO over $150$ independent runs to minimize \eqref{eq:risk}. However, over repeated replications we find that the SALSO estimate is either equal to or has a higher risk than the point estimate given by hierarchical clustering on $\Delta$. Binder's loss and the VI loss are implemented with $\omega=1$ (the default value in \texttt{mcclust} \citep{fritsch2022package}), which results in $\bs c_\VI = \bs c_{\tx{B}}$.

\begin{figure}[ht]
    \centering
    \includegraphics[scale=0.35]{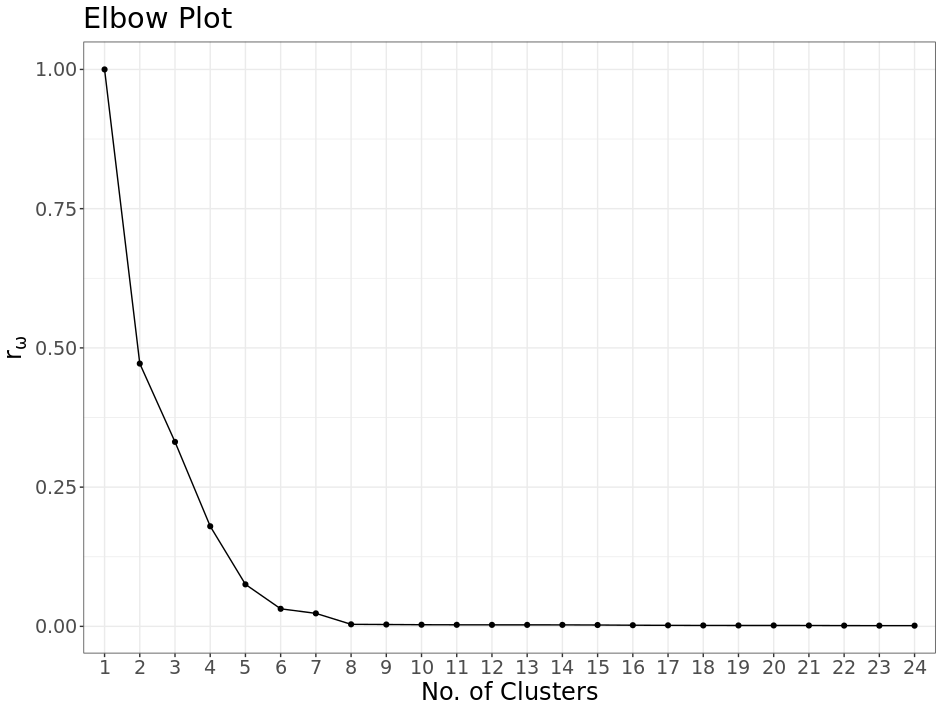}
    \caption{Elbow plot for choosing the number of clusters in $\cFOLD$ with the cell line dataset.
    }
    \label{fig:rnaelbow}
\end{figure}

\begin{figure}
    \centering
    \includegraphics[scale=0.49]{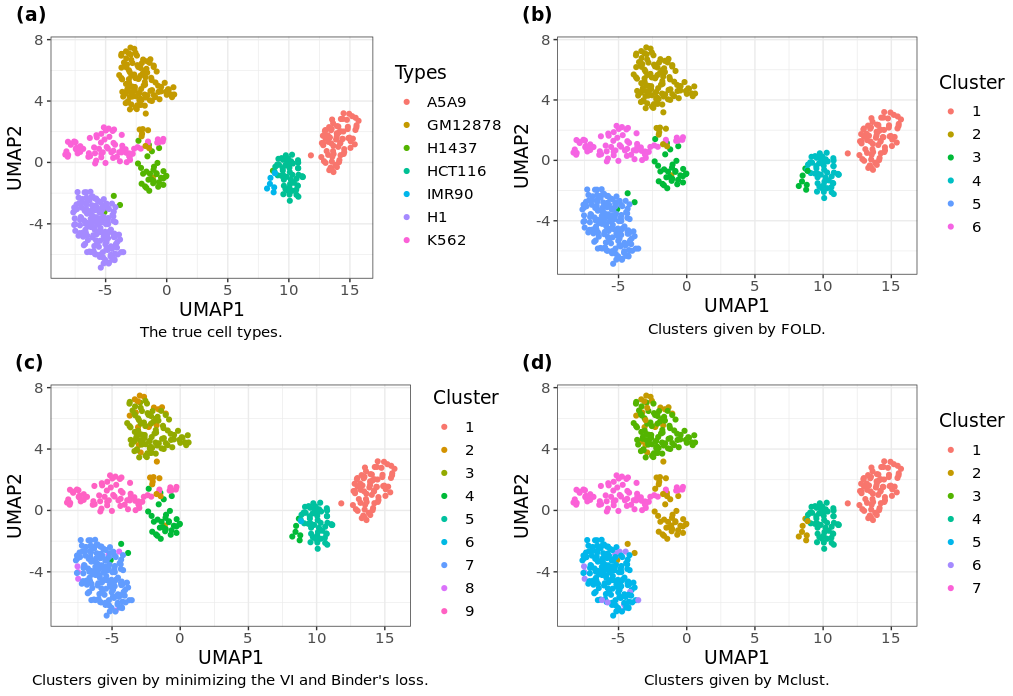}
    \caption{UMAP plots of the cell line dataset with colors corresponding to the true cell types, $\cFOLD$, VI/Binder, and \texttt{mclust}, respectively.}
    \label{fig:umap}
\end{figure}

 Figure \ref{fig:umap} shows the UMAP plots \citep{mcinnes2018} of the original normalized count data along with colors indicating the true cell types and the clusterings from the three model-based methods. The adjusted Rand index (ARI) \citep{hubert1985comparing} with the true cell types and number of clusters ($\hat K_n$) for $\cFOLD$ and all other clusterings are given in Table \ref{table:cells-results}. First, observe that the Bayesian clustering methods outperform the competitors in the ARI, indicating that the overfitted Gaussian components are accurately approximating the distribution within each type. However, FOLD notably excels over the other Bayesian methods with an ARI of $0.995$, only misclassifying six observations.
Note that the VI, Binder's loss, and \texttt{mclust} split the GM12878 and H1 cell types into two clusters each, while FOLD correctly identifies both types, keeping each as single clusters. The splitting of types is not unexpected in this application since the GM12878 and H1 types were each sequenced in two separate batches \citep{li2017reference}, indicating that FOLD is more robust to the batch effect than other Bayesian methods. However, $\cFOLD$ underestimates the number of groups by combining the H1437 and IMR90 cell types. The other model-based methods merge these types as well, with the VI and Binder's loss producing $9$ clusters and \texttt{mclust} giving $7$ clusters. Of the algorithmic methods, k-means performs the best, estimating the correct number of types while achieving an ARI of $0.904$. However, k-means splits the GM12878 and A549 types into two clusters each, the latter of which is always kept as one cluster by the model-based methods.

\begingroup
\renewcommand{\arraystretch}{0.49}
\begin{table}[ht]
\centering
\begin{tabular}{rrrrrrrrr}
  \hline
 & FOLD & VI & Binder & Mclust & HCA & K-Means & DBSCAN & Spectral \\ 
  \hline
ARI & 0.995 & 0.915 & 0.915 & 0.854 & 0.622 & 0.904 & 0.679 & 0.620 \\ 
  $\hat K_n$ & 6 & 9 & 9 & 7 & 7 & 7 & 5 & 5 \\ 
   \hline
\end{tabular}
\caption{Adjusted Rand index (ARI) with the true cell lines and the number of clusters ($\hat{K}_n$) for FOLD and competitors on the cell line dataset.
}
\label{table:cells-results}
\end{table}
\endgroup

\begin{figure}
    \centering
    \includegraphics[scale=0.49]{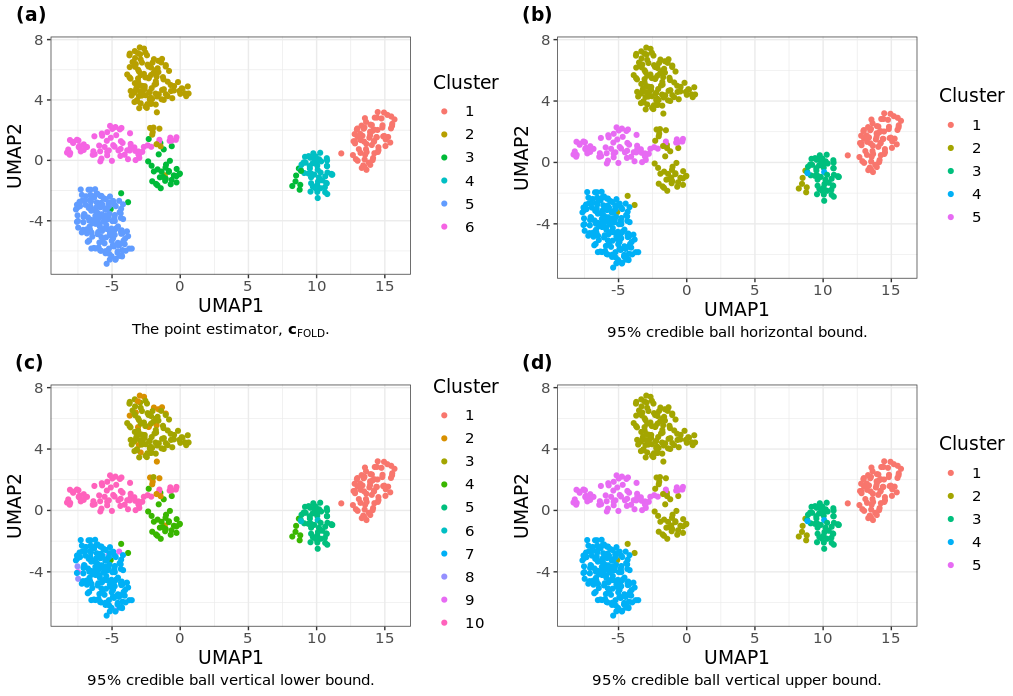}
    \caption{The clustering $\cFOLD$ along with the horizontal, vertical lower, and vertical upper bounds for the cell line dataset. Here, $\mathbb D(\cdot, \cdot)$ is the VI and the bounds are unique.}
    \label{fig:rnaCB}
\end{figure}

The 95\% credible ball around $\cFOLD$ is displayed in Figure \ref{fig:rnaCB}. The credible ball communicates that there is substantial uncertainty in cell types where batch effects occur. The horizontal and vertical upper bounds effectively merge the GM12878 type with the H1437 cell type, which is likely due to the proximity of these two types in the sample space. Conversely, the vertical lower bound splits the GM12878 cell type into two clusters. We are also uncertain in our classification of the H1 cell type, which is similarly split into multiple clusters by the vertical lower bound because of separate batching. Interestingly, in all bounds, the H1437 and IMR90 cell types are allocated to the same cluster. This type merging could be explained by the fact that IMR90 consists of a small number of cells or that  both types are isolated from the lung \citep{li2017reference}. In the Supplementary Material, we further evaluate our methodology by applying FOLD and competitors to six clustering datasets.

\section{Discussion} \label{section:discussion}

Fusing of Localized Densities (FOLD) is a Bayesian method for cluster analysis that characterizes clusters as possibly containing multiple mixture components. We first detailed the decision theoretic justification behind our approach, in which we obtain a point estimate of the clustering by minimizing a novel loss function. Our loss function has several appealing properties, such as favoring the merging of overlapping kernels, simplification to Binder's loss when all the mixture components are well separated, and invariance to permutations of the labels. Uncertainty in cluster allocations is expressed with a credible ball and posterior similarity matrix. We have given concrete guidance on tuning the loss parameter, including an elbow plot method and default value that performs excellently on simulated examples.

Throughout the article, we have primarily focused on the Gaussian mixture model because of its ubiquity in the literature and useful theoretical properties. However, FOLD can be applied to any parametric mixture in which a bounded statistical distance between kernels is simple to compute. For the Hellinger distance, this includes the beta, exponential, gamma, and Weibull families. A near identical approach can be applied to discrete families, where localized mass functions would replace the role of the localized densities.

We have shown that under regularity conditions, our clustering point estimator converges to the oracle clustering procedure as $n \to \infty$ for both misspecified and well-specified kernel regimes. The oracle rule categorizes observations into clusters using knowledge about the KL-minimizer. We gave two concrete examples of oracle rules, including the match rule and the merge rule. Our findings imply similar behavior for Binder's loss, meaning that analogous rules could be potentially derived for other methods such as the VI. These results would provide valuable insight into the overall asymptotic performance of Bayesian clustering. In addition, our results could be extended by focusing on contraction of the mixing measure in terms of the Voronoi loss function \citep{manole2022refined}, a loss specifically intended for strongly identifiable families that has been shown to contract at a faster rate. 

Though we implemented FOLD with the Hellinger distance in our simulations and application, other distribution metrics could be used instead. For a general statistical distance $d$, one could set $\mathcal{D}_{ij} = 1- \exp \lb  - d(g(\cdot; \theta_i), g(\cdot; \theta_j)) \rb$. The loss of co-clustering $i$ and $j$ is $1- \exp \lb  - d(g(\cdot; \theta_i), g(\cdot; \theta_j)) \rb$. This implies that we would favor co-clustering $i$ and $j$ when $d(g(\cdot; \theta_i), g(\cdot; \theta_j)) < - \log (1- \gamma)$. For $\theta_i \in \R^p$, an even simpler variant is to choose $\D_{ij} = 1 - \exp \lb - \rho \norm{\theta_i - \theta_j}^2 \rb$, where $\rho > 0$ is some fixed bandwidth, which would promote merging components with similar atoms. Alternatively, one can view $\Delta$ as a stochastic distance matrix and employ k-medoids, hierarchical clustering, or spectral clustering to cluster the data. Finally, the general notion of specifying loss functions for Bayesian clustering using $\bs \theta$ and not just $\bs s$ provides a promising direction for future research.

\section*{Acknowledgements}
This work was supported by the National Institutes of Health (NIH) under Grants 1R01AI155733, 1R01ES035625, and 5R01ES027498; the United States Office of Naval Research (ONR) Grant N00014-21-1-2510; and the European Research Council (ERC) under Grant 856506. 

\bibliography{sources}

\begin{thebibliography}{}

\bibitem[Aeberhard and Forina, 1991]{winedataset}
Aeberhard, S. and Forina, M. (1991).
\newblock {Wine}.
\newblock UCI Machine Learning Repository.
\newblock {DOI}: https://doi.org/10.24432/C5PC7J.

\bibitem[Andrews and Hemberg, 2019]{andrews2019m3drop}
Andrews, T.~S. and Hemberg, M. (2019).
\newblock M3{D}rop: dropout-based feature selection for sc{RNA}seq.
\newblock {\em Bioinformatics}, 35(16):2865--2867.

\bibitem[Antoniak, 1974]{antoniak1974mixtures}
Antoniak, C.~E. (1974).
\newblock Mixtures of {D}irichlet processes with applications to {B}ayesian nonparametric problems.
\newblock {\em The Annals of Statistics}, 2(6):1152--1174.

\bibitem[Aragam et~al., 2020]{aragam2020identifiability}
Aragam, B., Dan, C., Xing, E.~P., and Ravikumar, P. (2020).
\newblock Identifiability of nonparametric mixture models and {B}ayes optimal clustering.
\newblock {\em The Annals of Statistics}, 48(4):2277--2302.

\bibitem[Ascolani et~al., 2023]{ascolani2023clustering}
Ascolani, F., Lijoi, A., Rebaudo, G., and Zanella, G. (2023).
\newblock Clustering consistency with {D}irichlet process mixtures.
\newblock {\em Biometrika}, 110(2):551--558.

\bibitem[Azzalini and Capitanio, 1999]{azzalini1999statistical}
Azzalini, A. and Capitanio, A. (1999).
\newblock Statistical applications of the multivariate skew normal distribution.
\newblock {\em Journal of the Royal Statistical Society: Series B (Statistical Methodology)}, 61(3):579--602.

\bibitem[Azzalini and Valle, 1996]{azzalini1996multivariate}
Azzalini, A. and Valle, A.~D. (1996).
\newblock The multivariate skew-normal distribution.
\newblock {\em Biometrika}, 83(4):715--726.

\bibitem[Barndorff-Nielsen, 1965]{barndorff1965identifiability}
Barndorff-Nielsen, O. (1965).
\newblock Identifiability of mixtures of exponential families.
\newblock {\em Journal of Mathematical Analysis and Applications}, 12(1):115--121.

\bibitem[Bartolucci, 2005]{bartolucci2005clustering}
Bartolucci, F. (2005).
\newblock Clustering univariate observations via mixtures of unimodal normal mixtures.
\newblock {\em Journal of Classification}, 22(2):203--219.

\bibitem[Baudry et~al., 2010]{baudry2010combining}
Baudry, J.-P., Raftery, A.~E., Celeux, G., Lo, K., and Gottardo, R. (2010).
\newblock Combining mixture components for clustering.
\newblock {\em Journal of Computational and Graphical Statistics}, 19(2):332--353.

\bibitem[Bechtel et~al., 1993]{bechtel1993population}
Bechtel, Y.~C., Bonaiti-Pellie, C., Poisson, N., Magnette, J., and Bechtel, P.~R. (1993).
\newblock A population and family study {N}-acetyltransferase using caffeine urinary metabolites.
\newblock {\em Clinical Pharmacology \& Therapeutics}, 54(2):134--141.

\bibitem[Bensmail et~al., 1997]{bensmail1997inference}
Bensmail, H., Celeux, G., Raftery, A.~E., and Robert, C.~P. (1997).
\newblock Inference in model-based cluster analysis.
\newblock {\em Statistics and Computing}, 7(1):1--10.

\bibitem[Berenfeld et~al., 2022]{berenfeld2022estimating}
Berenfeld, C., Rosa, P., and Rousseau, J. (2022).
\newblock Estimating a density near an unknown manifold: a {B}ayesian nonparametric approach.
\newblock {\em arXiv preprint arXiv:2205.15717}.

\bibitem[Binder, 1978]{binder1978bayesian}
Binder, D.~A. (1978).
\newblock Bayesian cluster analysis.
\newblock {\em Biometrika}, 65(1):31--38.

\bibitem[Cacciatore and Tenori, 2022]{cacciatore2022kodama}
Cacciatore, S. and Tenori, L. (2022).
\newblock {\em KODAMA: Knowledge discovery by accuracy maximization}.
\newblock R package version 1.9.

\bibitem[Cai et~al., 2021]{cai2021finite}
Cai, D., Campbell, T., and Broderick, T. (2021).
\newblock Finite mixture models do not reliably learn the number of components.
\newblock In Meila, M. and Zhang, T., editors, {\em Proceedings of the 38th International Conference on Machine Learning}, volume 139 of {\em Proceedings of Machine Learning Research}, pages 1158--1169. PMLR.

\bibitem[Chan et~al., 2008]{chan2008statistical}
Chan, C., Feng, F., Ottinger, J., Foster, D., West, M., and Kepler, T.~B. (2008).
\newblock Statistical mixture modeling for cell subtype identification in flow cytometry.
\newblock {\em Cytometry Part A: The Journal of the International Society for Analytical Cytology}, 73(8):693--701.

\bibitem[Chandra et~al., 2023]{chandra2023escaping}
Chandra, N.~K., Canale, A., and Dunson, D.~B. (2023).
\newblock Escaping the curse of dimensionality in {B}ayesian model-based clustering.
\newblock {\em Journal of Machine Learning Research}, 24(144):1--42.

\bibitem[Chen, 1995]{chen1995optimal}
Chen, J. (1995).
\newblock Optimal rate of convergence for finite mixture models.
\newblock {\em The Annals of Statistics}, 23(1):221--233.

\bibitem[Crawford, 1994]{crawford1994application}
Crawford, S.~L. (1994).
\newblock An application of the {L}aplace method to finite mixture distributions.
\newblock {\em Journal of the American Statistical Association}, 89(425):259--267.

\bibitem[Crawford et~al., 1992]{crawford1992modeling}
Crawford, S.~L., DeGroot, M.~H., Kadane, J.~B., and Small, M.~J. (1992).
\newblock Modeling lake-chemistry distributions: Approximate {B}ayesian methods for estimating a finite-mixture model.
\newblock {\em Technometrics}, 34(4):441--453.

\bibitem[Dahl et~al., 2022]{dahl2022search}
Dahl, D.~B., Johnson, D.~J., and Müller, P. (2022).
\newblock Search algorithms and loss functions for {B}ayesian clustering.
\newblock {\em Journal of Computational and Graphical Statistics}, 31(4):1189--1201.

\bibitem[Dang et~al., 2023]{dang2023model}
Dang, U.~J., Gallaugher, M.~P., Browne, R.~P., and McNicholas, P.~D. (2023).
\newblock Model-based clustering and classification using mixtures of multivariate skewed power exponential distributions.
\newblock {\em Journal of Classification}, pages 1--23.

\bibitem[De~Iorio et~al., 2023]{de2023bayesian}
De~Iorio, M., Favaro, S., Guglielmi, A., and Ye, L. (2023).
\newblock Bayesian nonparametric mixture modeling for temporal dynamics of gender stereotypes.
\newblock {\em The Annals of Applied Statistics}, 17(3):2256--2278.

\bibitem[Denti et~al., 2023]{denti2023common}
Denti, F., Camerlenghi, F., Guindani, M., and Mira, A. (2023).
\newblock A common atoms model for the {B}ayesian nonparametric analysis of nested data.
\newblock {\em Journal of the American Statistical Association}, 118(541):405--416.

\bibitem[Di~Zio et~al., 2007]{di2007mixture}
Di~Zio, M., Guarnera, U., and Rocci, R. (2007).
\newblock A mixture of mixture models for a classification problem: The unity measure error.
\newblock {\em Computational Statistics \& Data Analysis}, 51(5):2573--2585.

\bibitem[Duan and Dunson, 2021]{duan2021bayesian}
Duan, L.~L. and Dunson, D.~B. (2021).
\newblock Bayesian distance clustering.
\newblock {\em Journal of Machine Learning Research}, 22(224):1--27.

\bibitem[Ferguson, 1973]{ferguson1973bayesian}
Ferguson, T.~S. (1973).
\newblock A {B}ayesian analysis of some nonparametric problems.
\newblock {\em The Annals of Statistics}, 1(2):209--230.

\bibitem[Fraley and Raftery, 2002]{fraley2002model}
Fraley, C. and Raftery, A.~E. (2002).
\newblock Model-based clustering, discriminant analysis, and density estimation.
\newblock {\em Journal of the American Statistical Association}, 97(458):611--631.

\bibitem[Fritsch, 2022]{fritsch2022package}
Fritsch, A. (2022).
\newblock {\em {M}cclust: Process an MCMC sample of clusterings}.
\newblock R package version 1.0.1.

\bibitem[Fritsch and Ickstadt, 2009]{fritsch2009improved}
Fritsch, A. and Ickstadt, K. (2009).
\newblock Improved criteria for clustering based on the posterior similarity matrix.
\newblock {\em Bayesian Analysis}, 4(2):367--391.

\bibitem[Ghosal et~al., 1999]{ghosal1999posterior}
Ghosal, S., Ghosh, J.~K., and Ramamoorthi, R. (1999).
\newblock Posterior consistency of {D}irichlet mixtures in density estimation.
\newblock {\em The Annals of Statistics}, 27(1):143--158.

\bibitem[Ghosal and van~der Vaart, 2007]{ghosal2007posterior}
Ghosal, S. and van~der Vaart, A. (2007).
\newblock {Posterior convergence rates of {D}irichlet mixtures at smooth densities}.
\newblock {\em The Annals of Statistics}, 35(2):697--723.

\bibitem[Gorsky et~al., 2024]{gorsky2023coarsened}
Gorsky, S., Chan, C., and Ma, L. (2024).
\newblock Coarsened mixtures of hierarchical skew normal kernels for flow cytometry analyses.
\newblock {\em Bayesian Analysis}, Forthcoming.

\bibitem[Guha et~al., 2021]{guha2021posterior}
Guha, A., Ho, N., and Nguyen, X. (2021).
\newblock On posterior contraction of parameters and interpretability in {B}ayesian mixture modeling.
\newblock {\em Bernoulli}, 27(4):2159--2188.

\bibitem[Hahsler et~al., 2019]{hahsler2019dbscan}
Hahsler, M., Piekenbrock, M., and Doran, D. (2019).
\newblock dbscan: Fast density-based clustering with {R}.
\newblock {\em Journal of Statistical Software}, 91:1--30.

\bibitem[Hastie et~al., 2009]{hastie2009elements}
Hastie, T., Tibshirani, R., and Friedman, J.~H. (2009).
\newblock {\em The Elements of Statistical Learning: Data Mining, Inference, and Prediction}, volume~2.
\newblock Springer.

\bibitem[Heinrich and Kahn, 2018]{heinrich2018strong}
Heinrich, P. and Kahn, J. (2018).
\newblock Strong identifiability and optimal minimax rates for finite mixture estimation.
\newblock {\em The Annals of Statistics}, 46(6A):2844--2870.

\bibitem[Hennig, 2010]{hennig2010methods}
Hennig, C. (2010).
\newblock Methods for merging {G}aussian mixture components.
\newblock {\em Advances in Data Analysis and Classification}, 4(1):3--34.

\bibitem[Ho and Nguyen, 2016a]{ho2016convergence}
Ho, N. and Nguyen, X. (2016a).
\newblock Convergence rates of parameter estimation for some weakly identifiable finite mixtures.
\newblock {\em The Annals of Statistics}, 44(6):2726--2755.

\bibitem[Ho and Nguyen, 2016b]{ho2016strong}
Ho, N. and Nguyen, X. (2016b).
\newblock On strong identifiability and convergence rates of parameter estimation in finite mixtures.
\newblock {\em Electronic Journal of Statistics}, 10(1):271--307.

\bibitem[Ho et~al., 2020]{ho2020robust}
Ho, N., Nguyen, X., and Ritov, Y. (2020).
\newblock Robust estimation of mixing measures in finite mixture models.
\newblock {\em Bernoulli}, 26(2):828--857.

\bibitem[Hubert and Arabie, 1985]{hubert1985comparing}
Hubert, L. and Arabie, P. (1985).
\newblock Comparing partitions.
\newblock {\em Journal of Classification}, 2(1):193--218.

\bibitem[Ishwaran and James, 2001]{ishwaran2001gibbs}
Ishwaran, H. and James, L.~F. (2001).
\newblock Gibbs sampling methods for stick-breaking priors.
\newblock {\em Journal of the American Statistical Association}, 96(453):161--173.

\bibitem[Ishwaran and James, 2003]{ishwaran2003some}
Ishwaran, H. and James, L.~F. (2003).
\newblock Some further developments for stick-breaking priors: Finite and infinite clustering and classification.
\newblock {\em Sankhyā: The Indian Journal of Statistics (2003-2007)}, 65(3):577--592.

\bibitem[Jain, 2010]{jain2010data}
Jain, A.~K. (2010).
\newblock Data clustering: 50 years beyond {K}-means.
\newblock {\em Pattern Recognition Letters}, 31(8):651--666.

\bibitem[Karlis and Santourian, 2009]{karlis2009model}
Karlis, D. and Santourian, A. (2009).
\newblock Model-based clustering with non-elliptically contoured distributions.
\newblock {\em Statistics and Computing}, 19(1):73--83.

\bibitem[Kiselev et~al., 2019]{kiselev2019challenges}
Kiselev, V.~Y., Andrews, T.~S., and Hemberg, M. (2019).
\newblock Challenges in unsupervised clustering of single-cell {RNA}-seq data.
\newblock {\em Nature Reviews Genetics}, 20(5):273--282.

\bibitem[Krijthe, 2016]{krijthe2016rssl}
Krijthe, J.~H. (2016).
\newblock {RSSL}: R package for semi-supervised learning.
\newblock In {\em Reproducible Research in Pattern Recognition. RRPR 2016. Lecture Notes in Computer Science, vol 10214}, pages 104--115.

\bibitem[Krijthe and Loog, 2015]{krijthe2015implicit}
Krijthe, J.~H. and Loog, M. (2015).
\newblock Implicitly constrained semi-supervised least squares classification.
\newblock In {\em 14th International Symposium on Advances in Intelligent Data Analysis XIV (Lecture Notes in Computer Science Volume 9385)}, pages 158--169.

\bibitem[Lau and Green, 2007]{lau2007bayesian}
Lau, J.~W. and Green, P.~J. (2007).
\newblock Bayesian model-based clustering procedures.
\newblock {\em Journal of Computational and Graphical Statistics}, 16(3):526--558.

\bibitem[Li et~al., 2017]{li2017reference}
Li, H., Courtois, E.~T., Sengupta, D., Tan, Y., Chen, K.~H., Goh, J. J.~L., Kong, S.~L., Chua, C., Hon, L.~K., Tan, W.~S., et~al. (2017).
\newblock Reference component analysis of single-cell transcriptomes elucidates cellular heterogeneity in human colorectal tumors.
\newblock {\em Nature Genetics}, 49(5):708--718.
\newblock PMID: 28319088.

\bibitem[Li, 2005]{li2005clustering}
Li, J. (2005).
\newblock Clustering based on a multilayer mixture model.
\newblock {\em Journal of Computational and Graphical Statistics}, 14(3):547--568.

\bibitem[Lubischew, 1962]{lubischew1962use}
Lubischew, A.~A. (1962).
\newblock On the use of discriminant functions in taxonomy.
\newblock {\em Biometrics}, 18(4):455--477.

\bibitem[Lun et~al., 2016]{lun2016pooling}
Lun, A.~T., Bach, K., and Marioni, J.~C. (2016).
\newblock Pooling across cells to normalize single-cell {RNA} sequencing data with many zero counts.
\newblock {\em Genome Biology}, 17(1):1--14.

\bibitem[Malsiner-Walli et~al., 2017]{malsiner2017identifying}
Malsiner-Walli, G., Fr{\"u}hwirth-Schnatter, S., and Gr{\"u}n, B. (2017).
\newblock Identifying mixtures of mixtures using {B}ayesian estimation.
\newblock {\em Journal of Computational and Graphical Statistics}, 26(2):285--295.

\bibitem[Manole and Ho, 2022]{manole2022refined}
Manole, T. and Ho, N. (2022).
\newblock Refined convergence rates for maximum likelihood estimation under finite mixture models.
\newblock In Chaudhuri, K., Jegelka, S., Song, L., Szepesvari, C., Niu, G., and Sabato, S., editors, {\em Proceedings of the 39th International Conference on Machine Learning}, volume 162 of {\em Proceedings of Machine Learning Research}, pages 14979--15006. PMLR.

\bibitem[Manole and Khalili, 2021]{manole2021estimating}
Manole, T. and Khalili, A. (2021).
\newblock Estimating the number of components in finite mixture models via the group-sort-fuse procedure.
\newblock {\em The Annals of Statistics}, 49(6):3043--3069.

\bibitem[McInnes et~al., 2018]{mcinnes2018}
McInnes, L., Healy, J., Saul, N., and Großberger, L. (2018).
\newblock {UMAP}: Uniform manifold approximation and projection.
\newblock {\em Journal of Open Source Software}, 3(29):861.

\bibitem[Medvedovic and Sivaganesan, 2002]{medvedovic2002bayesian}
Medvedovic, M. and Sivaganesan, S. (2002).
\newblock Bayesian infinite mixture model based clustering of gene expression profiles.
\newblock {\em Bioinformatics}, 18(9):1194--1206.

\bibitem[Meil{\u{a}}, 2007]{meilua2007comparing}
Meil{\u{a}}, M. (2007).
\newblock Comparing clusterings—an information based distance.
\newblock {\em Journal of Multivariate Analysis}, 98(5):873--895.

\bibitem[Melnykov, 2016]{melnykov2016merging}
Melnykov, V. (2016).
\newblock Merging mixture components for clustering through pairwise overlap.
\newblock {\em Journal of Computational and Graphical Statistics}, 25(1):66--90.

\bibitem[Miller and Dunson, 2019]{miller2018robust}
Miller, J.~W. and Dunson, D.~B. (2019).
\newblock Robust {B}ayesian inference via coarsening.
\newblock {\em Journal of the American Statistical Association}, 114(527):1113--1125.

\bibitem[Miller and Harrison, 2014]{miller2014inconsistency}
Miller, J.~W. and Harrison, M.~T. (2014).
\newblock Inconsistency of {P}itman-{Y}or process mixtures for the number of components.
\newblock {\em The Journal of Machine Learning Research}, 15(1):3333--3370.

\bibitem[Miller and Harrison, 2018]{miller2018mixture}
Miller, J.~W. and Harrison, M.~T. (2018).
\newblock Mixture models with a prior on the number of components.
\newblock {\em Journal of the American Statistical Association}, 113(521):340--356.

\bibitem[Nguyen, 2011]{nguyen2011wasserstein}
Nguyen, X. (2011).
\newblock Wasserstein distances for discrete measures and convergence in nonparametric mixture models.
\newblock Technical report, Univ. of Michigan Dept. of Statistics.

\bibitem[Nguyen, 2013]{nguyen2013convergence}
Nguyen, X. (2013).
\newblock Convergence of latent mixing measures in finite and infinite mixture models.
\newblock {\em The Annals of Statistics}, 41(1):370--400.

\bibitem[O’Hagan et~al., 2016]{o2016clustering}
O’Hagan, A., Murphy, T.~B., Gormley, I.~C., McNicholas, P.~D., and Karlis, D. (2016).
\newblock Clustering with the multivariate normal inverse {G}aussian distribution.
\newblock {\em Computational Statistics \& Data Analysis}, 93:18--30.

\bibitem[Paganin et~al., 2021]{paganin2021centered}
Paganin, S., Herring, A.~H., Olshan, A.~F., and Dunson, D.~B. (2021).
\newblock Centered partition processes: Informative priors for clustering (with discussion).
\newblock {\em Bayesian Analysis}, 16(1):301 -- 670.

\bibitem[Rajkowski, 2019]{rajkowski2019analysis}
Rajkowski, {\L}. (2019).
\newblock Analysis of the maximal a posteriori partition in the {G}aussian {D}irichlet process mixture model.
\newblock {\em Bayesian Analysis}, 14(2):477--494.

\bibitem[Rand, 1971]{rand1971objective}
Rand, W.~M. (1971).
\newblock Objective criteria for the evaluation of clustering methods.
\newblock {\em Journal of the American Statistical Association}, 66(336):846--850.

\bibitem[Rastelli and Friel, 2018]{rastelli2018optimal}
Rastelli, R. and Friel, N. (2018).
\newblock Optimal {B}ayesian estimators for latent variable cluster models.
\newblock {\em Statistics and Computing}, 28(6):1169--1186.

\bibitem[Redner and Walker, 1984]{redner1984mixture}
Redner, R.~A. and Walker, H.~F. (1984).
\newblock Mixture densities, maximum likelihood and the {EM} algorithm.
\newblock {\em SIAM Review}, 26(2):195--239.

\bibitem[Rigon et~al., 2023]{rigon2023generalized}
Rigon, T., Herring, A.~H., and Dunson, D.~B. (2023).
\newblock {A generalized {B}ayes framework for probabilistic clustering}.
\newblock {\em Biometrika}, 110(3):559--578.

\bibitem[Rodr{\'\i}guez and Walker, 2014]{rodriguez2014univariate}
Rodr{\'\i}guez, C.~E. and Walker, S.~G. (2014).
\newblock Univariate {B}ayesian nonparametric mixture modeling with unimodal kernels.
\newblock {\em Statistics and Computing}, 24(1):35--49.

\bibitem[Roeder, 1990]{roeder1990density}
Roeder, K. (1990).
\newblock Density estimation with confidence sets exemplified by superclusters and voids in the galaxies.
\newblock {\em Journal of the American Statistical Association}, 85(411):617--624.

\bibitem[Rousseau and Mengersen, 2011]{rousseau2011asymptotic}
Rousseau, J. and Mengersen, K. (2011).
\newblock Asymptotic behaviour of the posterior distribution in overfitted mixture models.
\newblock {\em Journal of the Royal Statistical Society: Series B (Statistical Methodology)}, 73(5):689--710.

\bibitem[Scrucca et~al., 2016]{scrucca2016mclust}
Scrucca, L., Fop, M., Murphy, T.~B., and Raftery, A.~E. (2016).
\newblock {M}clust 5: {C}lustering, classification and density estimation using {G}aussian finite mixture models.
\newblock {\em The {R} Journal}, 8(1):289--317.

\bibitem[Sethuraman, 1994]{sethuraman1994constructive}
Sethuraman, J. (1994).
\newblock A constructive definition of {D}irichlet priors.
\newblock {\em Statistica Sinica}, 4(2):639--650.

\bibitem[Teicher, 1961]{teicher1961identifiability}
Teicher, H. (1961).
\newblock Identifiability of mixtures.
\newblock {\em The Annals of Mathematical Statistics}, 32(1):244--248.

\bibitem[Wade, 2015]{wade2015package}
Wade, S. (2015).
\newblock {\em {M}cclust.ext: Point estimation and credible balls for Bayesian cluster analysis}.
\newblock R package version 1.0.

\bibitem[Wade, 2023]{wade2023bayesian}
Wade, S. (2023).
\newblock Bayesian cluster analysis.
\newblock {\em Philosophical Transactions of the Royal Society A}, 381(2247):20220149.

\bibitem[Wade and Ghahramani, 2018]{wade2018bayesian}
Wade, S. and Ghahramani, Z. (2018).
\newblock Bayesian cluster analysis: Point estimation and credible balls (with discussion).
\newblock {\em Bayesian Analysis}, 13(2):559--626.

\end{thebibliography}
\bibliographystyle{apalike}

\appendix

\section*{Supplementary Material}

The Supplementary Material contains additional simulation studies, proofs for all results, and technical details. Our methodology is implemented in the package \texttt{foldcluster}, available at \url{https://github.com/adombowsky/FOLD}.

\section{Point Estimation Simulation Studies} \label{section:sims-supplementary}

In light of the asymptotic results shown in Section \ref{section:theory}, we examine the effect of increasing $n$ on FOLD and other model-based clustering methods. For a given $f^0$, we repeatedly simulate observations with varying $n \in \lb 100, 500, 1000, 2500 \rb$ for $100$ replications each. We compare FOLD with $\omega = \omega^{\tx{AVG}}$ to clusterings returned by minimizing Binder's loss and the VI loss using the packages \texttt{mcclust} \citep{fritsch2022package} 
and \texttt{mcclust.ext} \citep{wade2015package},
respectively. Along with these Bayesian approaches, we also cluster the data using \texttt{mclust} \citep{scrucca2016mclust}, which groups observations based on the EM algorithm for a Gaussian mixture. 

For each replication and $n$, we run the EM algorithm and fit a Bayesian location-scale GMM,
\begin{align*}
    X_i^0 \mid \bs a, \bs \mu, \bs \Sigma & \sim \sum_{k=1}^{K} a_l \N_2(\tilde \mu_k, \tilde \Sigma_k); \label{eq:gmm} &
    (\tilde \mu_{k}, \tilde \Sigma_k) & \sim \mathcal{NIW}(\mu, \kappa, \nu, \Psi); &
    \bs a & \sim \tx{Dir} \left( \alpha 1_K \right);
\end{align*}
where $1_K$ is the $K$-dimensional vector consisting of ones. We then compute clusterings with FOLD, VI, and Binder's loss using the posterior samples from the Bayesian model. For each clustering, we save the number of clusters and the adjusted Rand index \citep{rand1971objective, hubert1985comparing} with $\bs s^0$. The hyperparameters are set at $K=30$, $\alpha = 1/2$, and $\mu = 0_p$, $\kappa = 1$, $\nu = p+2$, and $\Psi = I$. We run a Gibbs sampler for $9,000$ iterations with a burn-in of $1,000$, then use every third iteration for computing the Bayesian clusterings. For \texttt{mclust}, the number of clusters is automatically selected each replication by minimizing the Bayesian information criterion (BIC). 

\begin{figure}
\centering
\includegraphics[scale=0.9]{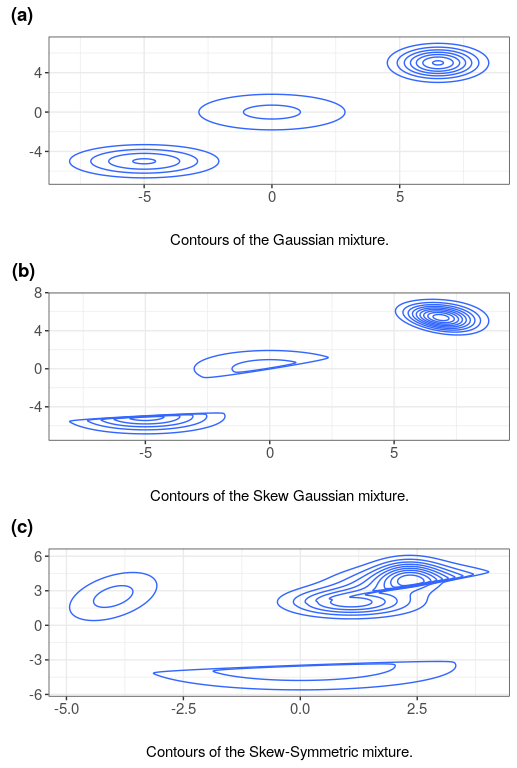}
\caption{Contour plots of the (a) Gaussian mixture, (b) skew Gaussian mixture and (c) skew-symmetric mixture.}
\label{fig:contours}
\end{figure}

\begin{table}[ht!]
\centering
\footnotesize
\begin{tabular}{lcccccc}
\toprule
 &  $n$  & FOLD    & Oracle FOLD     & VI     & Binder's      & Mclust       \\ \midrule
No. of Clusters & 100  & 2.93 (0.383) & 3.00 (0.00) & 8.02 (5.714) & 19.76 (0.818) & 3.10 (0.414) \\ 
& 500  & 3.06 (0.278) & 3.00 (0.00) & 3.53 (0.926)  & 19.89 (0.852) & 3.00 (0.00) \\ 
& 1000 & 3.05 (0.219) & 3.00 (0.00) & 3.30 (0.541)  & 19.96 (0.400) & 3.01 (0.100) \\ 
& 2500 & 3.05 (0.219) & 3.00 (0.00) & 3.08 (0.273)  & 20.00 (0.00) & 3.00 (0.00) \\ \hline
Adj. Rand Index & 100  & 0.903 (0.087) & 0.985 (0.021) & 0.852 (0.088) & 0.822 (0.061) &  0.962 (0.052) \\ 
& 500  & 0.979 (0.013) & 0.998 (0.008) & 0.974 (0.014) & 0.947 (0.012) & 0.987 (0.008) \\ 
& 1000 & 0.985 (0.006) & 0.987 (0.006) & 0.982 (0.007) & 0.968 (0.007) & 0.985 (0.024)\\ 
& 2500 & 0.986 (0.004) & 0.987 (0.004) & 0.984 (0.004) & 0.979 (0.005) & 0.987 (0.004)\\ \bottomrule 
\end{tabular}
\caption{Averages and standard deviations (in parentheses) for the number of clusters and adjusted Rand index with $\bs s^0$ on 100 replications from a mixture of bivariate Gaussian kernels.}
\label{table:symmtab}
\end{table}

\begin{table}[ht!]
\centering
\footnotesize
\begin{tabular}{lccccc}
\toprule
 & $n$  & FOLD         & VI           & Binder's      & Mclust       \\ \midrule
No. of Clusters & 100  & 3.02 (0.141) & 3.16 (0.395) & 13.84 (6.080) & 3.14 (0.377) \\ 
& 500  & 3.03 (0.171) & 3.42 (0.755) & 6.15 (2.258)  & 3.23 (0.529) \\ 
& 1000 & 3.20 (0.449) & 3.53 (0.881) & 9.32 (3.816)  & 4.33 (1.429) \\ 
& 2500 & 3.19 (0.443) & 3.40 (0.682) & 8.93 (2.409)  & 7.99 (0.959) \\ \hline
Adj. Rand Index & 100  & 0.992 (0.016) & 0.987 (0.021) & 0.915 (0.043) & 0.989 (0.030) \\ 
& 500  & 0.999 (0.003) & 0.998 (0.003) & 0.992 (0.006) & 0.980 (0.049) \\ 
& 1000 & 0.999 (0.002) & 0.997 (0.007) & 0.990 (0.009) & 0.866 (0.149) \\ 
& 2500 & 0.999 (0.001) & 0.990 (0.020) & 0.952 (0.014) & 0.576 (0.071) \\ \bottomrule 
\end{tabular}
\caption{Averages and standard deviations (in parentheses) for the number of clusters and adjusted Rand index with $\bs s^0$ on 100 replications from a mixture of bivariate skew Gaussian kernels.}
\label{table:skewtab}
\end{table}

\begin{table}[ht!]
\centering
\footnotesize
\begin{tabular}{lccccc}
\toprule
 & $n$  & FOLD         & VI           & Binder's      & Mclust       \\ \midrule
No. of Clusters & 100  & 3.25 (0.500) & 3.89 (1.014) & 12.62 (4.292) & 3.56 (0.925) \\ 
& 500  & 3.23 (0.601) & 3.64 (0.894) & 13.94 (4.552)  & 4.17 (0.779) \\ 
& 1000 & 3.10 (0.302) & 4.83 (1.025) & 15.73 (3.795)  & 5.08 (0.442) \\ 
& 2500 & 3.10 (0.302) & 4.58 (0.843) & 18.99 (2.634)  & 6.20 (0.876) \\ \hline
Adj. Rand Index & 100  & 0.982 (0.029) & 0.974 (0.026) & 0.800 (0.094) & 0.884 (0.177) \\ 
& 500  & 0.995 (0.032) & 0.921 (0.136) & 0.799 (0.147) & 0.715 (0.159) \\ 
& 1000 & 0.975 (0.085) & 0.691 (0.057) & 0.665 (0.037) & 0.574 (0.054) \\ 
& 2500 & 0.967 (0.098) & 0.679 (0.012) & 0.639 (0.027) & 0.512 (0.056) \\ \bottomrule 
\end{tabular}
\caption{Averages and standard deviations (in parentheses) for the number of clusters and adjusted Rand index with $\bs s^0$ on 100 replications from the skew-symmetric mixture.}
\label{table:skewsymmtab}
\end{table}

\begin{figure}[ht!]
    \centering
    \includegraphics[scale=0.31]{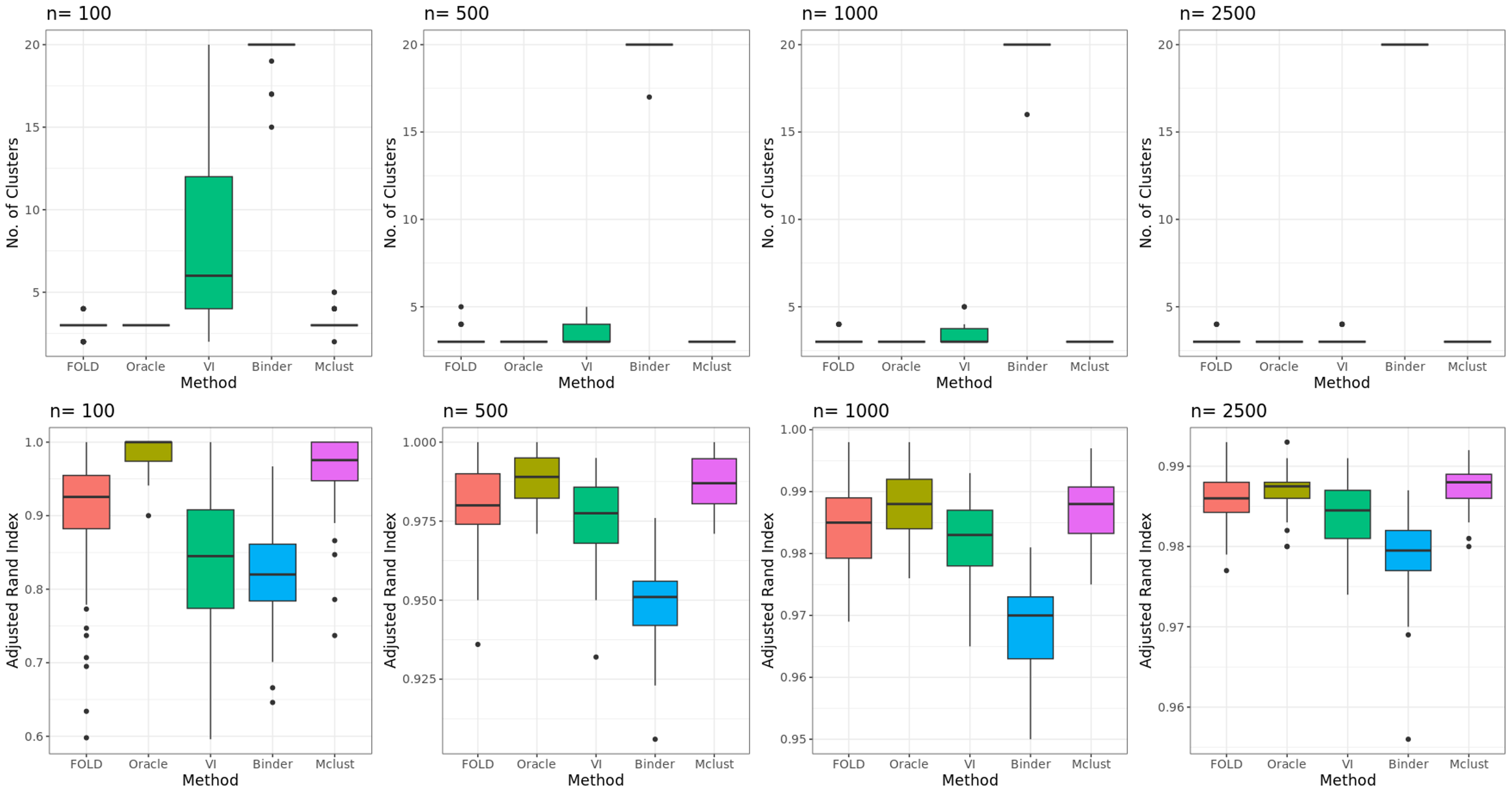}
    \caption{Comparison of the number of clusters and the adjusted Rand index with $\bs s^0$ on $100$ replications from a mixture of well-separated multivariate Gaussian kernels.}
    \label{fig:symmbox}
\end{figure}

\begin{figure}[ht!]
    \centering
    \includegraphics[scale=0.31]{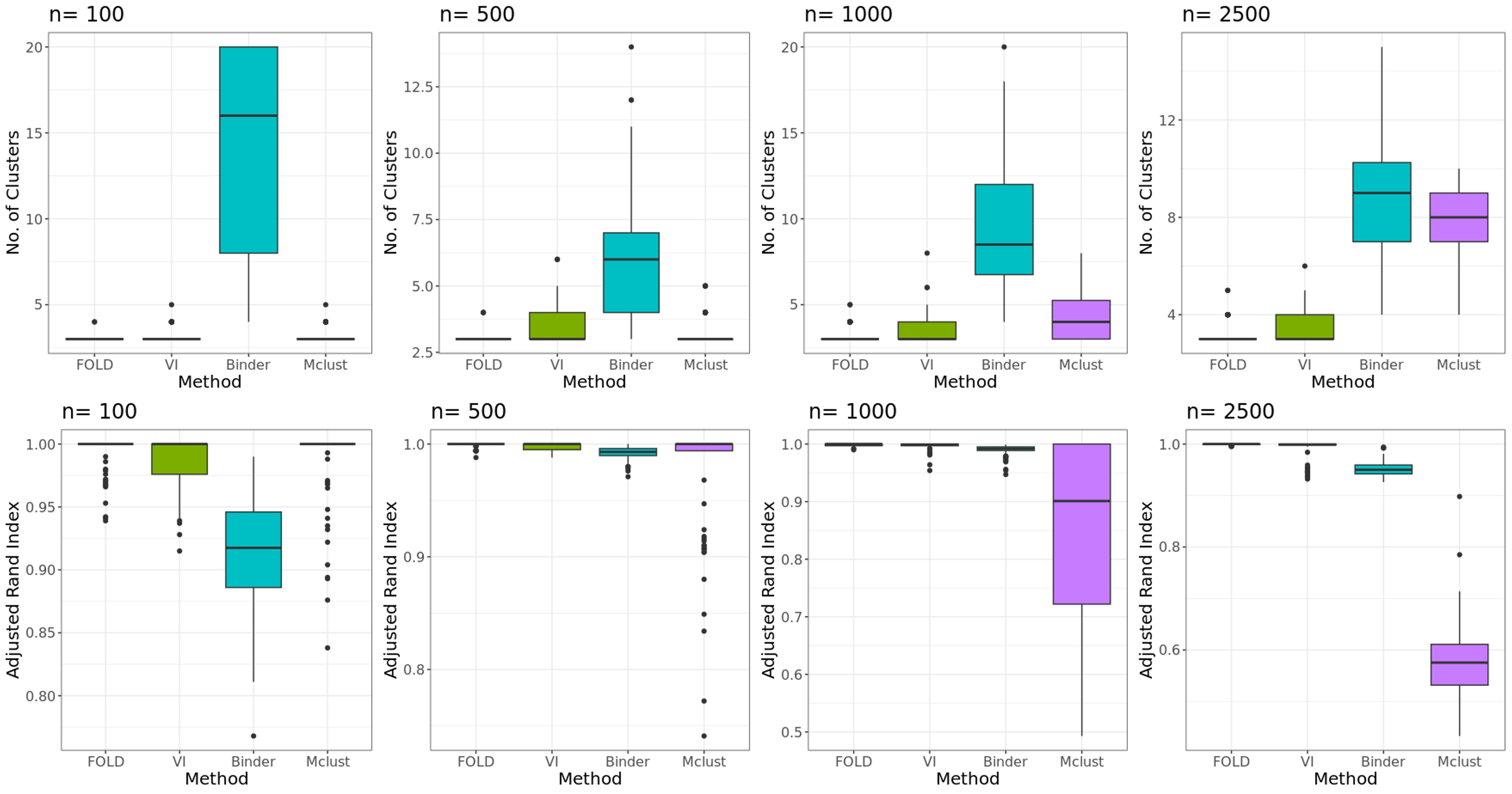}
    \caption{Comparison of the number of clusters and the adjusted Rand index with $\bs s^0$ on $100$ replications from a mixture of multivariate skew Gaussian distributions.}
    \label{fig:skewnorm}
\end{figure}

\begin{figure}[ht!]
    \centering
    \includegraphics[scale=0.31]{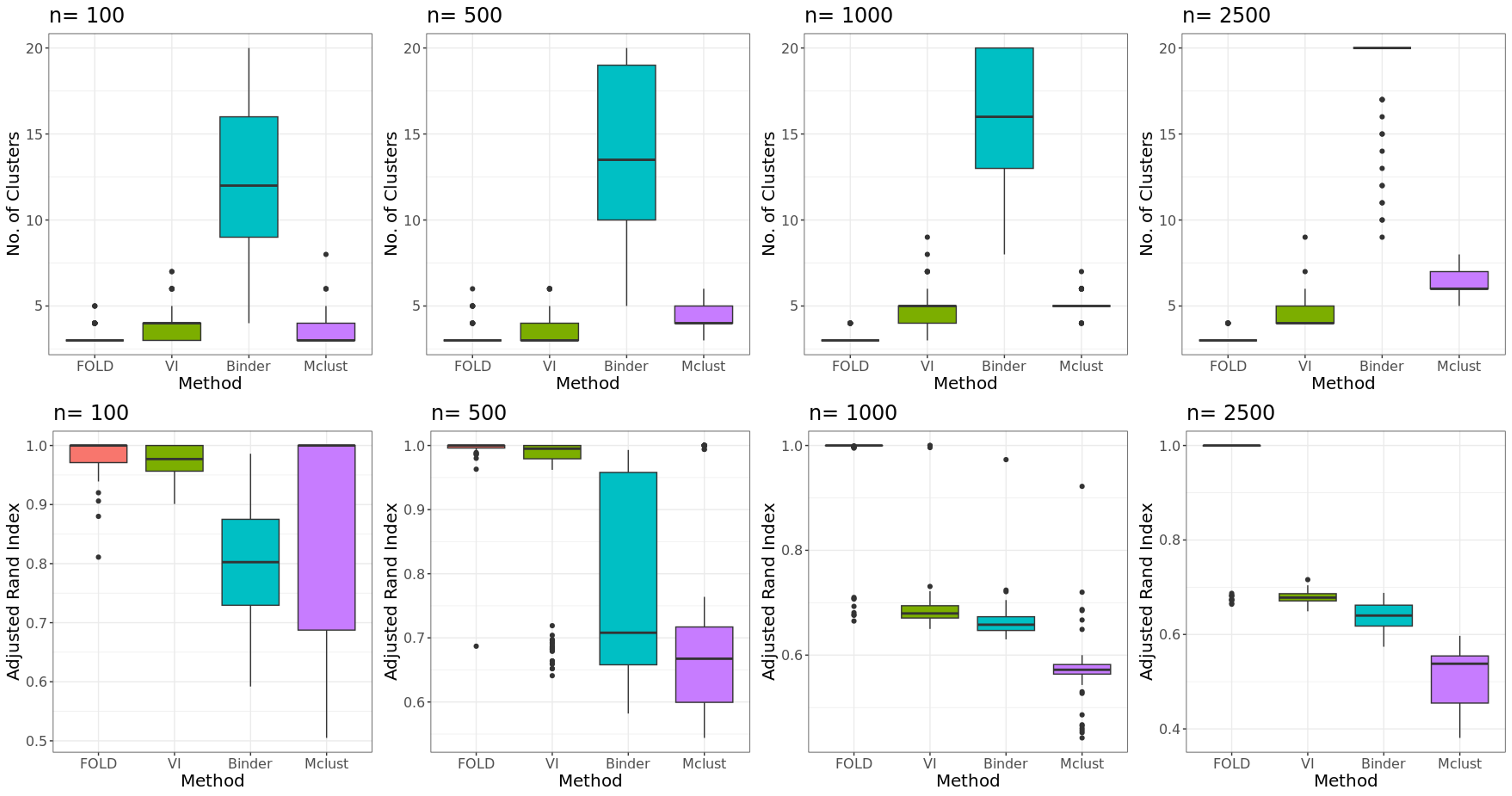}
    \caption{Comparison of the number of clusters and the adjusted Rand index with $\bs s^0$ on $100$ replications from the skew-symmetric mixture.}
    \label{fig:skewsymmbox}
\end{figure}

In the first scenario,
$f^0(x) = \sum_{m=1}^{3}a_m^0 \N_2(x; \mu_{m}^0, \Sigma_{m}^0)$, with weights $\bs a^0=(0.45,0.25,0.3)$ and atoms $\mu_{1}^0 = (6.5,5)$, $\Sigma_{1}^0 = I$, $\mu_{2}^0 = (0,0)$, $\Sigma_{2}^0 = \tx{diag}(5,2)$, and $\mu_{3}^0 = (-5,-5)$, $\Sigma_{3}^0 = \tx{diag}(3,1)$. As shown in Figure \ref{fig:contours}(a), the three kernels are well separated. The averages and standard deviations for the number of clusters and adjusted Rand index are given in Table \ref{table:symmtab}, with boxplots of these benchmarks displayed in Figure \ref{fig:symmbox}. FOLD, oracle FOLD, and \texttt{mclust} achieve high adjusted Rand index with increasing sample size. Oracle FOLD is perfect at computing the true number of clusters, and  \texttt{mclust} and FOLD are also excellent. Furthermore, the number of clusters and adjusted Rand index for $\cFOLD$ gradually approach the values produced by $\cFOLD^{*}$. The VI loss improves in the number of clusters as $n$ increases, but Binder's loss falls short, consistently producing $20$ clusters with diminishing variation across replications. VI and Binder's loss produce high adjusted Rand indices that improve with $n$.

Next, we let  $f^0(x) = \sum_{m=1}^3 a_m^0 \mathcal{SN}_2(x; \mu_{m}^0, \Sigma_{m}^0, \psi_{m}^0)$,
where $\mathcal{SN}_2(x; \cdot, \cdot, \cdot)$ denotes the PDF of a bivariate skew Gaussian distribution \citep{azzalini1996multivariate, azzalini1999statistical} and $\bs a^0 =(0.45,0.25,0.3).$ These kernels have the same location and scale parameters as the bivariate Gaussian mixture in simulation case 1, and skewness parameters $\psi_{1}^0 = (1,1)$, $\psi_{2}^0 = (-10, 15)$, and $\psi_{3}^0 = (4, -17)$. Figure \ref{fig:contours}(b) shows a contour plot of $f^0$. Results across the replications are reported in Figure \ref{fig:skewnorm} and Table \ref{table:skewtab}. For all values of $n$, FOLD attains high accuracy with a small number of clusters. As $n$ increases, all four clustering methods achieve high levels of the adjusted Rand index, though FOLD results in values close to $1$ while the VI loss and Binder's loss decline slightly. FOLD tends to produce between $3$ and $4$ clusters for each $n$ and reports less than or equal to the number of clusters induced by the VI loss in $94.5\%$ of all instances.

In the final simulation case, we take $f^0$ to be a bivariate mixture with weights $\bs a^0 = (0.55, 0.3, 0.15)$, and kernels
\begin{gather*}
    g_{1}^0 = 0.364 \cdot \mathcal{SN}_p\left( \begin{pmatrix}
        2.50 \\
        3.50
    \end{pmatrix},
    I,
    \begin{pmatrix}
        -10 \\
        15
        \end{pmatrix}
    \right)
    \\
     +
        0.212 \cdot \mathcal{N}_p\left( \begin{pmatrix}
        2.325 \\
        4.381
    \end{pmatrix},
    \tx{diag}(0.20, 0.80) \right)  \\ +
   0.424 \cdot \N_p \left(
    \begin{pmatrix}
        1.085 \\
        2.009
        \end{pmatrix},
    \tx{diag}(0.70, 0.60) \right);
    \\ g_{2}^0 = \mathcal{SN}_p\left( \begin{pmatrix}
        0 \\
        -3.50
    \end{pmatrix},
    \tx{diag}(5,2),
    \begin{pmatrix}
        4 \\
        -17
        \end{pmatrix}
    \right); \\
    g_{3}^0 = \mathcal{N}_p\left( \begin{pmatrix}
        -4 \\
        -2.50
    \end{pmatrix},
    \begin{pmatrix}
        0.50 & 0.50 \\
        0.50 & 2.50
    \end{pmatrix} \right).
\end{gather*}
$g_1^0$ is a mixture of one skew Gaussian kernel and two Gaussian kernels, $g_2^0$ is a skew Gaussian kernel, and $g_3^0$ is a Gaussian kernel. We refer to $f^0$ in this case as a skew-symmetric mixture. As can be seen in Figure \ref{fig:contours}(c), $g_1^0$ is a non-Gaussian, multimodal density. Similarly, data generated from $g_2^0$, the oblong kernel in the lower third of the sample space, will most likely be approximated by multiple components, despite only constituting one cluster. The results of the simulations are summarized in Table \ref{table:skewsymmtab} and Figure \ref{fig:skewsymmbox}. FOLD generally allocates observations to $3$ clusters, but the VI loss favors between $3$ and $6$ clusters. In $96.5\%$ of the replicates, the number of clusters for FOLD is less than or equal to the number for VI. As in simulation case 2, Binder's loss and \texttt{mclust} frequently return a larger number of clusters than the truth for larger $n$. The adjusted Rand index between $\cFOLD$ and $\bs s^0$ is close to $1$ across all sample sizes. In contrast, the three other methods achieve high adjusted Rand index for small $n$, but these values sharply drop for $n\geq 1000$. This is usually the result of splitting $g_1^0$ or $g_2^0$ into multiple clusters. 

\section{Illustration of Theorem \ref{thm:consistency}: Location GMMs} \label{section:illustration}
We now empirically validate the consistency result in Theorem \ref{thm:consistency} with a simple example. We set $M^0 = M^* = 4$ and $f^0 = f^* = \sum_{m=1}^4 (1/4)\N_2(\mu_m^0, (1/4)I)$, where $\mu_1^0 = (1,1)$, $\mu_2^0 = (1.75,1.75)$, $\mu_3^0 = (-1.75, -1.75)$, and $\mu_4^0 = (-1,-1)$, then fit the following location Gaussian model with $K=4$ components,
\begin{align} \label{eq:illustration-fitted-model}
    X_i & \sim \N_2(\theta_i, (1/4)I); 
    & \tilde \theta_k & \sim \N_2(0, I); 
    & \bs a & \sim \tx{Dir}(1/4, \dots, 1/4).
\end{align}
We vary $n \in \lb 50, 100, 500, 1000 \rb$, then fit a Gibbs sampler to extract samples from $\pi(\bs \theta, \bs s \mid \X)$. We implement FOLD, oracle FOLD, Binder's loss, and Binder's oracle loss using SALSO to minimize the risk functions, and we set $d(\cdot, \cdot)$ to be the Hellinger distance. For FOLD, we set $\omega = \omega^{\tx{AVG}}$ and for the FOLD oracle, we take $\omega = \omega^{*\tx{AVG}}$, which is constructed by setting $\gamma = \bar{\Delta}^*$ in the oracle risk. For Binder's loss, we take $\omega=1$, matching the choice used in packages such as \texttt{mcclust}. Figure \ref{fig:risk-convergence} shows $\mathcal R(\chat)$ and $\mathcal R^*(\chat)$ for increasing $n$, where $\chat$ are taken to be clusterings generated by average linkage hierarchical clustering on $\Delta$. We can see that as $n$ grows, $\mathcal R(\chat) \to \mathcal R^*(\chat)$. In all plots, the minimum risk over the candidates is attained at $2$ clusters, though for larger $n$ the risk of $3$ clusters becomes similar to that of $2$ clusters. In addition, the point estimates $\cFOLD$ and $\cFOLD^*$ are very similar as well even for smaller sample size, with adjusted Rand indicies between the two partitions being $0.98$ for $n=50$, then $1.0$ for $n=100,500,1000$. At $n=50$, $\cFOLD$ has $3$ clusters, but for $n>50$ both $\cFOLD$ and $\cFOLD^*$ have exactly $2$ clusters. Figure \ref{fig:large-sample-clusterings} shows point estimates for $\cFOLD$ and $\cFOLD^*$ along with $\bs c_{\tx{B}}$ and $\bs c_{\tx{B}}^*$ at the largest sample size. Binder's loss and its oracle prefer clusterings that approximately correspond to the original $4$ Gaussian components, while FOLD and its oracle prefer to merge these. Additionally, several small clusters are present in Binder's clusterings which seem to exist in the tails of the Gaussian components, leading to $5$ clusters in $\bs c_{\tx{B}}$ and $7$ clusters in $\bs c_{\tx{B}}^*$. 

\begin{figure}
    \centering
    \includegraphics[scale=0.57]{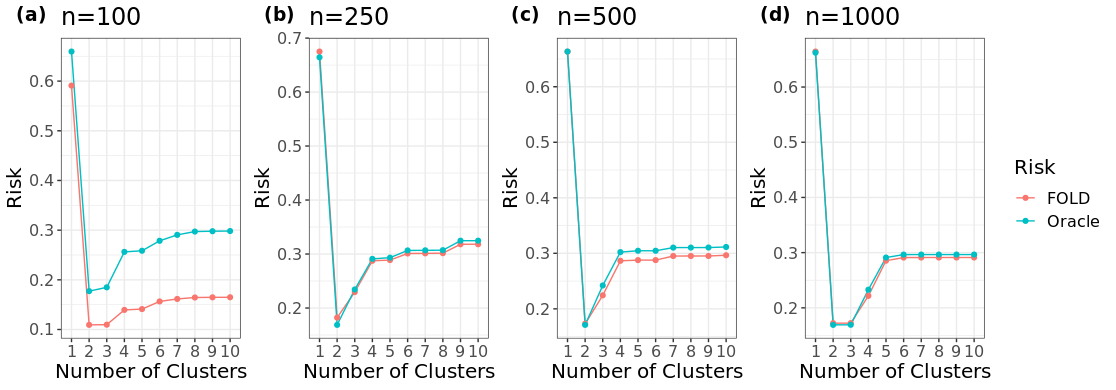}
    \caption{$\mathcal{R}(\chat)$ in red and $\mathcal{R}^*(\chat)$ in green over a set of partitions produced by average linkage hierarchical clustering on $\Delta$ and ordered by the number of clusters for varying $n \in \lb100, 250, 500, 1000 \rb$. As $n$ grows, the two risk functions and their minimizers begin to closesly resemble each other.}
    \label{fig:risk-convergence}
\end{figure}

\begin{figure}
    \centering
    \includegraphics[scale=0.65]{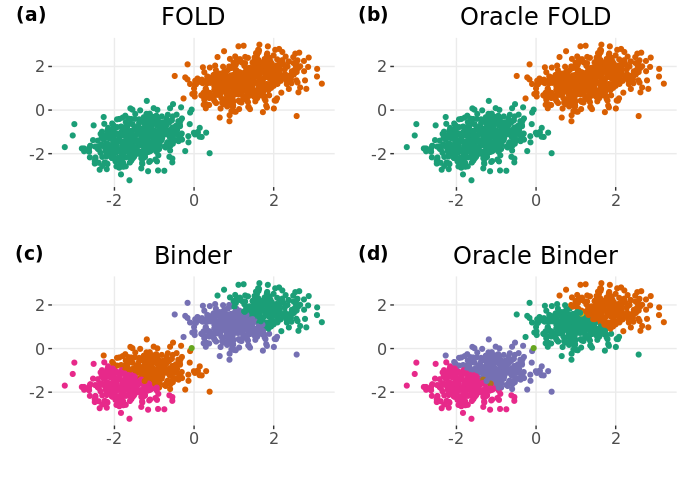}
    \caption{$\cFOLD$, $\cFOLD^*$, $\bs c_{\tx{B}}$, and $\bs c_{\tx{B}}^*$ for $n=1000$ simulated data points from a mixture of $4$ Gaussian kernels. The number of clusters for both $\cFOLD$ and $\cFOLD^*$ is $2$, whereas for $\bs c_{\tx{B}}$ there are $5$ clusters and for $\bs c_{\tx{B}}^*$ there are $7$ clusters.}
    \label{fig:large-sample-clusterings}
\end{figure}

To better understand the results in Figure \ref{fig:large-sample-clusterings}, note that the Hellinger distance between the kernels in $f^*$ is directly related to the Euclidean distance between their centers,
\begin{gather*}
    1- d^2 \lb \N_2(\cdot; \theta_m^0, 1/4), \N_2(\cdot; \theta_{m^\prime}^0,1/4) \rb \propto \exp \lb - \frac{1}{2} \norm{\theta_m^0 - \theta_{m^\prime}^0}^2 \rb.
\end{gather*}
We have deliberately chosen the true component means so that $\theta_1^0$ and $\theta_2^0$ are close, and $\theta_3^0$ and $\theta_4^0$ are close. This means that, despite the fact that we originally sampled from $4$ mixture components, the scatterplot of the data in Figure \ref{fig:large-sample-clusterings} resembles $2$ ovular clusters rather than $4$ spherical ones. $\cFOLD^*$ creates ovular clusters because it is less reliant on the fitted model in \eqref{eq:illustration-fitted-model} than Binder's loss. For example, if $X_i^0$ is in a high density region of $\N_2(\cdot; \theta_1^0,1/4)$ and $X_j^0$ is in a high density region of $\N_2(\cdot; \theta_2^0,1/4)$, then the Hellinger distance terms in \eqref{eq:rvns-delta-integral} shrink the value of $\Delta_{ij}^*$ close to $0$, ultimately promoting co-clustering of objects $i$ and $j$. In contrast, Binder's oracle is more reliant on assigning the observations to mixture components and would most likely place $X_i^0$ and $X_j^0$ in different clusters. 

\section{Application to Benchmark Datasets}

\begin{figure}[ht]
    \centering
    \includegraphics[scale=0.25]{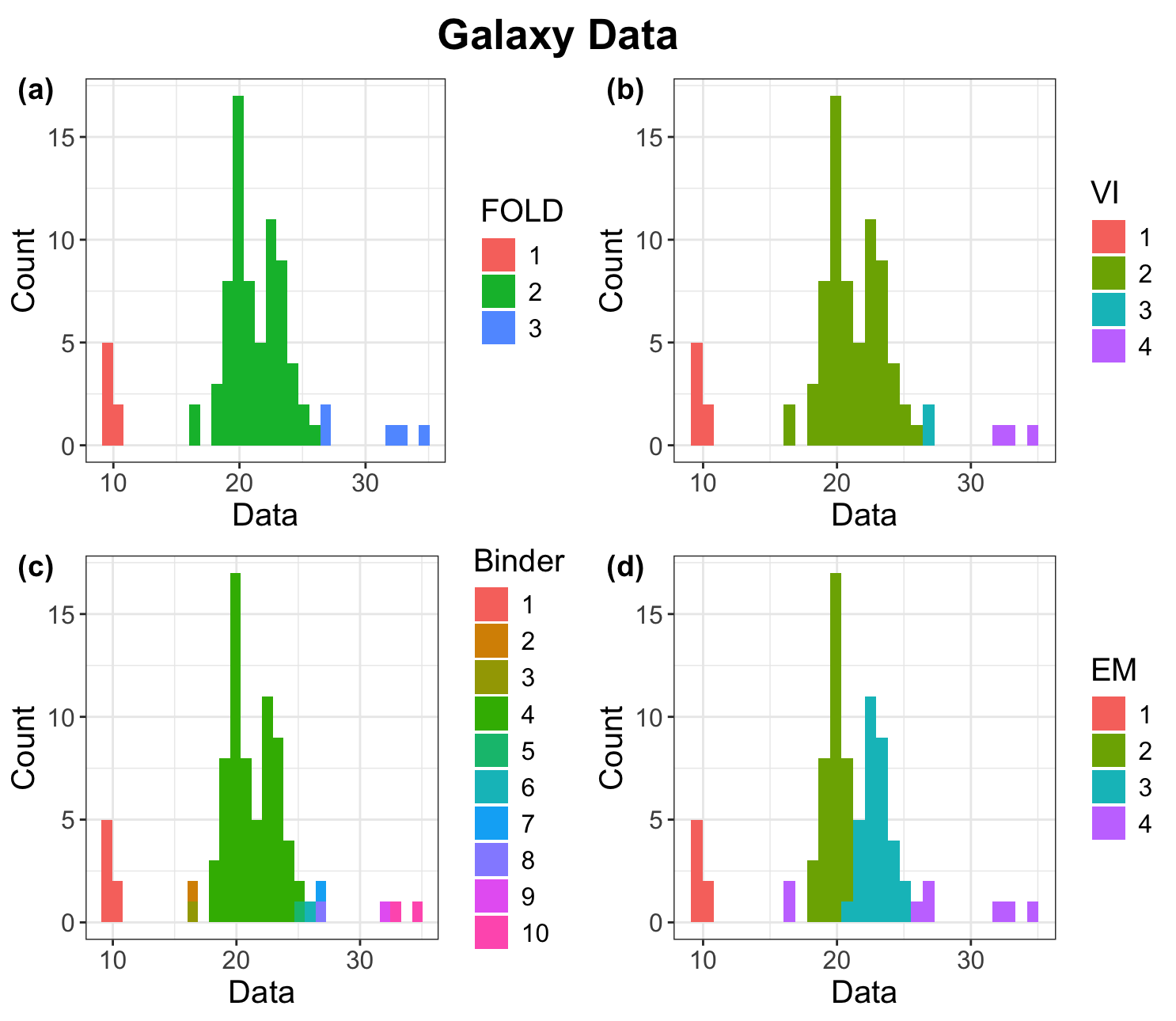}
    \caption{Model-based clusterings of the galaxy dataset.}
    \label{fig:galaxy-clusterings}
\end{figure}

\begin{figure}[ht]
    \centering
    \includegraphics[scale=0.25]{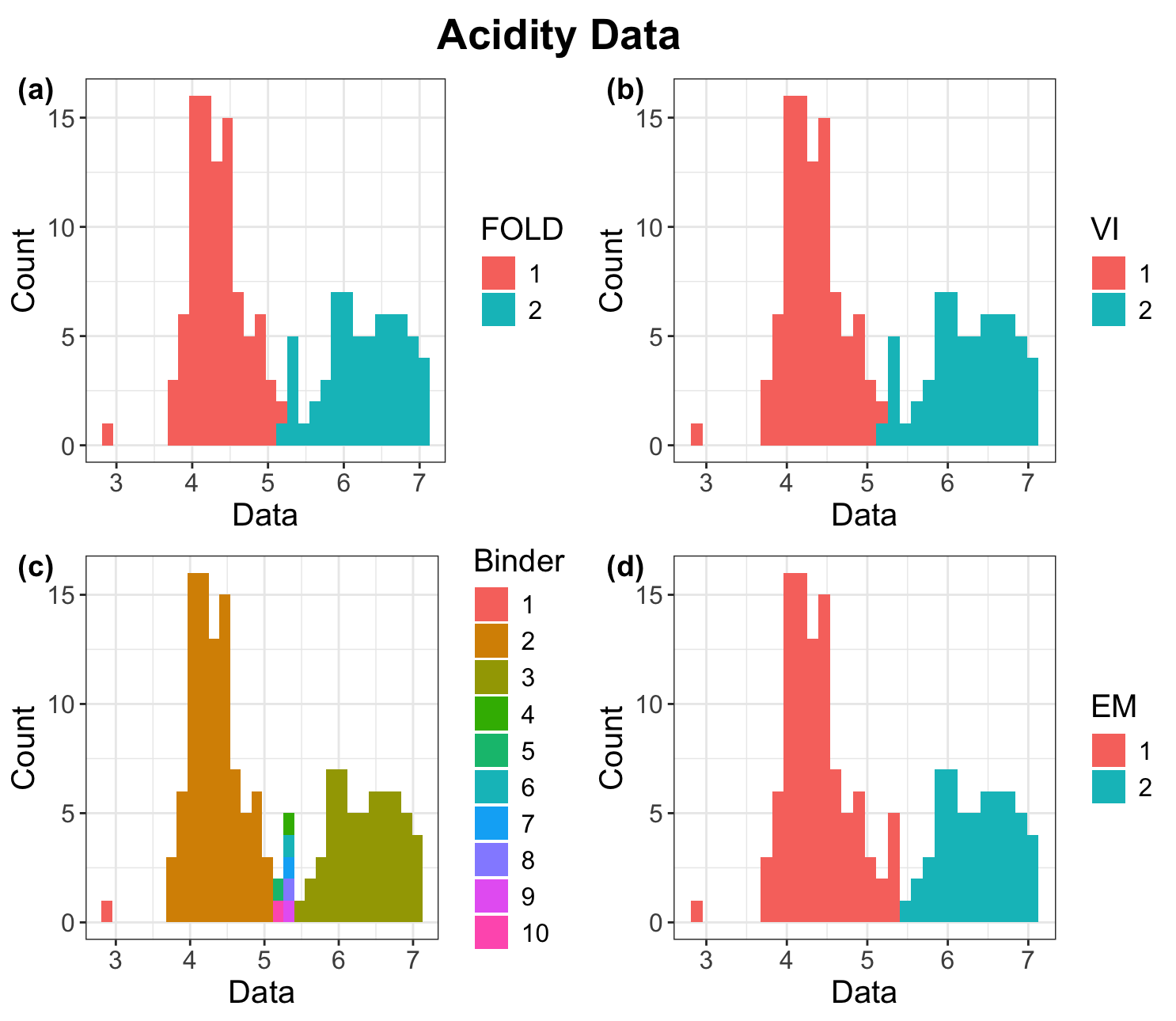}
    \caption{Model-based clusterings of the acidity dataset.}
    \label{fig:acidity-clusterings}
\end{figure}

\begin{figure}[ht]
    \centering
    \includegraphics[scale=0.25]{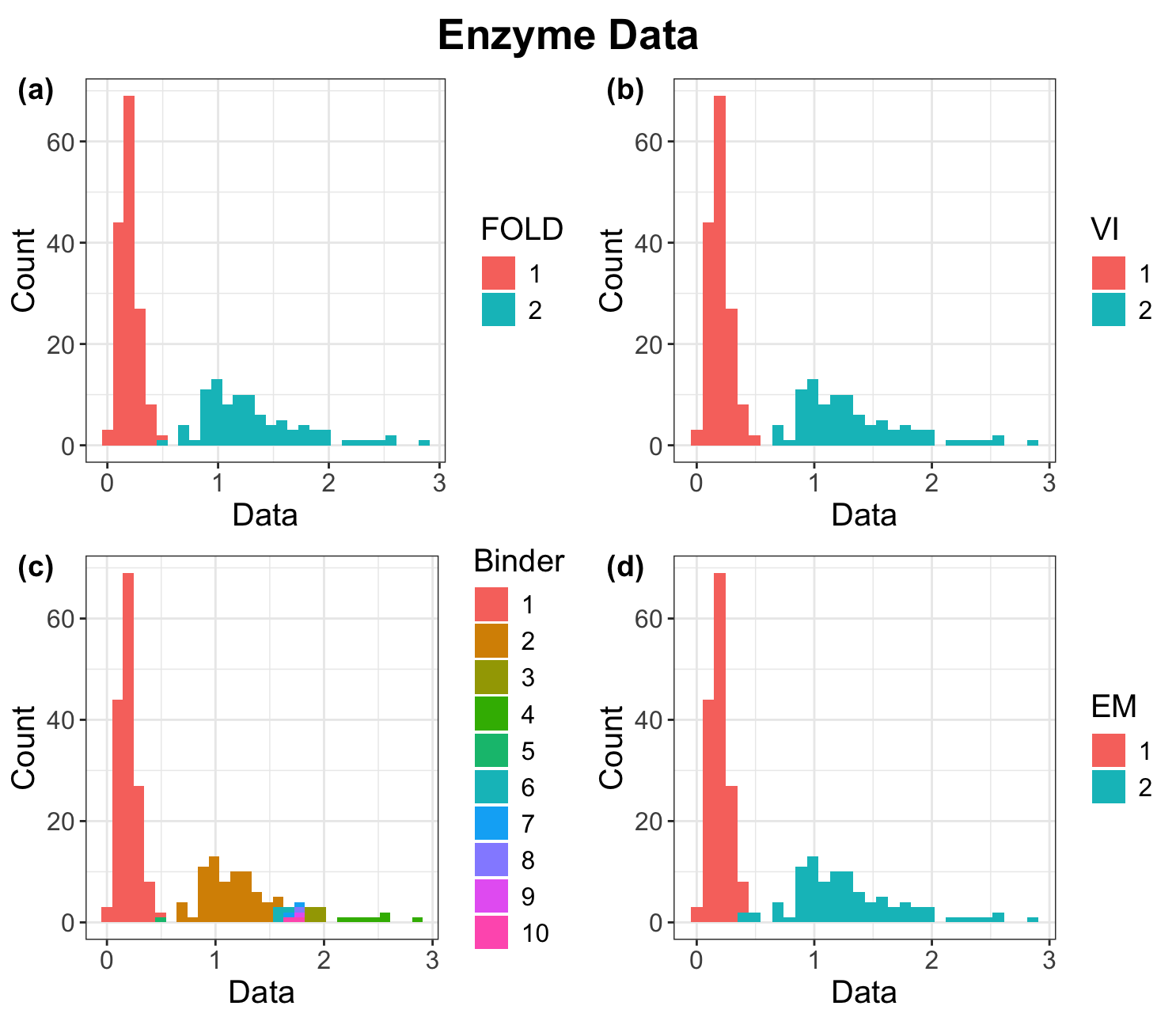}
    \caption{Model-based clusterings of the enzyme dataset.}
    \label{fig:enzyme-clusterings}
\end{figure}

\begin{table}[ht]
\footnotesize
\centering
\begin{tabular}{rrrrrrrrrr}
  \toprule
 Dataset & Metric & FOLD & VI & Binder's & EM & K-Means & HCA & DBSCAN & Spectral \\ 
  \midrule
  \multirow{2}{*}{Iris} & $\hat K$ & 2 &   2 &   2 &   2 &   2 &   2 &   3 &   2 \\ 
  & ARI & 1.000 & 1.000 & 1.000 & 1.000 & 0.920 & 1.000 & 0.859 & 1.000 \\  
  \midrule
   \multirow{2}{*}{Flea-Beetles} & $\hat K$ & 3 &  10 &  10 &   3 &   3 &   3 &   5 &   5 \\ 
  & ARI & 0.917 & 0.807 & 0.807 & 0.958 & 0.371 & 0.635 & 0.069 & 0.601 \\
  \midrule
   \multirow{2}{*}{Wine} &$\hat K$  & 3 &   5 &  10 &   3 &   3 &   3 &   3 &   4 \\ 
  & ARI & 0.912 & 0.893 & 0.806 & 0.829 & 0.895 & 0.754 & 0.439 & 0.922 \\  
   \bottomrule
\end{tabular}
\caption{Comparison of FOLD to competitors on the iris, flea-beetles, and wine datasets based on the number of clusters and ARI.}
\label{table:benchmarks}
\end{table}

In addition to the cell line application, we apply FOLD and other clustering methods to multiple benchmark datasets. First, we focus on three univariate examples that have been routinely used to evaluate model-based clustering procedures: the galaxy \citep{roeder1990density}, acidity \citep{crawford1992modeling,crawford1994application}, and enzyme \citep{bechtel1993population} datasets. For all examples, we fit FOLD, VI, Binder's loss, and EM, then compute the clustering point estimates. We center and scale the datasets to have mean $0$ and unit standard deviation. The three Bayesian methods are implemented with the model $X_i \sim \N(\mu_i, \sigma_i^2)$, $K=10$, $(\tilde \mu_k, \tilde \sigma_k^2) \sim \mathcal{NIG}(0,1,1,1)$, and $\bs a \sim \tx{Dir}(1/2, \dots, 1/2)$. The loss parameter for FOLD is selected with an elbow plot and the number of clusters in the EM algorithm is chosen by minimizing BIC. 

Figure \ref{fig:galaxy-clusterings} displays the point estimates for the galaxy data. FOLD selects $3$ clusters, whereas VI prefers $4$, Binder's loss prefers $10$, and the EM algorithm prefers $4$. FOLD favors coarse, non-Gaussian shaped clusters, while VI, Binder's loss, and the EM algorithm split the FOLD clusters into multiple sub-clusters. In particular, the EM algorithm divides the multi-modal group of objects in the center of the sample space into two Gaussian clusters. Similarly, VI and Binder's loss tend to split the clusters in the left and right tails of the data. FOLD, VI, and the EM algorithm perform similarly when applied to the acidity dataset (Figure \ref{fig:acidity-clusterings}), though there is some disagreement between the methods on how to allocate objects that are between the two clusters. Binder's loss is also sensitive to this ambiguity as it splits the data into $10$ clusters. Finally, FOLD, VI, and EM result in similar point estimates for the enzyme data (Figure \ref{fig:enzyme-clusterings}), with some disagreement on objects near the boundary of each cluster, whereas Binder's loss splits the non-Gaussian cluster on the right hand side of the dataset into several sub-clusters.

Next, we compare FOLD to both model-based and algorithmic clustering methods on the iris, flea-beetles \citep{lubischew1962use}, and wine \citep{winedataset} datasets. In comparison to the galaxy, acidity, and enzyme data, these examples have ground-truth class labels that correspond to separated clusters in the sample space. Hence, we can evaluate the accuracy of FOLD and other approaches on capturing a well-defined, true grouping of the observations. First, we center and scale the datasets. For the iris dataset, we merge the \textit{Iris virginica} and \textit{Iris versicolor} species into one true cluster, as there is no separation between these species, meaning that $M^0=2$. The flea-beetles dataset comprises three species (\textit{concinna}, \textit{heikertingeri}, and \textit{heptapotamica}), meaning $M^0=3$. The wines in the wine dataset are derived from three cultivars, hence $M^0=3$. Finally, for the wine example, we cluster the projection of the original dataset onto the first two principal components. The Bayesian methods fit the mixture $X_i \sim \N_2(\mu_i, \Sigma_i)$, $K=50$, $(\tilde \mu_k, \tilde \Sigma_k) \sim \mathcal{NIW}(0_2, 1, p+2, I)$, and $\bs a \sim \tx{Dir}(1/2, \dots, 1/2).$ All hyperparameters are chosen in the same manner as the application to the cell line dataset (see Section \ref{subsect:cells-hyperpars}), with the exception that we set \texttt{minPts}=$5$ when implementing DBSCAN.

The number of clusters $(\hat K)$ and ARI with the true labels are displayed in Table \ref{table:benchmarks}. In the iris dataset, all methods except K-means and DBSCAN result in the true partition of the flowers. FOLD and the EM algorithm excel when applied to the flea-beetles data, with both resulting in the true number of species and attaining high ARIs. Finally, FOLD performs particularly well on the wine dataset, scoring the second highest ARI (0.912), while the method with higher ARI (spectral clustering) overestimates the number of cultivars. In both the flea-beetles and wine examples, the FOLD point estimate has fewer clusters than the point estimates of both VI and Binder's, which are computed using the same MCMC samples as FOLD.

\section{Basic Properties of FOLD}

\subsection{Proof of Proposition \ref{prop:Lg}} \label{proof:prop1}
\begin{proof}
    We proceed as in \cite{dahl2022search} and decompose the loss function into separate terms that relate to the contingency table between $\widehat{\bs c}$ and $\bs s$. For any $k,k^\prime \in [K]$, let $\eta_{kk^\prime} = d(g(\cdot; \tilde \theta_k), g(\cdot; \tilde \theta_{k^\prime}))$. For index sets $\hat{C}_{h}$ and $S_k$, set $n_{h \cdot} = |\hat{C}_h|$, $n_{\cdot k} = |S_k|$, and $n_{hk} = |\hat{C}_h \cap S_k|$.
\begin{gather*}
    \Lc =  \sum_{i<j} \lb \textbf{1}_{\hat{c}_i = \hat{c}_j} \mathcal{D}_{ij} + \omega \textbf{1}_{\hat{c}_i \neq \hat{c}_j} \left(  1 - \mathcal{D}_{ij} \right) \rb \\
    = \sum_{i<j} \lb \textbf{1}_{\hat{c}_i = \hat{c}_j} \left( 1 - \left( 1 - \mathcal{D}_{ij}\right) \right) + \omega \left( 1 - \textbf{1}_{\hat{c}_i = \hat{c}_j} \right) \left(  1 - \mathcal{D}_{ij} \right)  \rb \\
    = \sum_{i < j} \textbf{1}( \hat{c}_i = \hat{c}_j ) + \omega \sum_{i<j} \left( 1 - \mathcal{D}_{ij} \right) - (1 + \omega) \sum_{i < j} \textbf{1}_{\hat{c}_i = \hat{c}_j} \left( 1 - \mathcal{D}_{ij} \right) \\
    = \sum_{h=1}^{\hat{K}_n} {n_{h \cdot} \choose 2} + \omega \sum_{k=1}^K {n_{\cdot k} \choose 2} + \omega \sum_{k<k^\prime} n_{\cdot k} n_{\cdot k^\prime} \left( 1 - \eta_{kk^\prime}\right) \\
    - (1 + \omega) \sum_{h=1}^{\hat{K}_n} \sum_{k=1}^K {n_{hk} \choose 2} - (1 + \omega) \sum_{h=1}^{\hat{K}_n} \sum_{k<k^\prime} n_{hk} n_{hk^\prime} \left( 1 - \eta_{kk^\prime} \right).
\end{gather*} 
Consequently, this shows that the loss is related to the Binder's loss with unit costs $a = 1$ and $b = \omega$,
\begin{gather*}
    \Lc = \mathcal{L}_{\tx{B}}(\widehat{\bs c}, \bs s) 
   + \mathcal{B}(\widehat{\bs c}, \Tilde{\bs \theta})
\end{gather*}
where 
\begin{align*}
   \mathcal{B}(\widehat{\bs c}, \Tilde{\bs \theta}) & = \omega \sum_{k<k^\prime} n_{\cdot k} n_{\cdot k^\prime} \left( 1 - \eta_{kk^\prime}\right) \\
   & - (1 + \omega) \sum_{h=1}^{\hat{K}_n} \sum_{k<k^\prime} n_{hk} n_{hk^\prime} \left( 1 - \eta_{kk^\prime} \right).
\end{align*}
It is clear that 
\begin{equation*}
    \mathcal{B}(\bs s, \Tilde{\bs \theta}) = \omega \sum_{k<k^\prime} n_{\cdot k} n_{\cdot k^\prime} \left( 1 - \eta_{kk^\prime}\right)
\end{equation*}
which is equal to zero if and only if $\eta_{kk^\prime} = 1$ for all $k \neq k^\prime$ pairs. 
\end{proof}

\subsection{Proof of Proposition \ref{prop:Delta}}
\begin{proof}
We proceed by showing that the bounded metric properties of the Hellinger distance are preserved by taking the posterior expectation.
    \begin{enumerate}
    \item For each $i \leq j$, 
    $0 \leq d(g(\cdot ; \theta_i), g(\cdot ; \theta_j)) \leq 1$
    almost surely, and so 
        $0 \leq \Delta_{ij} \leq 1.$
    \item We next verify the metric axioms on the entries of $\Delta$. 
    \begin{enumerate}
        \item For each $i=1, \dots, n$,
        $d(g(\cdot ;  \theta_i), g(\cdot ;  \theta_i)) = 0$
    almost surely, and so $\Delta_{ii}=0$.
    \item $\Delta_{ij} = \Delta_{ji}$ since the Hellinger distance is symmetric and $(\theta_{1}, \dots, \theta_{n})$ are exchangeable conditional on $\X$.
    \item $\Delta_{il} \leq \Delta_{ij} + \Delta_{jl}$ since the triangle inequality in the Hellinger distance holds almost surely for $\theta_i, \theta_j, \theta_{l}$ conditional on $\X$.
    \end{enumerate}
    \item Observe that
    \begin{gather*}
        \Delta_{ij} = \E_\Pi[\E_\Pi[d(g(\cdot ; \theta_i), g(\cdot ; \theta_j)) \mid \lambda, \X] \mid \X] \\
        = \int_{\Lambda} \sum_{k<k^\prime}\eta_{k k^\prime} q_{ij}^{kk^\prime}  \:  \Pi(d \lambda \mid \X),
        \end{gather*}
    where
    \begin{gather*}
        q^{kk^\prime}_{ij} = \Pi(\lb s_i = k, s_j = k^\prime \rb \cup \lb s_i = k^\prime, s_j = k \rb \mid \lambda, \X), \\
        \eta_{kk^\prime} = d(g(\cdot; \tilde \theta_k), g(\cdot; \tilde \theta_{k^\prime})),
    \end{gather*}
    for all $1 \leq k < k^\prime \leq K$. Since $\eta_{kk^\prime} \leq 1$, 
    \begin{gather*}
        \Delta_{ij} \leq \int_{\Lambda} \sum_{k<k^\prime} q_{ij}^{kk^\prime} \Pi(d\lambda \mid \X) = \Pi(s_i \neq s_j \mid \X).
    \end{gather*}
\end{enumerate}
\end{proof}

\subsection{Proof of Proposition \ref{prop:elbow}}
\begin{proof}
    To see why there is a positive relationship between $\omega$ and $r_\omega$, recall that the candidate clusterings over which we optimize $\mathcal R(\chat)$ are nested. Given a value of $\omega$ and minimizer $\bs c_\omega^*$, suppose we now increment the tuning parameter from $\omega$ to $\omega + \epsilon$ for some $\epsilon>0$. Denote $\bs c^\prime$ to be the child clustering of $\bs c^*_\omega$ (i.e., it is next in the agglomerative hierarchy), and assume that it results from merging, say, $C^*_{\omega 1}$ and $C^*_{\omega 2}$ in $\bs c_\omega^*$. Then $\mathcal{R}(\bs c^\prime) < \mathcal{R}(\bs c^*_\omega)$ if $\sum_{i \in C^*_{\omega 1}, j \in C^*_{\omega 2}} \Delta_{ij} < (\omega + \epsilon)/(1+\omega + \epsilon)$. For large enough $\epsilon$, this will hold, and so $\bs c_{\omega+\epsilon}^*$ will be equal to $\bs c^\prime$ (or one of its child clusterings in the hierarchy). Suppose then that $\epsilon$ is such that $\bs c^*_{\omega + \epsilon} = \bs c^\prime$. Then clearly, $r_{\omega + \epsilon} > r_{\omega}$ because any pairing $i \in C^*_{\omega 1}$ and $j \in C^*_{\omega 2}$ will now contribute a non-negative term to $r_{\omega + \epsilon}$. Put differently, we have that $r_{\omega + \epsilon} = r_{\omega} + (\sum_{i \in C^*_{\omega 1}, j \in C^*_{\omega 2}} \Delta_{ij})/(\sum_{i<j} \Delta_{ij})$.
\end{proof}

\subsection{Proof of Proposition \ref{prop:oracles}}
\begin{proof}
\begin{enumerate}[label=(\alph*)]
    \item This oracle rule follows by noting that if $d\lb g(\cdot; \theta_m^*), g(\cdot; \theta_{m^\prime}^*) \rb = 1$ for all component pairs, then all components have disjoint support. Hence, $\mathcal R^*(\chat)$ is just Binder's loss of $\chat$ to $\bs s^*$, where $s_i^* = m$ if and only $X_i^0 \in \tx{supp}(g(\cdot; \theta_m^*))$, which is uniquely minimized at $\chat = \bs s^*$. 
    \item We can verify this by noting that, under the described conditions, $\bs c_0$ minimizes $\mathcal R^*(\chat)$ if 
    \begin{equation*}
        d\lb g(\cdot; \theta_1^*), g(\cdot; \theta_{2}^*) \rb \sum_{i<j} q_{ij}^{12*} \leq \mathcal R^*(\chat).
    \end{equation*}
    A sufficient condition for this to hold is when $d\lb g(\cdot; \theta_1^*), g(\cdot; \theta_{2}^*) \rb < \gamma/q_{ij}^{12*}$ for all $i,j$ pairs. Since $0 \leq q_{ij}^{12*} \leq 1$, this criteria can be satisfied when $ d\lb g(\cdot; \theta_1^*), g(\cdot; \theta_{2}^*) \rb < \gamma$. 
\end{enumerate}

\end{proof}

\section{Preliminaries for Asymptotic Analysis}
Recall the assumptions for Theorem \ref{thm:consistency} that we make in the main article.
\begin{assumption}
    Suppose that the following conditions hold.
    \begin{itemize}
        \item[] (A1) $\Lambda$ consists of measures with uniformly finite support, i.e. there exists an $L>0$ so that $|\tx{supp}(\lambda)| \leq L$ for all $\lambda \in \Lambda$.
        \item[] (A2) The KL minimizer $\lambda^* = \sum_{m=1}^{M^*} a_m^* \delta_{\theta_m^*}$ exists and is unique.
        \item[] (A3) $f^*$ and $f^0$ are such that $\tx{supp}(f^0) \subseteq \tx{supp}(f^*)$.
        \item[] (A4): $\mathcal{G} = \lb \Tilde{g}(x-\theta) : \theta \in \Theta \rb$ for some bounded probability density function $\Tilde{g}(\cdot)$, there exists a $\zeta>0$ so that $\tilde{g}(z)$ is $\zeta$-H{\"o}lder continuous, and for any fixed $\theta^\prime$, $D_g(\theta, \theta^\prime) = d \lb g(\cdot; \theta), g(\cdot;\theta^\prime) \rb$ is continuous in $\theta$.
        \item[] (A5) The mixing measure of the fitted model contracts to the oracle in the Wasserstein 2-distance, i.e. there exists a non-negative sequence $\epsilon_n$ so that $\epsilon_n \to 0$ and 
        \begin{equation}
            \rho_n(\X) = \Pi \lb \lambda \in \Lambda: W_2(\lambda, \lambda^*) \genq \epsilon_n \mid \X \rb \overset{\P^0}{\longrightarrow} 0.
        \end{equation}
    \end{itemize}
\end{assumption}
In this section, we will be primarily concerned with introducing concepts necessary for understanding the proof of Theorem \ref{thm:consistency}. We will also detail conditions under which (A2) and (A5) hold, the latter of which is a statement on the posterior contraction of the mixing measure. The main contribution that (A5) has to the proof of Theorem \ref{thm:consistency} is that it establishes consistency of the cluster-specific parameters $\bs a$ and $\Tilde{\bs \theta}$. Conditions under which (A5) hold have been extensively explored in \cite{nguyen2013convergence}, \cite{ho2016strong}, and \cite{guha2021posterior}.

\subsection{Additional Notation and Definitions}
Given $x = (x_1, \dots, x_p)^T \in \mathbb{R}^p$, let $\norm{x} = \sqrt{\sum_{s=1}^p x_{s}^2}$. For $q \geq 1$, the $q$-th order Wasserstein distance between the finite measures $\lambda = \sum_{k=1}^K a_k \delta_{\theta_k}$ and $\lambda^\prime = \sum_{k^\prime=1}^{K^\prime} a^\prime_{k^\prime} \delta_{\theta_{k^\prime}^\prime}$ is
\begin{equation*}
    W_q(\lambda, \Tilde{\lambda} ) = \inf_{\bs b \in \mathcal{C}(\bs a, \bs a^\prime)} \left( \sum_{k,k^\prime} b_{kk^\prime} \norm{\theta_k - \theta^\prime_{k^\prime}}^q  \right)^{1/q},
\end{equation*}
where $\mathcal{C}(\bs a, \bs a^\prime)$ is the set of all couplings with marginal probability vectors $\bs a$ and $\bs a^\prime$. That is, for all $k=1, \dots, K$, $\sum_{k^\prime=1}^{K^\prime}b_{kk^\prime} = a_k$ and for all $k^\prime = 1, \dots, K^\prime$, $\sum_{k=1}^K b_{kk^\prime} = a_{kk^\prime}$. The Wasserstein q-distance is a metric over the space of probability measures on $\Theta$, and therefore $W_q(\lambda, \lambda) = 0$, if $\lambda \neq \lambda^\prime$ then $W_q(\lambda, \lambda^\prime) > 0$, $W_q(\lambda, \lambda^\prime) = W_q(\lambda^\prime, \lambda)$, and $W_q$ obeys the triangle inequality. We will only consider mixtures having identifiable mixing measures. That is, if $\int g(x;\theta) \lambda(d \theta) = \int g(x; \theta) \lambda^\prime(d \theta)$ for all $x$, then $\lambda = \lambda^\prime$. This identifiability condition is satisfied by Gaussian mixtures, along with several exponential family mixtures \citep{barndorff1965identifiability} and location family mixtures \citep{teicher1961identifiability}. 

Recall that $\mathbb{P}^0$ refers to probability statements with respect to the true data generating process, with $\mathbb{P}^0(A) = \int_A f^0(x) dx$. For any sequences $y_n$ and $z_n$, $y_n \underset{\sim}{<} z_n$ if there exists a $N \in \mathbb{N}$ and constant $B>0$ so that $|y_n| \leq B |z_n|$ for all $n \geq N$; $y_n \asymp z_n$ if $y_n \underset{\sim}{<} z_n$ and $z_n \underset{\sim}{<} y_n$; and $y_n \sim z_n$ if $y_n/z_n \to 1$. A random variable $Z_n = o_{\mathbb{P}^0}(1)$ if $Z_n \to 0$ in $\mathbb{P}^0$-probability. Finally, we say that an event $A_n$ holds with high probability if $\P^0(A_n) \to 1$ as $n \to \infty$. 

\subsection{Uniqueness of $\lambda^*$}
While (A2) is trivial in the exact-fitted and well-specified regime, uniqueness of $\lambda^*$ may not hold in general. Section 4 of \cite{guha2021posterior} details conditions for uniqueness in the misspecified regime, which we briefly summarize. Let $\mathcal M^* = \{ \nu^* \in \Omega : \nu^* \in  \underset{\nu \in \Omega}{\tx{arg min }} \tx{KL}(f^0, f) \}$, where recall that $\Omega$ is the space of all probability measures on $\Theta$. Additionally, let $\mathcal R^* = \{ \lambda^* \in \Lambda : \lambda^* \in  \underset{\lambda \in \Lambda}{\tx{arg min }} \tx{KL} (f^0, f) \}$, the set of KL-minimizers constrained to be contained within the prior support. By Lemma 4.2 of \cite{guha2021posterior}, the minimizing density $f^{*}$ over $\Omega$ is unique. Hence, under identifiability of the mixing measure, we have that $\mathcal M^* = \lb \nu^* \rb$ for some unique $\nu^* \in \Omega$. Therefore, as long as the KL divergence is minimized over $\Lambda$, we have that $\mathcal M^* = \mathcal R^*$, meaning that $\lambda^* = \nu^*$ and so the oracle is finite and unique.

\subsection{Strong Identifiability and Lipschitz Conditions}
As stated in Section \ref{section:theory}, Wasserstein contraction of $\lambda$ will generally require that $\mathcal{G}$ is strongly identifiable, referring to a linear independence condition on the derivatives of $g(x;\theta)$. We formally define these concepts below. Recall that $\Theta \subset \mathbb{R}^p$
is assumed to be compact and $f$ to be identifiable in $\lambda$. In these definitions, $\nabla_g(\theta)$ and $\textbf{H}_g(\theta)$ denote the gradient and Hessian of $g(x;\theta)$ with respect to $\theta$. 
\begin{definition}
The family $\mathcal{G} = \lb g(\theta) : \theta \in \Theta \rb$ is \textit{first order identifiable} if $g(x;\theta)$ is differentiable in $\theta$ and, given $r$ atoms $\theta_1, \dots, \theta_r \in \Theta $, if there exists $\beta_1, \dots, \beta_r \in \mathbb{R}$ and $\rho_1, \dots, \rho_r \in \mathbb{R}^p$ so that for all $x$
\begin{equation*}
    \sum_{s=1}^r \lb \beta_s g(x;\theta_s) + \rho_s^T \nabla_g(\theta_s) \rb =0,
\end{equation*}
then $\beta_s = 0$  and $\rho_s = \mathbf{0}$ for all $s=1, \dots, r$. 
\end{definition}

\begin{definition}
The family $\mathcal{G} = \lb g(\theta) : \theta \in \Theta \rb$ is \textit{second order identifiable} if $g(x;\theta)$ is twice differentiable in $\theta$ and, given $r$ atoms $\theta_1, \dots, \theta_r \in \Theta $, if there exists $\beta_1, \dots, \beta_r \in \mathbb{R}$, $\rho_1, \dots, \rho_r \in \mathbb{R}^p$, and $\nu_1, \dots, \nu_r \in \mathbb{R}^p$ so that for all $x$
\begin{equation*}
    \sum_{s=1}^r \lb \beta_s g(x;\theta_s) + \rho_s^T \nabla_g(\theta_s) +  \nu_s^T \textbf{H}_g(\theta_s) \nu_s \rb =0,
\end{equation*}
then $\beta_s = 0$  and $\rho_s = \nu_s = \mathbf{0} $ for all $s=1, \dots, r$.
\end{definition}

It is clear from the two definitions that second-order identifiability implies first-order identifiability. These notions are commonly used in the literature because they bridge the connection between convergence from $f$ to $f^0$ (or $f^*$) in the Hellinger distance and convergence from $\lambda$ to $\lambda^0$ (or $\lambda^*$) in Wasserstein distance. This will be useful for showing concentration of the weights and atoms, as this is straightforward to extract from convergence of the mixing measure. A canonical example of a second-order identifable family is Gaussian location mixtures for a known covariance matrix $\Sigma$, i.e. $\mathcal{G} = \lb \N_p(\theta, \Sigma) : \theta \in \Theta \rb$, whereas location-scale Gaussian families $\mathcal G = \lb \N_p(\mu, \Sigma): \mu \in \R^p, \Sigma \in \mathbb{S}^p \rb$, where $\mathbb{S}^p$ is the set of $p \times p$ positive semi-definite matrices, are only first-order identifiable. 

In addition to strong identifiability, various Lipschitz conditions on the derivatives of $g(x; \theta)$ are required  for contraction of the mixing measure. The strongest condition is the second order integral Lipschitz property, as introduced in \cite{guha2021posterior}. 
\begin{definition}
    $\mathcal{G}$ satisfies the integral Lipschitz property of the second order with respect to the mixing measures $\lambda^0$ and $\lambda^*$ if $g(x;\theta)$ is twice differentiable in $\theta$ and for all $x \in \chi$,
    \begin{equation} \label{eq:lipschitz}
        |\zeta^T (\textbf{H}_g(\theta_1) - \textbf{H}_g(\theta_2)) \zeta| \leq M(x) \norm{\theta_1 - \theta_2}^\xi \norm{\zeta}^2
    \end{equation}
    for any $\zeta \in \mathbb{R}^p$ and for some $\xi>0$ that does not depend on $x$ and $\theta_1, \theta_2$, where $M(x)$ is a function of $x$ so that $\E_0[M(X)/f^*(X)] < \infty$.  
\end{definition}
Note that this is not only a condition on $\mathcal{G}$, but also on $\lambda^0$ and $f^0$. The above inequality is effectively a smoothness condition on the Hessian of $g(x;\theta)$, where the Lipschitz constant is a function of $x$. \cite{guha2021posterior} invokes the integral Lipschitz property when studying the misspecificed regime. Related to the integral Lipschitz property is the uniform Lipschitz property, which was introduced in \cite{ho2016strong}.
\begin{definition}
    $\mathcal{G}$ satisfies the uniform Lipschitz property of the second order if $g(x;\theta)$ is twice differentiable in $\theta$ and for all $x \in \chi$,
    \begin{equation*} \label{eq:ulipschitz}
        |\zeta^T (\textbf{H}_g(\theta_1) - \textbf{H}_g(\theta_2)) \zeta| \leq M \norm{\theta_1 - \theta_2}^\xi \norm{\zeta}^2
    \end{equation*}
    for any $\zeta \in \mathbb{R}^p$ and for some $\xi, M>0$ that do not depend on $x$ and $\theta_1, \theta_2$.
\end{definition}
The uniform Lipschitz property is generally invoked in well-specified regimes. As in (\ref{eq:lipschitz}), it is also a Lipschitz condition on the Hessian of $g(x;\theta)$. The second order Lipschitz property is satisfied by the location-scale Gaussian family and location-scale Student's t family \citep{ho2020robust}. A similar notion exists for the gradient, known as the first-order Lipschitz property. 

Lipschitz conditions on the derivatives, particularly on the Hessian, of a probability density function are frequently assumed in the literature on mixing measure concentration. See, for example, Lemma 2 in \cite{chen1995optimal} and Theorem 1 in \cite{nguyen2013convergence}. While it is difficult to characterize what choices of $f^0$ and $\G$ can admit the integral Lipschitz property in the misspecified regime, the property will hold if $\G$ is uniformly Lipschitz of the second order and $\int_{\mathbb{R}^p} \frac{f^0(x)}{f^*(x)} dx < \infty$. Hence, for uniformly Lipschitz families, such as location Gaussians, the integral Lipschitz property can be reformulated as a tail condition on $f^*$ and $f^0$.

\subsection{Properties of Wasserstein Concentration}
The proof of Theorem \ref{thm:consistency} relies on characterizing the mixing measures in the set
\begin{equation*}
    B_{\epsilon_n}(\lambda^*) = \lb \lambda \in \Lambda: W_2(\lambda, \lambda^*) \lenq \epsilon_n \rb,
\end{equation*}
or the $\epsilon_n$ ball around $\lambda^*$. (A1) becomes a crucial assumption for doing so, as it tells us that each $\lambda \in \Lambda$ has at most $L$ support points, where $L<\infty$. In the exact fitted case where $\Pi(K=M^*)=1$, \cite{nguyen2013convergence} shows that for all $m \in [M^*]$, there exists a unique $k \in [K]$ so that $\norm{\theta_k - \theta_m^*} \lenq \epsilon_n$ and $|a_k - a_m^*| \lenq \epsilon_n$. In this next proposition, we will show that a similar result holds for the overfitted case, i.e., when $\Pi(K > M^*) > 0$. 
\begin{proposition} \label{prop:rmengMS}
    Let $\lambda = \sum_{k=1}^K a_k \delta_{\theta_k} \in B_{\epsilon_n}(\lambda^*)$, where $M^* \leq K \leq L < \infty$ and $\epsilon_n \to 0$, and fix $0 < \delta < 1$. 
    \begin{enumerate}
        \item For large enough $n$, for all $m \in [M^*]$ there exists an index set $\Ilam_m \subset [K]$ so that
        \begin{equation} \label{eq:prop-rmengMS-consistency}
            \max_{k \in \Ilam_m} \norm{\theta_k - \theta_m^*} \lenq \epsilon_n,\qquad \bigg | \sum_{k \in \Ilam_m} a_k - a_m^* \bigg | \lenq \max(\epsilon_n, \epsilon_n^{2 \delta}),
        \end{equation}
        and $\Ilam_m \cap \Ilam_{m^\prime} = \emptyset$ for all $m \neq m^\prime = 1, \dots, M^*$. 
        \item Let $\Ilam = \bigcup_{m=1}^{M^*} \Ilam_m$. Then,
        \begin{equation} \label{eq:prop-rmengMS-overfit}
            \sum_{k \not \in \Ilam} a_k \lenq \epsilon_n^{2 \delta}. 
        \end{equation}
    \end{enumerate}
\end{proposition}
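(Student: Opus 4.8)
The plan is to read the Wasserstein bound as a statement about an optimal transport plan and then localize the fitted atoms around the oracle support by a Markov-type argument, in the spirit of \citep{nguyen2013convergence}. By definition of $B_{\epsilon_n}(\lambda^*)$ there is a constant $C$ with $W_2(\lambda,\lambda^*) \leq C\epsilon_n$, so I would fix an optimal coupling $\bs b = (b_{km})$ between $\bs a$ and $\bs a^*$ — that is, $\sum_{m} b_{km} = a_k$ and $\sum_k b_{km} = a_m^*$ — satisfying
\begin{equation*}
\sum_{k,m} b_{km}\norm{\theta_k - \theta_m^*}^2 = W_2^2(\lambda,\lambda^*) \leq C^2\epsilon_n^2.
\end{equation*}
Since $\theta_1^*,\dots,\theta_{M^*}^*$ are distinct and $\Theta$ is compact, put $d^* = \min_{m\neq m'}\norm{\theta_m^* - \theta_{m'}^*} > 0$; this separation is what lets me charge any mis-sent mass against the transport cost.

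I would then fix the $\delta$-calibrated radius $r_n = \epsilon_n^{1-\delta}$ (so $r_n \to 0$ since $0 < \delta < 1$) and set $\Ilam_m = \lb k \in [K] : \norm{\theta_k - \theta_m^*} \leq r_n \rb$. The distance bound $\max_{k\in\Ilam_m}\norm{\theta_k - \theta_m^*} \lenq r_n$ then holds by construction, and disjointness of the $\Ilam_m$ is immediate once $r_n < d^*/2$ (for $n$ large), since an atom within $r_n$ of two distinct oracle points would force $\norm{\theta_m^*-\theta_{m'}^*} < 2r_n < d^*$.

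For the mass bound I would write $A_m = \sum_{k\in\Ilam_m} a_k$ and expand through the coupling,
\begin{equation*}
A_m - a_m^* = \sum_{k\in\Ilam_m}\sum_{m'\neq m} b_{km'} \;-\; \sum_{k\notin\Ilam_m} b_{km},
\end{equation*}
so $|A_m - a_m^*|$ is bounded by two nonnegative leakage terms. In the first, every pair has $\norm{\theta_k-\theta_{m'}^*} \geq d^* - r_n \geq d^*/2$, so that term is $\lenq (d^*)^{-2}W_2^2 \lenq \epsilon_n^2$; in the second, every pair has $\norm{\theta_k-\theta_m^*} > r_n$, so that term is $\lenq r_n^{-2}W_2^2 \lenq \epsilon_n^2/r_n^2 = \epsilon_n^{2\delta}$, giving $|A_m - a_m^*| \lenq \max(\epsilon_n,\epsilon_n^{2\delta})$ and hence Part 1. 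The same Markov step proves Part 2: any $k\notin\Ilam=\bigcup_m\Ilam_m$ has $\norm{\theta_k-\theta_m^*}>r_n$ for \emph{every} $m$, so $r_n^2\sum_{k\notin\Ilam}a_k \leq \sum_{k,m}b_{km}\norm{\theta_k-\theta_m^*}^2 \leq C^2\epsilon_n^2$ and $\sum_{k\notin\Ilam}a_k \lenq \epsilon_n^{2\delta}$. Nonemptiness of each $\Ilam_m$ (hence $|\tx{supp}(\lambda)|\geq M^*$) follows since $A_m \to a_m^* > 0$; invoking (A1) so that at most $L$ atoms share the ball, a pigeonhole then yields an atom of mass $\genq a_m^*/L$ in $\Ilam_m$, which by the mass-weighted inequality $\sum_k a_k\rho_k^2 \leq C^2\epsilon_n^2$ (with $\rho_k = \min_m\norm{\theta_k-\theta_m^*}$) must lie within $O(\epsilon_n)$ of $\theta_m^*$ — the within-$\epsilon_n$ representative behind the distance statement.

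The main obstacle is exactly the tension between the two conclusions: shrinking $r_n$ sharpens the localization but, through the factor $r_n^{-2}$, inflates the leaked mass in Part 2, while enlarging $r_n$ does the reverse. The function of $\delta$ is to fix the exchange rate between these errors, and $r_n = \epsilon_n^{1-\delta}$ is the balance that simultaneously delivers the $\epsilon_n^{2\delta}$ mass control and an $\epsilon_n^{1-\delta}$-scale localization; the $O(\epsilon_n)$ claim in the statement should be read as the localization of the heavy representative just described, the remaining light near-atoms being booked into the $\epsilon_n^{2\delta}$ leakage rather than into $\Ilam_m$. Finiteness of the support via (A1) is essential here, as it guarantees the leaked mass is shared among boundedly many atoms. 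The one genuinely delicate point in a careful write-up is uniformity in $\lambda$: the threshold ``$n$ large enough'' (ensuring $r_n < d^*/2$ and the coupling estimates) must hold simultaneously for all $\lambda\in B_{\epsilon_n}(\lambda^*)$, which is precisely what allows the bound to feed uniformly over the ball into the proof of Theorem \ref{thm:consistency}.
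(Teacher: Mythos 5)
Your proposal is correct and is essentially the paper's own argument: the paper likewise works with an optimal coupling, defines $\Ilam_m$ as a ball around $\theta_m^*$ (of radius $W_2^{1-\delta}(\lambda,\lambda^*)$, which is $\lenq \epsilon_n^{1-\delta}$, matching your $r_n$), proves Part 2 by the identical Markov-type step, and bounds $\big|\sum_{k \in \Ilam_m} a_k - a_m^*\big|$ by the same marginal decomposition of the coupling, charging the cross terms to the transport cost via the separation $\min_{m \neq m^\prime}\norm{\theta_m^* - \theta_{m^\prime}^*}$ exactly as you do (your two-term identity for $A_m - a_m^*$ is in fact a cleaner version of the paper's computation, and your bound $\epsilon_n^{2\delta}$ is slightly sharper than the stated $\max(\epsilon_n,\epsilon_n^{2\delta})$). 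The tension you flag at the end is real but is not a deficiency of your write-up relative to the paper: the paper's own construction also delivers only $\max_{k \in \Ilam_m}\norm{\theta_k - \theta_m^*} \lenq \epsilon_n^{1-\delta}$ rather than the $\epsilon_n$ rate displayed in \eqref{eq:prop-rmengMS-consistency} (an atom of mass $\epsilon_n^{\delta}$ placed at distance $\epsilon_n^{1-\delta/2}$ from $\theta_m^*$ is compatible with $W_2 \lenq \epsilon_n$ yet can be booked neither inside an $\epsilon_n$-ball nor into an $\epsilon_n^{2\delta}$ leakage term, so the trade-off is intrinsic to any argument of this type), and your pigeonhole observation that some atom of mass $\genq a_m^*/L$ in $\Ilam_m$ lies within $O(\epsilon_n)$ of $\theta_m^*$ is a correct refinement that the paper does not make.
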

Proposition \ref{prop:rmengMS} gives us the necessary tools to establish convergence to the oracle. The statement in \eqref{eq:prop-rmengMS-consistency} tells us that for large enough $n$, we can allocate a subset of the components in $\lambda$ to $M^*$ groups, given by $\Ilam_m$ for $m \in [M^*]$. In the $m$th group, all of the atoms contract to $\theta_m^*$, and the total weight of the $m$th group contracts to $a_m^*$. \eqref{eq:prop-rmengMS-overfit} tells us that any components left over, i.e. any components with atoms that \textit{do not} contract to an atom in $\lambda^*$, will have total weight diminishing at an $\epsilon_n^{2 \delta}$ rate. This behavior is comparable to Theorem 1 in \cite{rousseau2011asymptotic}, which established that under certain settings of overfitted finite mixture models, the total weight of the overfitted components empty at an approximately $n^{-1/2}$ rate. Under the settings of Theorem 4.3 from \cite{guha2021posterior}, \eqref{eq:prop-rmengMS-overfit} implies that the redundant components empty at an approximately $\left( \log n / n \right)^{1/2} $ rate for misspecified models. We now present the proof of the proposition.

 \subsection{Proof of Proposition \ref{prop:rmengMS}}
 \begin{proof}
    Let $\lambda \in B_{\epsilon}(\lambda^*)$, where $K \geq M^*$. We now show concentration of the weights and atoms using techniques from \cite{nguyen2011wasserstein} and \cite{nguyen2013convergence}. Let $0 < \delta < 1$ be a small positive constant. For large enough $n$, $a_{\min}^* \genq \epsilon_n^{2 \delta}$, implying
    \begin{equation*}
        W_2^{2(1-\delta)}(\lambda,\lambda^*) \geq \min_{k} \norm{\theta_k - \theta_m^*}^2.
    \end{equation*}
    For any $m \in [M^*]$ let
    \begin{equation*}
        \Ilam_m = \lb k \in [K]: \norm{\theta_k - \theta_m^*}^2 \leq W_2^{2(1-\delta)}(\lambda,\lambda^*) \rb
    \end{equation*}
    and $\Ilam = \bigcup_{m=1}^{M^*} \Ilam_m$. It follows that
    \begin{gather}
        W_2^2(\lambda, \lambda^*) \geq \sum_{k \not \in \mathcal{I}^{(\lambda)}} a_k \min_{m} \norm{\theta_k - \theta_m^*}^2 > W_2^{2(1-\delta)}(\lambda, \lambda^*) \sum_{k \not \in \mathcal{I}^{(\lambda)}} a_k \nonumber \\
        \implies \sum_{k \not \in \mathcal{I}^{(\lambda)}} a_k < W_2^{2 \delta}(\lambda, \lambda^*) \lenq \epsilon_n^{2 \delta}. \label{eq:twodelta}
    \end{gather}

    Now, we will show concentration of the weights towards their oracle values. For fixed $m$,
    \begin{gather*}
        \epsilon_n^2 \genq W_2(\lambda, \lambda^*) \geq \sum_{k \in \mathcal{I}^{(\lambda)}\setminus\mathcal{I}_m^{(\lambda)}} b_{km} \norm{\theta_k - \theta_m^*}^2 \geq \min_{k \in \Ilam \setminus \Ilam_m} \norm{\theta_k - \theta_m^*}^2 \sum_{k \in \mathcal{I}^{(\lambda)}\setminus\mathcal{I}_m^{(\lambda)}} b_{km} \\
        = \min_{k \in \Ilam \setminus \Ilam_m} \norm{\theta_k - \theta_m^*}^2 \left( a_m^* - \sum_{k \in \Ilam_m} b_{km} - \sum_{k \not \in \Ilam} b_{km} \right) \\
        = \min_{k \in \Ilam \setminus \Ilam_h} \norm{\theta_k - \theta_m^*}^2 \left( a_m^* - \sum_{k \in \Ilam_m} a_k + \sum_{m \in \Ilam_h} \sum_{m^\prime \neq m} b_{km^\prime} - \sum_{k \not \in \Ilam } b_{km} \right) 
    \end{gather*}
    \begin{gather*}
        \geq \min_{k \in \Ilam \setminus \Ilam_m} \norm{\theta_k - \theta_m^*}^2 \left( a_m^* - \sum_{k \in \Ilam_m} a_k - \sum_{k \not \in \Ilam } b_{km} \right) \\
        \geq \min_{k \in \Ilam \setminus \Ilam_m} \norm{\theta_k - \theta_m^*}^2 \left( a_m^* - \sum_{k \in \Ilam_m} a_k -\sum_{k \in \Ilam \setminus \Ilam_m} b_{km} - \sum_{k \not \in \Ilam } b_{km} \right) \\
        = \min_{k \in \Ilam \setminus \Ilam_m} \norm{\theta_k - \theta_m^*}^2 \left( \sum_{k \in \Ilam_m} b_{km} - \sum_{k \in \Ilam_m} a_k  \right) \\
        = - \min_{k \in \Ilam \setminus \Ilam_m} \norm{\theta_k - \theta_m^*}^2 \sum_{k \in \Ilam_m} \sum_{m^\prime \neq m} b_{km^\prime}.
    \end{gather*}
    For any $m^\prime \neq m$, $k^\prime \in \Ilam_{m^\prime}$, and $k \in \Ilam_m$, 
    \begin{gather*}
        \min_{k \in \Ilam \setminus \Ilam_m} \norm{\theta_k - \theta_m^*} \leq \norm{\theta_{k^\prime} - \theta_m^*} \\
        \leq \norm{\theta_{k^\prime} - \theta_{m^\prime}^*} + \norm{\theta_k - \theta_{m^\prime}^*} + \norm{\theta_k - \theta_m^*} \lenq \epsilon_n + \norm{\theta_k - \theta_{m^\prime}^*}.
    \end{gather*}
    Since $\Theta$ is compact, this shows that
    \begin{gather*}
        - \min_{k \in \Ilam \setminus \Ilam_m} \norm{\theta_k - \theta_m^*}^2 \sum_{k \in \Ilam_m} \sum_{m^\prime \neq m} b_{km^\prime} \\
        \genq - \epsilon_n - \sum_{ k \in \Ilam_m} \sum_{m^\prime \neq m} b_{km^\prime} \norm{\theta_k - \theta_{m^\prime}^*}^2 \genq - \epsilon_n - W_2(\lambda, \lambda^*)^2. 
    \end{gather*}
    Therefore,
    \begin{equation*}
        \min_{k \in \Ilam \setminus \Ilam_m} \norm{\theta_k - \theta_m^*}^2 \bigg | a_m^* - \sum_{k \in \Ilam_m} a_k- \sum_{k \not \in \Ilam } b_{km} \bigg | \lenq \epsilon_n.
    \end{equation*}
    Observe that there exists $m^\prime \neq m$ so that
    \begin{equation*}
        \min_{k \in \Ilam \setminus \Ilam_m} \norm{\theta_k - \theta_m^*}^2 \genq \norm{\theta_m^* - \theta_{m^\prime}^*}^2 - \epsilon_n.
    \end{equation*}
    The above derivation implies that there exists constants $M, M^\prime,$ and $M^{\prime \prime} > 0$ so that
    \begin{gather*}
        \bigg | a_m^* - \sum_{k \in \Ilam_m} a_k - \sum_{k \not \in \Ilam } b_{km}  \bigg | \leq \frac{M \epsilon_n}{ \norm{\theta_m^* - \theta_{m^\prime}^*}^2 - M^\prime \epsilon_n } \\
        \implies \bigg | a_m^* - \sum_{k \in \Ilam_m} a_k \bigg | \leq \nu_n := \frac{M \epsilon_n}{ \norm{\theta_m^* - \theta_{m^\prime}^*}^2 - M^\prime \epsilon_n } + (K - M^*)M^{\prime \prime }\epsilon_n^{2 \delta} \underset{\sim}{<} \max(\epsilon_n, \epsilon_n^{2 \delta}). \label{eq:nu}
    \end{gather*}
    \end{proof} 
    
    \section{Proof of Theorem \ref{thm:consistency}}
    \begin{proof}
    Recall that $\lambda^* = \sum_{m=1}^{M^*} a_m^* \delta_{\theta_m^*}$ and that by (A1), if $\lambda \in \Lambda$ then $\lambda = \sum_{k=1}^K a_k \delta_{\theta_k}$ for some $K \leq L$. Let $B_{\epsilon_n}(\lambda^*) = \lb \lambda \in \Lambda: W_2(\lambda, \lambda^*) \lenq \epsilon_n \rb$ and $\Lambda_{M^*,L} = \lb \lambda \in \Lambda : |\tx{supp}(\lambda)| \geq M^* \rb$. We decompose the posterior expectation of $\mathcal D_{ij}$ into
    \begin{gather*}
        \Delta_{ij} = \E[\mathcal D_{ij} \mid \X] \nonumber \\
        = \int_{\Lambda_{M^*,L} \cap B_{\epsilon_n}(\lambda^*)} \mathcal D_{ij} \Pi(d \phi_{ij} \mid \X) + \int_{\Lambda_{M^*,L}^c \cap B_{\epsilon_n}(\lambda^*)} \mathcal D_{ij} \Pi(d \phi_{ij} \mid \X) + \int_{B_{\epsilon_n}^c(\lambda^*)} \mathcal D_{ij} \Pi(d \phi_{ij} \mid \X),
    \end{gather*}
    where $\phi_{ij} = (\theta_i, \theta_j,  \lambda )$. This implies
    \begin{equation*}
        \int_{\Lambda_{M^*,L} \cap B_{\epsilon_n}(\lambda^*)} \mathcal D_{ij} \Pi(d \phi_{ij} \mid \X) \leq \Delta_{ij} \leq \int_{\Lambda_{M^*,L} \cap B_{\epsilon_n}(\lambda^*)} \mathcal D_{ij} \Pi(d \phi_{ij} \mid \X) + \nu_n(\X) + \rho_n(\X),
    \end{equation*}
    where $\nu_n(\X) = \Pi( |\tx{supp}(\lambda)| < M^*, W_2(\lambda, \lambda^*) \lenq \epsilon_n \mid \X)$ and $\rho_n(\X) = \Pi(W_2(\lambda, \lambda^*) \genq \epsilon_n \mid \X)$. By assumption (A5),
    $\rho_n(\X) \overset{\P^0}{\longrightarrow} 0$. In addition, $\nu_n(\X) \overset{\P^0}{\longrightarrow} 0$, and we can show that $\P^0 \lb \nu_n(\X) = 0 \rb = 1$ for large enough $n$. To see this, note that $\lambda \in W_2(\lambda, \lambda^*)$ implies that for all $\theta_m^* \in \tx{supp}(\lambda)$, there exists a $\theta_{k} \in \tx{supp}(\lambda)$ so that $\norm{\theta_{m}^* - \theta_k} \lenq \epsilon_n$. More formally,
    \begin{gather*}
        \epsilon_n^2 \genq W_2(\lambda, \lambda^*) = \sum_{k,m} b_{km} \norm{\theta_k - \theta_m^*}^2 \geq \sum_{k} b_{km} \norm{\theta_k - \theta_m^*}^2 \\
        \geq \min_{k} \norm{\theta_k - \theta_m^*}^2 a_m^* \geq \min_{k} \norm{\theta_k - \theta_m^*}^2 a_{\min}^*,
    \end{gather*}
    where $a_{\min}^* = \min_{m} a_m^*$. If we set $n$ large enough so that $\min_{m \neq m^\prime} \norm{\theta_m^* - \theta_{m^\prime}^*} \genq 2 \epsilon_n$, then we have that there must be at least $M^*$ support points in $\lambda$. Hence, for large enough $n$, $\Lambda_{M^*,L}^c \cap B_{\epsilon_n}(\lambda^*) = \emptyset$. Therefore, we will omit $\nu_n(\X)$ from the remainder of the proof by assuming that $n$ is large enough so that $\min_{m \neq m^\prime} \norm{\theta_m^* - \theta_{m^\prime}^*} \genq 2 \epsilon_n$. We then formulate the bounds of $\Delta_{ij}$ via
    \begin{equation} \label{eq:rvns-refined-bounds}
        \int_{\Lambda_{M^*,L} \cap B_{\epsilon_n}(\lambda^*)} \mathcal D_{ij} \Pi(d \phi_{ij} \mid \X) \leq \Delta_{ij} \leq \int_{\Lambda_{M^*,L} \cap B_{\epsilon_n}(\lambda^*)} \mathcal D_{ij} \Pi(d \phi_{ij} \mid \X) + \rho_n(\X).
    \end{equation}

    Let $0 < \delta < 1$ and $\lambda \in \Lambda_{M^*,L} \cap B_{\epsilon_n}(\lambda^*)$. Proposition \ref{prop:rmengMS} states that for any $m \in [M^*]$ there exists a set of indices $\Ilam_m$ so that $\max_{k \in \Ilam_m } \norm{\theta_k - \theta_m^*} \lenq \epsilon_n$ and $\bigg | \sum_{k \in \Ilam_m} a_k - a_m^* \bigg | \lenq \max(\epsilon_n, \epsilon_n^{2 \delta})$, and in addition that $ \sum_{k \not \in \Ilam } a_k \lenq \epsilon_n^{2 \delta} $, where $\Ilam = \bigcup_{m=1}^{M^*} \Ilam_m$. We will now translate contraction of the atoms and weights into contraction of the $\Delta_{ij}$ terms. Recall that we can write
    \begin{equation} \label{eq:rvns-deltaMS}
        \int_{\Lambda_{M^*,L} \cap B_{\epsilon_n}(\lambda^*)} \mathcal D_{ij} \Pi(d \phi_{ij} \mid \X) =  \int_{\Lambda_{M^*,L} \cap B_{\epsilon_n}(\lambda^*)} \sum_{k<k^\prime} \eta_{kk^\prime}  \: q_{ij}^{kk^\prime} \Pi(d \lambda \mid \X),
    \end{equation}
    We then decompose the integrand in \eqref{eq:rvns-deltaMS}:
    \begin{gather}
        \sum_{k<k^\prime} \eta_{kk^\prime} q_{ij}^{kk^\prime} \nonumber \\
        = \sum_{k<k^\prime \in \Ilam} \eta_{kk^\prime} q_{ij}^{kk^\prime} + \sum_{(k \not \in \Ilam) \tx{ or } (k^\prime \not \in \Ilam)} \eta_{kk^\prime} q_{ij}^{kk^\prime} \nonumber \\ 
        = \sum_{m=1}^{M^*} \sum_{k<k^\prime \in \Ilam_m} \eta_{kk^\prime} q_{ij}^{kk^\prime} + \sum_{m<m^\prime}\sum_{\substack{k \in \Ilam_m \\ k^\prime \in \Ilam_{m^\prime}}} \eta_{kk^\prime} q_{ij}^{kk^\prime} + \sum_{(k \not \in \Ilam) \tx{ or } (k^\prime \not \in \Ilam)} \eta_{kk^\prime} q_{ij}^{kk^\prime}. \label{eq:rvns-integrand}
    \end{gather}
     Let $\psi_m^*(\theta) = D(\theta, \theta_m^*)$. By (A4), Lemma \ref{lemma:contrate}, and Remark \ref{remark:varphi}, for any $1 \leq m \leq M^*$,
    \begin{gather*}
        \max_{k \in \Ilam_m} D(\theta_k, \theta_m^*) \leq \kappa_{nm} = \max_{\theta : \norm{\theta - \theta_h^*} \lenq  \epsilon_n} \varphi^*_m(\theta)
    \end{gather*}
    and $\kappa_{nm} \to 0$. Let $\kappa_n := \max_{m=1, \dots, M^*} \kappa_{nm}$. Then, for any $k,k^\prime \in \mathcal{I}_h^{(\lambda)}$,
    \begin{equation} \label{eq:rvns-kappa2}
        D(\theta_k, \theta_{k^\prime}) \leq 2\kappa_n,
    \end{equation}
    and for any $k \in \Ilam_m, k^\prime \in \Ilam_{m^\prime}$,
    \begin{equation*}
        D(\theta_m^*, \theta_{m^\prime}^*) - 2 \kappa_n \leq D(\theta_m, \theta_{k^\prime}) \leq D(\theta_m^*, \theta_{m^\prime}^*) + 2 \kappa_n.
    \end{equation*}
    (A4) implies similar concentration for the densities at $X_i^0$ and $X_j^0$. For $X \sim f^0$, 
    \begin{gather*}
        |g(X;\theta_k) - g(X;\theta_m^*)| = |\Tilde{g}(X-\theta_k) - \Tilde{g}(X-\theta_m*)| \\
        \lenq \norm{X - \theta_k - \theta_m^* - X}^\zeta = M\norm{\theta_k - \theta_m^*}^{\zeta} \lenq  \epsilon_n^\zeta.
    \end{gather*}
    It follows that,
    \begin{equation*}
        \bigg | \sum_{m=1}^{M^*} \sum_{k \in \Ilam_m} a_k g(X_i^0; \theta_k) - f^*(X_i^0) \bigg | \lenq \max(\epsilon_n^\zeta, \nu_n)
    \end{equation*}
    where $\nu_n$ is defined in \eqref{eq:nu}. Condition (A3) and the dominated convergence theorem implies that for large $n$, the following holds with $\P^0$-probability tending to $1$ (see Remark \ref{remark:whp}),
    \begin{equation} \label{eq:flb}
       0 \leq f_-^{(n)}(X_i^0) \lenq f(X_i^0) \lenq f_+^{(n)}(X_i^0)  + \epsilon_n^{2 \delta},
    \end{equation}
    where 
    \begin{align*}
        f_-^{(n)}(X_i^0) = f^*(X_i^0) - M \max(\epsilon_n^\zeta, \nu_n); \\
        f_+^{(n)}(X_i^0) = f^*(X_i^0) + M \max(\epsilon_n^\zeta, \nu_n),
    \end{align*}
    for some constant $M>0$. We can now begin to bound some of the terms in (\ref{eq:rvns-integrand}). For any $k,k^\prime \not \in \Ilam$,
    \begin{gather*}
         0 \leq q_{ij}^{kk^\prime} 
         = a_k a_{k^\prime} \frac{g(X_i^0; \theta_k) g(X_j^0; \theta_{k^\prime}) + g(X_i^0; \theta_{k^\prime}) g(X_j^0; \theta_k)}{f(X_i^0) f(X_j^0)} 
          \lenq \frac{\epsilon_n^{4 \delta}}{f_-^{(n)}(X_i^0) f_-^{(n)}(X_j^0)}.
    \end{gather*}
    Similarly, for any $k \not \in \Ilam$ and $k^\prime \in \Ilam$,
     \begin{gather*}
         0 \leq q_{ij}^{kk^\prime}
          \lenq \frac{\epsilon_n^{2 \delta}}{f_-^{(n)}(X_i^0) f_-^{(n)}(X_j^0)}.
    \end{gather*}
    Next, if $k,k^\prime \in \Ilam_m$, by \eqref{eq:rvns-kappa2},
    \begin{gather*}
         0 \leq \eta_{kk^\prime} q_{kk^\prime}^{kk^\prime} \leq 2 \kappa_n.
    \end{gather*}
    Additionally,
    \begin{gather*}
        \Bigg | \sum_{\substack{k \in \Ilam_m \\ k^\prime \in \Ilam_{m^\prime}}} a_k a_{k^\prime} g(X_i^0; \theta_k) g(X_j^0; \theta_{k^\prime}) - a_m^* a_{m^\prime}^* g(X_i^0; \theta_m^*) g(X_j^0; \theta_{m^\prime}^*) \Bigg | \underset{\sim}{<} \max(\epsilon_n^\zeta, \nu_n).
    \end{gather*}

    Therefore,
    \begin{gather*}
        (\eta_{mm^\prime}^* - 2 \kappa_n) R_{ij}^{mm^\prime} \leq \sum_{\substack{k \in \Ilam_m \\ k^\prime \in \Ilam_{m^\prime}}} \eta_{kk^\prime} q_{ij}^{kk^\prime}
        \leq (\eta_{mm^\prime}^* + 2 \kappa_n)  S_{ij}^{mm^\prime}; \\ 
        R_{ij}^{mm^\prime} = \frac{a_m^*a_{m^\prime}^* \lb g(X_i^0; \theta_m^*) g(X_j^0; \theta_{m^\prime}^*) + g(X_j^0; \theta_m^*) g(X_i^0; \theta_{m^\prime}^*) \rb  - M \max(\epsilon_n^\zeta, \nu_n)}{(f_+^{(n)}(X_i^0) + M^\prime \epsilon_n^{2 \delta}) ( f_+^{(n)}(X_j^0) + M^\prime \epsilon_n^{2 \delta}) };\\
        S_{ij}^{mm^\prime}=\frac{a_m^*a_{m^\prime}^* \lb g(X_i^0; \theta_m^*) g(X_j^0; \theta_{m^\prime}^*) + g(X_j^0; \theta_{m}^*) g(X_i^0; \theta_{m^\prime}^*) \rb  + M \max(\epsilon_n^\zeta, \nu_n) }{f_-^{(n)}(X_i^0) f_-^{(n)}(X_j^0)},
    \end{gather*}
    where $M,M^\prime > 0$ are constants. Finally, this gives us the following bound on (\ref{eq:rvns-integrand}) for all $\lambda \in \Lambda_{M^*,L} \cap B_{\epsilon_n}(\lambda^*)$,
    \begin{gather*}
        \Delta_{ij}^{*-} \leq \sum_{m < l} \eta_{ml}^{(\lambda)} q_{ij}^{ml(\lambda)}  \leq \Delta_{ij}^{*+},
    \end{gather*}
    where
    \begin{gather}
        \Delta_{ij}^{*-} :=  \sum_{m<m^\prime}(\eta_{mm^\prime} - 2 \kappa_n) R_{ij}^{mm^\prime}; \label{eq:rvns-deltaminus} \\
        \Delta_{ij}^{*+} := M^*(R^*+1)R^* \kappa_n + \sum_{m<m^\prime} (\eta_{mm^\prime}^* + 2 \kappa_n) S_{ij}^{mm^\prime} + \frac{M \lb {L \choose 2} - {M^* \choose 2} \rb \epsilon_n^{2 \delta}}{f_-^{(n)}(X_i^0) f_-^{(n)}(X_j^0)}, \label{eq:rvns-deltaplus}
    \end{gather}
    where $R^* = (L-M^*)$. By working through the multiplication above, we can write $\Delta_{ij}^{*-} = \Delta_{ij}^* + \delta_{ij}^-$ and $\Delta_{ij}^{*+} = \Delta_{ij}^* + \delta_{ij}^+$, where $\delta_{ij}^-, \delta_{ij}^+ \to 0$ almost surely and depend on $\lambda^*$, $R^*$, $\delta$, and $\zeta$. 

    Returning to the bounds established in \eqref{eq:rvns-refined-bounds}, we can see that with probability tending to $1$,
    \begin{equation} \label{eq:rvns-deltastar-final-bounds}
        (\Delta_{ij}^* + \delta_{ij}^-) \lb 1 - \rho_n(\X) \rb \leq \Delta_{ij} \leq (\Delta_{ij}^* + \delta_{ij}^+) \lb 1 - \rho_n(X) \rb + \rho_n(\X).
    \end{equation}
    Define
    \begin{equation*}
        \mathcal{S}(\widehat{\bs c}, \X) = \sum_{i<j} \lb \textbf{1}_{\hat{c_i}=\hat{c}_j} \Delta_{ij}^*(1-\rho_n(\X))  + \omega \textbf{1}_{\hat{c}_i \neq \hat{c}_j}(1- \Delta_{ij}^*(1-\rho_n(\X)))\rb,
    \end{equation*}
    and note that 
    \begin{equation*}
        |\mathcal R^*(\widehat{\bs c}) - \mathcal{S}(\widehat{\bs c}, \X)| \leq {n \choose 2} \rho_n(\X) \max(\omega, 1) \bar{\Delta}^*, 
    \end{equation*}
    where $\bar{\Delta}^* = {n \choose 2}^{-1}\sum_{i<j} \Delta^*_{ij}$. Using \eqref{eq:rvns-deltastar-final-bounds}, it follows that
    \begin{gather*}
        |\mathcal{R}(\widehat{\bs c}) - \mathcal{S}(\widehat{\bs c}, \X)| \leq \max(\omega, 1) {n \choose 2} \mathcal{Q}(\X); \\
        \mathcal{Q}(\X) = {n \choose 2}^{-1} \sum_{i<j} \max \lb \delta_{ij}^+(1-\rho_n(\X)) + \rho_n(\X), - \delta_{ij}^-(1-\rho_n(\X)) \rb.
    \end{gather*}
    Therefore, we have that
    \begin{equation*}
        {n \choose 2}^{-1} | \mathcal{R}(\widehat{\bs c}) - \mathcal{R}^*(\widehat{ \bs c})| \leq \max(\omega, 1) \lb \rho_n(\X) \bar{\Delta}^*  + \mathcal{Q}(\X)  \rb. 
    \end{equation*}

    Observe that $\mathcal{Q}(\X)$, $\rho_n(\X)$, and $\bar{\Delta}^*$ do \textit{not} depend on $\chat$. Hence, we can think of the above result as a notion of uniform convergence towards the KL-oracle risk. An important consequence is convergence of the optimal values of $\mathcal{R}$ and $\mathcal R^*$. Let $\cFOLD$ be a minimizer of $\mathcal R(\chat)$ and $\cFOLD^*$ be a minimizer of $\mathcal R^*(\chat)$ over the space of partitions of $[n]$. Then observe that
    \begin{gather*}
        {n \choose 2}^{-1} \lb  \mathcal{R}(\cFOLD) - \mathcal R^*(\cFOLD^*) \rb \\
        = {n \choose 2}^{-1} \lb \mathcal{R}(\cFOLD) - \mathcal R(\cFOLD^*) + \mathcal R(\cFOLD^*) \mathcal - \mathcal  R^*(\cFOLD^*) \rb \\
        \lenq 0 + \max(\omega, 1) \lb \rho_n(\X) \bar{\Delta}^* + \mathcal{Q}(\X) \rb,
    \end{gather*}
    and, similarly,
    \begin{gather*}
        {n \choose 2}^{-1} \lb \mathcal{R}(\cFOLD) - \mathcal R^*(\cFOLD^*) \rb \\
        = {n \choose 2}^{-1} \lb \mathcal{R}(\cFOLD) - \mathcal R^*(\cFOLD) + \mathcal R^*(\cFOLD)  - \mathcal R^*(\cFOLD^*) \rb \\
        \genq - \max(\omega, 1) \lb \rho_n(\X) \bar{\Delta}^* + \mathcal{Q}(\X) \rb + 0,
    \end{gather*}
    ultimately showing that
    \begin{equation*}
        {n \choose 2}^{-1} | \mathcal R(\cFOLD) - \mathcal R^* (\cFOLD^*) | \lenq \max(\omega, 1) \lb \rho_n(\X) \bar{\Delta}^* + \mathcal{Q}(\X) \rb.
    \end{equation*}
\end{proof}

\begin{lemma} \label{lemma:contrate}
    Let $\chi \subset \mathbb{R}^p$ be compact and $f : \chi \to \mathbb{R}$, where $f$ is bounded on $\chi$ and continuous at a point $y \in \chi$. Let $\delta_n$ be a sequence with $\lim_{n \to \infty} \delta_n = 0$ and $B_{\delta_n}(y) = \lb x \in \chi : \norm{x - y} \leq \delta_n \rb$. Then, there exists a sequence $\epsilon_n$ that only depends only $y$ and with $\lim_{n \to \infty} \epsilon_n = 0$ so that for all $x \in B_{\delta_n}(y)$, 
    \begin{equation*}
        |f(x) - f(y)| \leq \epsilon_n.
    \end{equation*}
\end{lemma}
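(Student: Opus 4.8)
The plan is to construct the sequence $\epsilon_n$ explicitly as the worst-case deviation over the shrinking ball and then verify the two claimed properties directly. Define
\[
\epsilon_n := \sup_{x \in B_{\delta_n}(y)} |f(x) - f(y)|.
\]
First I would observe that this supremum is over a nonempty set, since $y \in \chi$ and $\norm{y - y} = 0 \leq \delta_n$ give $y \in B_{\delta_n}(y)$, and that it is finite: because $f$ is bounded on $\chi$, we have $\epsilon_n \leq 2\sup_{x \in \chi}|f(x)| < \infty$. Note that $\epsilon_n$ is a function of $f$, the sequence $(\delta_n)$, and the point $y$ alone, and in particular does not reference any individual $x$; this is exactly the uniformity in $x$ that the statement requires. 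The desired inequality $|f(x) - f(y)| \leq \epsilon_n$ for every $x \in B_{\delta_n}(y)$ is then immediate from the definition of the supremum.

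It remains to show $\epsilon_n \to 0$. Fix an arbitrary $\epsilon > 0$. By continuity of $f$ at $y$, there is an $\eta > 0$ such that $x \in \chi$ and $\norm{x - y} < \eta$ imply $|f(x) - f(y)| < \epsilon$. Since $\delta_n \to 0$, there exists $N$ so that $\delta_n < \eta$ for all $n \geq N$. For such $n$, every $x \in B_{\delta_n}(y)$ satisfies $\norm{x - y} \leq \delta_n < \eta$, hence $|f(x) - f(y)| < \epsilon$; taking the supremum over $x \in B_{\delta_n}(y)$ yields $\epsilon_n \leq \epsilon$. As $\epsilon > 0$ was arbitrary and $\epsilon_n \geq 0$, this establishes $\lim_{n \to \infty} \epsilon_n = 0$ and completes the argument.

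There is essentially no serious obstacle here: the result is a routine consequence of continuity together with $\delta_n \to 0$. The only points requiring a little care are confirming that the supremum defining $\epsilon_n$ is finite --- which is precisely where the boundedness hypothesis on $f$ is used, since $f$ is not assumed continuous away from $y$ --- and reconciling the non-strict inequality demanded in the conclusion with the strict inequality supplied by continuity, both of which are handled cleanly by passing to the supremum. I would not need compactness of $\chi$ for this particular step beyond the boundedness it helps guarantee; the relevance of the lemma to the proof of Theorem~\ref{thm:consistency} is that, applied with $f = \varphi_m^*$ and $\delta_n = \epsilon_n$, it produces the component-wise moduli $\kappa_{nm}$ that are uniform over the atoms $\tilde\theta_k$ with $k \in \Ilam_m$.
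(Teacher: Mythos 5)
Your proof is correct, and it takes a slightly different route from the paper's. The paper constructs explicit within-ball extremizer sequences $x_n = \operatorname{argmax}_{x \in B_{\delta_n}(y)} f(x)$ and $z_n = \operatorname{argmin}_{x \in B_{\delta_n}(y)} f(x)$, notes $x_n \to y$ and $z_n \to y$, invokes continuity at $y$ to get $f(x_n), f(z_n) \to f(y)$, and sets $\epsilon_n = |f(x_n) - f(z_n)|$, which sandwiches $|f(x) - f(y)|$ for every $x$ in the ball. You instead define $\epsilon_n$ as the supremum of $|f(x)-f(y)|$ over the ball and run a direct $\epsilon$--$\eta$ argument. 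The two arguments are close in spirit (both measure the worst-case oscillation over the shrinking ball), but yours is actually the more robust of the two: the paper's step asserting that the argmax and argmin are attained appeals to compactness of $B_{\delta_n}(y)$ together with continuity of $f$, yet the hypothesis only grants continuity \emph{at the single point} $y$, so the extreme value theorem does not strictly apply and the extremizers need not exist; boundedness alone only guarantees finiteness of the sup and inf. Your supremum-based construction sidesteps this entirely and needs nothing beyond boundedness (for finiteness) and continuity at $y$ (for the limit), so it proves the lemma exactly as stated. What the paper's version buys in exchange is an explicit representation $\epsilon_n = |f(x_n)-f(z_n)|$ that it reuses in Remark \ref{remark:varphi} (taking $\epsilon_n = |f(x_n) - f(y)|$ when $y$ is a global minimizer); your $\epsilon_n$ serves the same role in the proof of Theorem \ref{thm:consistency}, where, as you correctly note, it furnishes the moduli $\kappa_{nm}$ uniform over the atoms indexed by $\Ilam_m$.
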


\begin{proof}
  Consider sequences of within-neighborhood maximizers and minimizers of $f$,
    \begin{gather*}
        x_n = \underset{x \in B_{\delta_n}(y)}{\tx{argmax}} f(x); \qquad
        z_n = \underset{x \in B_{\delta_n}(y)}{\tx{argmin}} f(x).
    \end{gather*}
    In the case where there are multiple minimizers or maximizers, simply choose one of these points as $x_n$ or $z_n$. Since $f$ is continuous and bounded and $B_{\delta_n}(y)$ is a compact set, $x_n,z_n \in B_{\delta_n}(y)$. This means that $x_n \to y$ and $z_n \to y$, which implies $f(x_n) \to f(y)$ and $f(z_n) \to f(y)$ by continuity. Therefore, for any $x \in B_{\delta_n}(y)$, 
    \begin{gather*}
        |f(x) - f(y)| \leq \epsilon_n := |f(x_n) - f(z_n)| \to 0.
    \end{gather*}
\end{proof}

\begin{remark} \label{remark:varphi}
    Observe that if $y = \tx{argmin}_{x \in \chi} f(x)$, the proof of Lemma \ref{lemma:contrate} implies that we can instead set $\epsilon_n = |f(x_n) - f(y)|$. 
\end{remark}

\subsection{Further Remarks on Theorem \ref{thm:consistency}}

\begin{remark} \label{remark:whp}
    Note that we state Theorem \ref{thm:consistency} as a high probability statement, not an almost everywhere statement. Pragmatically, this is because of (\ref{eq:flb}), as we will need to show
    \begin{equation} \label{eq:fstarlb}
        \P^0 \lb  f^*(X_i^0) \genq \max(\epsilon_n, \nu_n) \: \: \forall i \in [n] \rb \to 1
    \end{equation}
    in order for us to construct $\Delta_{ij}^{*-}$ and $\Delta_{ij}^{*+}$ without dealing with negative numbers that could change the direction of the inequalities. It is simple to verify that the probability of (\ref{eq:fstarlb}) goes to $1$ for any probability density functions $f^*$ and $f^0$ by the dominated convergence theorem. Alternatively, one can make the assumption that there exists an $\epsilon^0 > 0$ so that $\P^0(f^*(X) > \epsilon^0) = 1$, but this would not apply for well-specified regimes. 
\end{remark}

\begin{remark}
    A special case for the well-specified kernel regime is the location mixture of Gaussians. Here, we have that for any constant $M>0$,
    \begin{gather*}
        \mathbb{P}^0 \lb \N_d(X; \theta_h^0; \Sigma) \geq M \max(\epsilon_n, \nu_n) \mid s_0 = h \rb \\
        = \mathbb{P}^0\left[\norm{\Sigma^{-1/2}(X-\theta_h^0)}^2 \leq -2 \log \lb (2 \pi)^{d/2} \tx{det}(\Sigma)^{1/2} M \max(\epsilon_n, \nu_n) \rb \mid s_0 = h \right] \\
        \geq 1 + \frac{d}{2 \log \lb (2 \pi)^{d/2} \tx{det}(\Sigma)^{1/2} M \max(\epsilon_n, \nu_n) \rb}\label{eq:lgaussconvrate}
    \end{gather*}
    by Markov's inequality as $\norm{\Sigma^{-1/2}(X-\theta_h^0)}^2 \sim \chi^2_d$. If we set $a_{\tx{min}}^0 = \min_{h} a_h^0$, we then have that for any $M>0$,
    \begin{gather*}
        \mathbb{P}^0 \lb f^0(X) \geq M \max(\epsilon_n, \nu_n) \rb = \sum_{h=1}^{M^0} a_h^0 \mathbb{P}^0 \lb f^0(X) \geq M \max(\epsilon_n, \nu_n) \mid s_0 = h \rb \\
        \geq \sum_{h=1}^{M^0} a_h^0 \mathbb{P}^0 \lb \N_d(X; \theta_h^0; \Sigma) \geq \frac{M}{a_h^0} \max(\epsilon_n, \nu_n) \mid s_0 = h\rb \\
        \geq \sum_{h=1}^{M^0} a_h^0 \mathbb{P}^0 \lb \N_d(X; \theta_h^0; \Sigma) \geq \frac{M}{a_{\tx{min}}^0} \max(\epsilon_n, \nu_n) \mid s_0 = h \rb \\
        = \mathbb{P}^0 \lb \N_d(X; \theta_h^0; \Sigma) \geq \frac{M}{a_{\tx{min}}^0} \max(\epsilon_n, \nu_n) \mid s_0 = h \rb,
    \end{gather*}
    because (\ref{eq:lgaussconvrate}) does not depend on $h$. This tells us that (\ref{eq:fstarlb}) goes to $1$ at an approximately $\log \left( \log n /n \right)$ rate.
\end{remark}

\section{Comparison to Existing Methods}
The general idea of merging mixture components to improve clustering results has been implemented in a variety of other methods. For example, \cite{chan2008statistical} merged mixture components to infer asymmetric cell subsets in an application to flow cytometry using numerical optimization of the probability density function. Also motivated by flow cytometry, \cite{baudry2010combining} proposed first selecting the number of clusters using BIC, then successively merging components by minimizing the entropy of the clustering. \cite{hennig2010methods} showcases multiple methods for merging mixture components, including methods that preserve unimodality in the clustering, a method that successively merges clusters based on the Bhattacharyya distance, and methods based off diagnostic graphs. An algorithm that combines components sharing significant overlap was proposed in \cite{melnykov2016merging}, where they also explored the misspecified kernel regime. More recently, the problem has attracted theoretical attention, with \cite{aragam2020identifiability} showing explicitly that merging mixture components in a GMM via single linkage clustering can identify a variety of clustering structures. They provide a Bayes optimal clustering algorithm known as NPMIX that implements their approach. 

Similar to our approach, these methods are post-processing procedures motivated by improving the results of the GMM, particularly in the regime of kernel misspecification. However, there are multiple key differences between FOLD and other density merging methods. First, FOLD implements component merging from a Bayesian perspective, whereas existing methods tend to expand on maximum likelihood estimation. Next, with the exception of the NPMIX algorithm, most of the existing density merging algorithms make locally-optimal moves in order to improve the result of the EM algorithm, then terminate after a stopping criterion. NPMIX terminates when the desired number of clusters is reached, which is assumed to be either known or estimated in advance. In contrast, the minimization of the FOLD loss is based off a decision theoretic formulation that concretely quantifies the loss of co-clustering two objects. Furthermore, using the credible ball and posterior similarity matrix, we can coherently express uncertainty about the clustering estimate itself, whereas NPMIX and the other procedures characterize uncertainty in individual cluster allocation via misclassification probabilities.  

NPMIX and the other existing algorithms often rely on a two step procedure: first, they estimate and merge the mixture components, and then they assign observations to the merged components (typically using MAP estimation), ultimately creating clusters. In FOLD, we completely bypass the first step by associating each object with a component via their localized density $g(\cdot; \theta_i)$, then group the localized densities in a one-step procedure (i.e, when we minimize the risk). The main idea here is to use $\bs \theta = (\theta_1, \dots, \theta_n)$ as the parameter of interest for merging the components, rather than the unique values $\Tilde{\bs \theta} = (\tilde \theta_1, \dots, \tilde \theta_K)$. While these paradigms may seem equivalent at first since $\theta_i = \tilde \theta_k$ for some $k$ almost surely, focusing on $\bs \theta$ results in formulating the merging problem in terms of partitioning $[n]$ rather than partitioning $[K]$. In addition, the two step procedure greatly reduces the partition space. If, say, a merging procedure takes $L$ components and combines these into $M < L$ merged components, then the resulting clustering must have at most $M$ clusters. Conversely, $\mathcal R(\chat)$ is minimized over the space of \textit{all} partitions of $[n]$. Finally, a drawback of using a two-step procedure is sensitivity to the point estimates for $\bs a$ and $\Tilde{\bs \theta}$ when constructing the clustering. FOLD, on the other hand, does not need to estimate the components in order to derive clusters.

Recently, the Variation of Information (VI) \citep{meilua2007comparing} has attracted attention in the literature for use as a clustering loss function for Bayesian mixture models \citep{wade2018bayesian, de2023bayesian, denti2023common, wade2023bayesian}. The VI loss between $\chat$ and $\bs s$ is defined as
\begin{equation*}
    \mathcal L_\VI(\chat, \bs s) = - \sum_{h=1}^{\hat{K}} \frac{n_{h \cdot}}{n} \log_2 \left( \frac{n_{h \cdot}}{n} \right) - \omega \sum_{k=1}^{K} \frac{n_{\cdot k}}{n} \log_2 \left( \frac{n_{\cdot k}}{n} \right) - (1+\omega) \sum_{h=1}^{\hat{K}} \sum_{k=1}^K \frac{n_{hk}}{n} \log_2 \left( \frac{n_{hk} n}{n_{h \cdot} n_{\cdot k}} \right),
\end{equation*}
where $n_{h \cdot} = |\hat{C}_h|$, $n_{\cdot k} = |S_k|$, and $n_{hk} = |\hat{C}_h \cap S_k|$. The first term gives the entropy of a random assignment of an observation to a cluster, where the probability of an assignment is given by $n_{h \cdot}/n$. The second term is analogous but for the task of randomly assigning an object to a component with weights $n_{\cdot k}/n$. The third term is the mutual information of the two random allocation tasks, in which the joint probabilities of assigning the object to cluster $h$ and component $k$ are given by $n_{hk}/n$. The VI shares multiple key properties with Binder's loss, including that it is a metric when $\omega = 1$ and a quasimetric otherwise \citep{meilua2007comparing, wade2018bayesian, dahl2022search}. 

Unlike FOLD, the VI was not motivated as a loss function in Bayes estimation, but was instead originally proposed by \cite{meilua2007comparing} as a metric over the partition space. The VI only takes into account information from the component labels rather than the other parameters in the mixture model, as in Binder's loss. Hence, in the context of kernel misspecification, minimizing the VI may not be reasonable due to the inherent brittleness of $\bs s$. In addition, the VI is not a pair-counting loss, i.e. it cannot be expressed as a sum over all object pairs $i<j$, and therefore does not admit an interpretation in terms of quantifying the loss of co-clustering two objects. This also makes the role of $\omega$ in $\mathcal L_\VI(\chat, \bs s)$ less interpretable than for both Binder's loss and FOLD, despite its influence on the resulting point estimate $\bs c_\VI = \underset{\widehat{\bs c}}{\tx{argmin  }} \E_\Pi \lb \mathcal L_\VI(\chat, \bs s) \mid \X \rb$ in practice \citep{dahl2022search}. Finally, while the VI has been observed to yield fewer clusters than Binder's loss empirically when $\omega=1$ on synthetic and real data \citep{wade2018bayesian}, we show in a simulation study in the Section \ref{section:sims-supplementary} and an application in Section \ref{section:apps} of the main article that $\bs c_\VI$ is still vulnerable to the over-clustering problem, albeit usually to a lesser degree than $\bs c_{\tx{B}}$. 

\section{Additional Details on Credible Balls}
We express uncertainty in $\cFOLD$ using a $95\%$ credible ball. For visualizations, we rely on the notions of horizontal and vertical bounds, as introduced in \cite{wade2018bayesian}. For completeness, we include the formal definitions of these bounds below. The horizontal bounds are given by the clusterings in $B(\cFOLD)$ furthest from $\cFOLD$ by $\mathbb D(\cdot, \cdot)$.

\begin{definition}
    The horizontal bound(s) of $B(\cFOLD)$ are
    \begin{gather*}
        \tx{H}(\cFOLD) = \{ \bs c \in B(\cFOLD):
        \mathbb{D}(\bs c, \cFOLD) \geq \mathbb{D}(\bs c^\prime, \cFOLD) \: \forall \bs c^\prime \in B(\cFOLD)\}.
    \end{gather*}
\end{definition}

For any clustering $\bs c$, let $K_{\bs c}$ be the number of clusters. The vertical bounds also consider clusterings in $B(\cFOLD)$ that are far from $\cFOLD$, but impose additional constraints on the number of clusters.
\begin{definition}
    The vertical upper bound(s) of $B(\cFOLD)$ are
    \begin{gather*}
        \tx{VU}(\cFOLD) = \{ \bs c \in B(\cFOLD): K_{\bs c} \leq K_{\bs c^\prime} \: \forall \bs c^\prime \in B(\cFOLD),  \\
        \mathbb{D}(\bs c, \cFOLD) \geq \mathbb{D}(\bs c^\prime, \cFOLD) \: \forall \bs c^\prime \in B(\cFOLD) \tx{ with } K_{\bs c } = K_{\bs c^\prime} \}.
    \end{gather*}
\end{definition}

\begin{definition}
    The vertical lower bound(s) of $B(\cFOLD)$ are
    \begin{gather*}
        \tx{VL}(\cFOLD) = \{ \bs c \in B(\cFOLD): K_{\bs c} \geq K_{\bs c^\prime} \: \forall \bs c^\prime \in B(\cFOLD),  \\
        \mathbb{D}(\bs c, \cFOLD) \geq \mathbb{D}(\bs c^\prime, \cFOLD) \: \forall \bs c^\prime \in B(\cFOLD) \tx{ with } K_{\bs c } = K_{\bs c^\prime} \}.
    \end{gather*}
\end{definition}

In practice, we compute the credible balls using the \texttt{R} function \texttt{credibleball()} in the \texttt{mcclust.ext} package \citep{wade2015package}. $\mathbb D(\cdot, \cdot)$ can be either the Variation of Information distance or Binder's loss. The bounds may not be unique in general. 

\section{Extended Implementation Details}

\subsection{GSE81861 Cell Line Dataset Application} \label{subsect:cells-hyperpars}
We use the following priors for the Bayesian GMM: 
\begin{align*}
    (\tilde \mu_k, \tilde \Sigma_k) & \sim \mathcal{NIW}(0_p, 1, p+2, I); & \bs a & \sim \tx{Dir}( (1/2)  1_K).
\end{align*}
Recall that along with model-based clustering methods, we implement average linkage hierarchical clustering (HCA), k-means, DBSCAN, and spectral clustering. All methods were applied to the projection of the original cell line data onto the first $p=5$ principal components. For HCA, we set the number of clusters to be equal to $7$, corresponding to the correct number of types. In k-means, we select the number of clusters with an elbow plot. We use the R package \texttt{dbscan} \citep{hahsler2019dbscan} to implement the DBSCAN algorithm. Following the documentation, we set $\texttt{minPts} = p + 1 = 6$, then use an elbow plot to choose the \texttt{eps} parameter, which comes out to $\texttt{eps}=1.0$. For spectral clustering, we first compute a Gaussian kernel affinity matrix $A = (A_{ij})_{i \in [n], j \in [n]}$, where $A_{ij} = \exp \lb - \norm{X_i^0-X_j^0}^2 \rb.$ We compute the Laplacian matrix and plot the eigenvalues in ascending order to determine the number of clusters $\hat K_n$, then create a matrix $V$ consisting of the first $\hat K_n$ eigenvectors. Finally, we apply k-means to $V$ with $k=\hat{K}_n$. UMAP plots of the clusterings given by the algorithmic methods are displayed in Figure \ref{fig:alogrithmic-methods}. 

\begin{figure}
    \centering
    \includegraphics[scale=0.5]{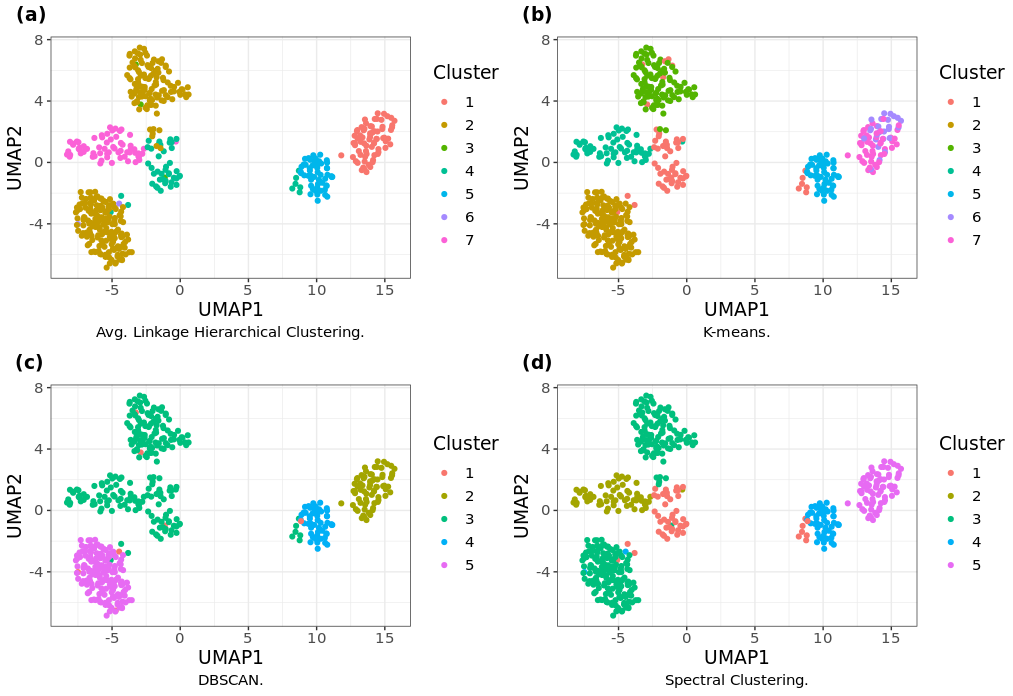}
    \caption{UMAP plots for the algorithmic clustering methods fit to the cell line dataset.}
    \label{fig:alogrithmic-methods}
\end{figure}

\subsection{\texttt{moons} Example}
The \texttt{moons} dataset is simulated using the \texttt{RSSL} package \citep{krijthe2015implicit, krijthe2016rssl} with $n=500$. We fit a Bayesian location GMM with $K=30$ components, i.e., $X_i \sim \N(\theta_i, 0.02I)$, $\tilde \theta_k \sim \N(0,2I)$, $\Pi(s_i = k \mid a_k) = a_k$, and $\bs a \sim \tx{Dir}(1/K, \dots, 1/K)$. We run a Gibbs sampler for $T=35,000$ iterations, remove the first $10,000$ as burn-in, and take every fourth iteration to compute $\Delta$. 

\subsection{Application to Benchmark Clustering Datasets}

For all datasets, the Gibbs sampler is run for $T=50,000$ iterations, the first $1,000$ are treated as burn-in, and we take every fourth iteration to compute $\Delta$. The FOLD loss parameter $\omega$ is chosen with an elbow plot, and the other clustering methods are tuned using the same workflow as in the cell line dataset application, with the exception that we fix \texttt{minPts}=5 when running DBSCAN.

\section{Uncertainty Quantification Simulation Study}
\begin{figure}
    \centering
    \includegraphics[scale=0.5]{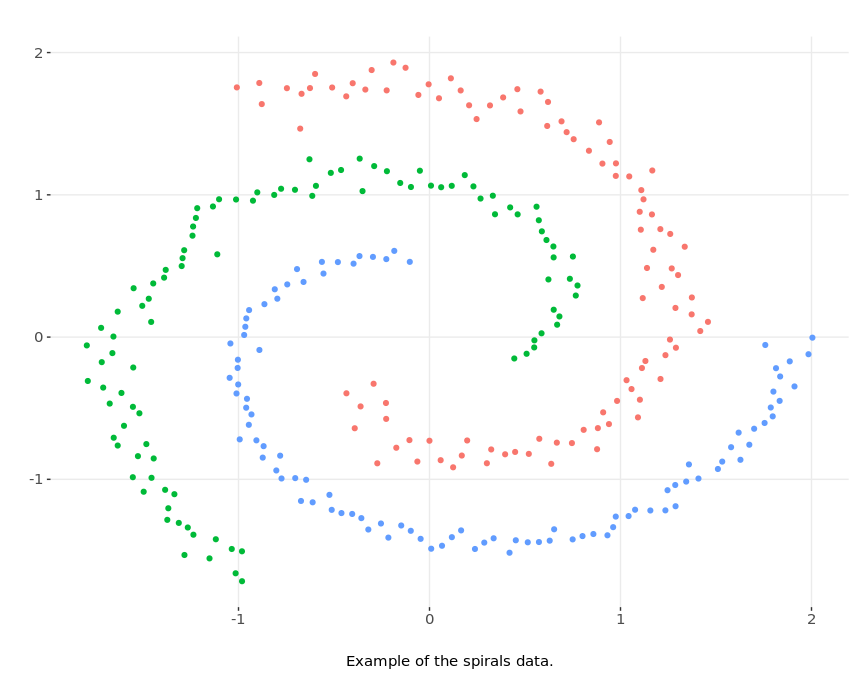}
    \caption{A sample of $n=300$ observations with the \texttt{spirals()} function, with colors corresponding to $\bs s^0$. The thin, interlocking nature of the spirals make this dataset a difficult problem for clustering algorithms.}
    \label{fig:spirals}
\end{figure}
In this section, we focus on the $95\%$ credible ball around $\cFOLD$. We simulate from a notably difficult scenario for clustering, the \texttt{spirals} dataset, as generated from the \texttt{KODAMA} package \citep{cacciatore2022kodama}. Figure \ref{fig:spirals} shows a scatterplot of one example of the \texttt{spirals} data. The data are partitioned into $M^0=3$ groups, each represented as interlocked spirals. Clustering these data is very difficult because the true clusters have long, thin, non-elliptical shapes, and are quite close to each other in the sample space. Our main interest here is whether $B(\cFOLD)$ includes the true grouping of the data, analogous to coverage of a true real parameter by a credible interval.

For each replication, we sample $n=300$ observations using the \texttt{spirals()} function, where each spiral comprises $100$ observations. We fit a Bayesian location-scale Gaussian mixture, with $K=30$, $\mu = 0_p$, $\kappa = 0.5/2$, $\nu = p+2$, and $\Psi = 0.5 I$. The Dirichlet concentration parameter for $\bs a$ is $\alpha = 1/2$ and we use $\omega = \omega^{\tx{AVG}}$ to compute $\cFOLD$ and generate samples from $\Pi(\bs c_{\bs \theta} \mid \X)$. In total, we simulate $R=100$ replications. In each replication, we run a Gibbs sampler for $15,000$ MCMC iterations, discard $1,000$ burn-in iterations, and keep every fourth MCMC draw. We compute the credible ball by setting $\mathbb{D}(\cdot, \cdot)$ to be the VI.

For these simulations, our main area of interest is whether $\bs s^0 \in B(\cFOLD)$. To do this, we simply evaluate if the horizontal bounds cover $\bs s^0$, that is, if $\tx{VI}(\bs s^0, \cFOLD) \leq \tx{VI}(\bs c_H, \cFOLD)$, where $\bs c_H$ is any clustering in $\tx{H}(\cFOLD)$. We also saved the number of clusters in $\cFOLD$ and the adjusted Rand index between $\cFOLD$ and $\bs s^0$. 

The results are given in Table \ref{table:credball}. In 89 of the replications, $\bs s^0$ is contained in $B(\cFOLD)$, despite $\cFOLD$ consistently underscoring in the adjusted Rand index. This latter phenomena comes about because $E_\Pi(\tilde \Sigma_k) = 0.5 I$, which means that the model is likely to fit Gaussian kernels that span multiple spirals, rather than approximating each spiral with multiple Gaussian kernels. Interestingly, the average number of clusters achieved by $\cFOLD$ is $3.120$, which is very close to the truth. We can then conclude that, even when the fitted model makes accurately clustering the data difficult, $\Pi(\bs c_{\bs \theta} \mid \X)$ can still achieve high coverage rates of the truth. However, note that FOLD could be applied with recent approaches for density estimation of manifold data (e.g. the GMM in \cite{berenfeld2022estimating}), which would likely improve results for the \texttt{spirals} data.

\begin{table}[]
\centering
\begin{tabular}{lll}
\toprule
Avg. $\tx{VI}(\bs s^0, \cFOLD) \leq \tx{VI}(\bs c_H, \cFOLD)$ & Avg. No. of Clusters (SD) & Avg. ARI (SD) \\
\midrule
0.890  & 3.120 (0.327)  & 0.223 (0.046)  \\
\bottomrule
\end{tabular}
\caption{Results from the credible ball simulation study. $\bs s^0$ is covered by the horizontal bounds of $B(\cFOLD)$ in 89\% of replications, despite the point estimator consistently scoring low adjusted Rand index with $\bs s^0$.}
\label{table:credball}
\end{table}

\end{document}